\newtheorem{theorem}{Theorem}
\newtheorem{notations}[theorem]{Notations}   
\newtheorem{proposition}[theorem]{Proposition}   
\newtheorem{lemma}[theorem]{Lemma}
\newtheorem{example}[theorem]{Example}
\newenvironment{proof}[1][]{\noindent {\bf Proof #1:\;}}{\hfill $\Box$}
\newcommand{\bigslant}[2]{{\raisebox{.2em}{$#1$}\left/\raisebox{-.2em}{$#2$}\right.}}
\def\QQ{{\mathbb{Q}}} 
\def\RR{{\mathbb{R}}} 
\def\CC{{\mathbb{C}}} 
\def\f{{f}}
\def\g{{g}}
\def\q{{q}}
\def\fermU{{B}} 
\def\fermP{{\mathcal{P}}} 
\def\fermQ{{\mathcal{Q}}} 
\def\scC{{\mathscr{C}}}
\def\scS{{\mathscr{S}}}
\def\sfP{{\mathsf{P}}}
\def\sfQ{{\mathsf{Q}}}
\def\sing{{\rm sing}\:}
\def\crit{{\rm crit}\:}
\def\reg{{\rm reg}\:}
\newcommand{\zeroset}[1]{{Z(#1)}}
\newcommand{\ideal}[1]{{I(#1)}}
\def\setD{{\mathcal{D}}}
\def\setC{{\mathcal{C}}}
\def\setV{{\mathcal{V}}}
\def\setW{{\mathcal{W}}}
\def\setZ{{\mathcal{Z}}}
\def\zarA{{\mathscr{A}}}
\def\zarfiber{{\mathscr{W}}}
\def\zarU{{\mathscr{U}}}
\def\zarV{{\mathscr{V}}}
\def\zarM{{\mathscr{M}}}
\def\zarO{{\mathscr{O}}}
\def\sfG{{\mathsf{G}}} 
\def\sfH{{\mathsf{H}}} 
\def\A{{A}} 
\def\M{{M}} 
\def\jac{{D}} 
\def\rank{{\rm rank}}
\def\deg{{\rm deg}}
\def\GL{{\mathrm{GL}}} 
\def\X{{x}} 
\def\Y{{y}} 
\def\Z{{z}} 
\def\x{{x}} 
\def\y{{y}} 
\def\z{{z}} 
\def\v{{v}}  
\def\fiber{{w}}  
\def\fraka{{a}}
\def\frakA{{\A}}
\def\softO{\ensuremath{{O}{\,\tilde{ }\,}}}
\title{Real root finding for determinants of linear matrices} 
\begin{document}

\author{Didier Henrion$^{1,2,3}$ \and Simone Naldi$^{1,2}$ \and Mohab Safey El Din$^{4,5,6,7}$}

\footnotetext[1]{CNRS, LAAS, 7 avenue du colonel Roche, F-31400 Toulouse; France.}
\footnotetext[2]{Universit\'e de Toulouse; LAAS, F-31400 Toulouse, France.}
\footnotetext[3]{Faculty of Electrical Engineering, Czech Technical University in Prague, Czech Republic.}
\footnotetext[4]{Sorbonne Universit\'es, UPMC Univ Paris 06, Equipe PolSys, LIP6, F-75005, Paris, France.}
\footnotetext[5]{INRIA Paris-Rocquencourt, PolSys Project, France.}
\footnotetext[6]{CNRS, UMR 7606, LIP6, France.}
\footnotetext[7]{Institut Universitaire de France.}

\date{\today}

\maketitle

\begin{abstract}
Let $\A_0, \A_1, \ldots, \A_n$ be given square matrices of size $m$ with
rational coefficients.
The paper focuses on the exact computation of one point in each
connected component of the real determinantal variety
$\{\X \in\RR^n \: :\: \det(\A_0+x_1\A_1+\cdots+x_n\A_n)=0\}$.
Such a problem finds applications in many areas such as control theory, computational
geometry, optimization, etc. Using standard complexity results
this problem can be solved using
$m^{O(n)}$ arithmetic operations.
Under some genericity assumptions on the coefficients of the matrices,
we provide an algorithm solving this problem whose runtime
is essentially quadratic in
${{n+m}\choose{n}}^{3}$. We also report on experiments with a computer
implementation of this algorithm. Its practical performance illustrates the complexity
estimates. In particular, we emphasize that for subfamilies
of this problem where $m$ is fixed, the complexity is polynomial in $n$.
\end{abstract}

\begin{center}
\small{\bf Keywords}\\[.5em]
Computer algebra, real algebraic geometry, determinantal varieties.
\end{center}

\section{Introduction}\label{sec1}

\subsection{Problem statement}

Let $\A_0, \A_1, \ldots, \A_n$ be given square matrices of size $m$ with
coefficients in the field of rationals $\QQ$.
Consider the  affine map  defined as 
$$
\X=(\X_1,\ldots, \X_n) \mapsto \A(\X) = \A_0 + \X_1 \A_1 + \cdots + \X_n \A_n.
$$
Consistently with the technical literature, we use the terminology
\textit{linear matrix} to refer to $\A(\X)$, even though the constant
term $\A_0$ is not necessarily zero.
The determinant of $\A(\X)$, denoted by $\det\A(\X)$, lies in the polynomial ring
$\QQ[\X]$ and it has degree at most $m$. This polynomial
defines the complex \textit{determinantal variety}
\[
{\setD}=\big\{\X\in\CC^n \: :\: \det \A(\X)=0\big\}.
\]
In other words, $\setD \subset \CC^n$ is the set of complex vectors $\X$ at which $\rank\A(\X)\leq m-1$.
The goal of this paper is to provide a  \textit{computer algebra algorithm} with explicit complexity estimates for
computing at least one point in each connected component of the
\textit{real} determinantal variety $\setD\cap \RR^n$.

\subsection{Motivations}

First notice that when $n=1$ our problem is called the real algebraic eigenvalue problem
\cite{Wilkinson65}, and hence that the case $n>1$ can be seen as a multivariate generalization.

Non-symmetric square matrices depending  linearly on parameters arise in
many problems of systems control and signal processing. For example,
the Hurwitz matrix is used in stability criteria for systems described
by linear ordinary differential equations, and vanishing of the
determinant of the Hurwitz matrix corresponds to a bifurcation
between stability and instability, see e.g. \cite{Barmish94}.
Alternatively, finding points on the real determinantal variety of the
Hurwitz matrix amounts to finding parameters (e.g. corresponding
to a feedback control law, or to structured uncertainty affecting the system)
corresponding to a system configuration at the border of stability.

Another classical example of non-symmetric square linear matrix arising
in signals and systems is the Sylvester matrix ruling controllability of
a linear differential equation. In this context, vanishing of the determinant of the
Sylvester matrix corresponds to a loss of controllability of the
underlying system \cite{Kucera79}.

Linear matrices and optimization on determinantal varieties arise
also in statistics \cite{draismarodriguez2013,hauenstein2012maximum}
and in computational algebraic geometry \cite{euclideandegree2013}.

Under the assumption that the matrices $\A_0, \ldots, \A_n$
are symmetric, the matrix $\A(\X)$ is symmetric, and hence it has
only real eigenvalues for all $\X \in \RR^n$. The condition
$\A(\X)\succeq 0$, meaning that $\A(\X)$ is positive semidefinite,
is called a \textit{linear matrix inequality}, or LMI. It is a convex condition
on the space of variables $\X$ which appears frequently in diverse problems of
applied mathematics and especially in systems control theory, see e.g. \cite{befb94}. 
A classical example is the Lyapunov stability condition for a linear
ordinary differential equation which is an LMI in the parameters
of a Lyapunov function (a certificate or proof of stability)
depending quadratically on the system state.

\begin{figure}[!ht]
\centering
\includegraphics[width=0.7\textwidth]{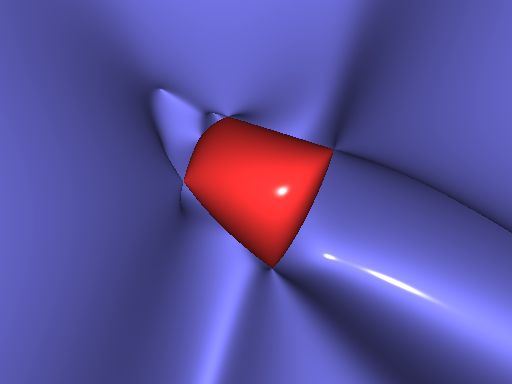}
\caption{A spectrahedron (red) with its real determinantal
variety.\label{fig:spectrahedron}}
\end{figure}

When the matrix $\A(\X)$ is symmetric, the set
 \begin{equation}\label{spectra}
\mathcal{S}=\big\{x\in\RR^n \ \big| \ A(x)\succeq 0\big\}
\end{equation}
is called a \textit{spectrahedron}. Spectrahedra are affine sections
of the cone of positive semidefinite matrices and they represent
closed convex basic semialgebraic sets, i.e. convex sets
that can be defined by the common nonnegativity locus of a finite
set of polynomials; they are the object of active studies
mainly in optimization theory, real algebraic geometry and
control theory \cite{l09,l10,bpt13}. Following a question posed
in \cite[Section 4.3.1]{n06}, the authors of \cite{hn09} conjectured
that every convex semialgebraic set is the projection of
a spectrahedron. On Figure \ref{fig:spectrahedron} is represented
a spectrahedron (for $n=3$ and $m=5$) together with its real determinantal variety.

The minimization of a given function, for example a polynomial, over
real convex sets is a central problem in optimization theory. If the
function is linear and the set is a spectrahedron, this is exactly
the aim of \textit{semidefinite programming} (SDP), see \cite{bn01}.
If the input data of a semidefinite program are defined over $\QQ$,
the solutions are algebraic numbers, and the authors in
\cite{sturmfelsnieranestad} investigated their algebraic degree:
giving explicit formulas or bounds for this value is a measure
of the complexity of the given program.

Convex LMIs and SDP are also widely used for solving
nonconvex polynomial optimization problems. Indeed, these optimization problems
are linearized in the space of moments of nonnegative measures (which is infinite-dimensional)
and a suitable sequence of LMI relaxations, or truncations (the so-called \textit{Lasserre hierarchy}),
that can be solved via SDP, provides the solution to the original
problem. The feasible set of every truncated problem is a
spectrahedron in the space of moments, for more details
see \cite{l09,l10} and references therein.

So far, the problem of (deciding the existence and) computing such
solutions has been addressed via several numerical methods, the most
successful of which are primal-dual interior-point algorithms \cite{bn01}
implemented in \textit{floating-point arithmetic} in different SDP solvers
\cite{m12}.

In this paper, the problem of computing points on spectrahedra is
linked to polynomial systems solving over the reals. In fact, we
are interested in the developement of an \textit{exact computer algebra}
algorithm to compute
real points on hypersurfaces defined by the zero locus of
determinants of affine matrix expressions. By exact algorithm we mean
that we do not content ourselves with approximate floating-point
computations.
Our main motivation starts from the geometrical aspects of SDP as explained above:
boundaries of spectrahedra are subsets of determinantal hypersurfaces,
and so solving this problem \textit{efficiently} is a necessary step to address
the associated positivity problem $\A(\X)\succeq 0$, since the rank of the matrix
$\A(\X)$ at a point in the boundary of the spectrahedron $\mathcal{S}$
drops at least by one, while a point in the interior corresponds to a positive
definite matrix. Finding such a point is a certificate of strict feasibility.

\subsection{State of the art}

Modern computer algebra algorithms for solving our problem require at most $m^{O(n)}$
arithmetic operations in $\QQ$, see \cite[Ch.11, Par.6]{BaPoRo06} and
references therein. The core idea is to reduce the input problem to a
polynomial optimization problem whose set of optimizers is
expected to be finite and to meet every connected component of the
solution set under study. Such a reduction must be done carefully,
especially for unbounded sets or singular situations. So far, it is an
open problem to get a competitive implementation of the algorithms in
\cite{BaPoRo06}: unbounded and singular cases imply algebraic
manipulations that have no impact on the complexity class but require
to work over Puiseux series fields, and this increases the constant hidden
by the big-$O$ notation in the exponent.

During the past decade, tremendous efforts have been made to obtain
algorithms that are essentially quadratic in $m^n$ when dealing with
one $n$-variate polynomial equation of degree $m$, see e.g. \cite{BGHM1, BGHM2,
  BGHM3, BGHM4, SaSc03}. The goal is to get an implementation that
reflects the theoretical complexity gains. Most of these algorithms are
probabilistic: some random choices independent
of the input are performed to ensure genericity properties.
Our contribution shares these features and it is inspired by some
geometric ideas in \cite{SaSc03}.

A main limitation is that the algorithms in \cite{SaSc03} are dedicated to the
smooth case.  In our case, it turns out that $\setD$ is in general a
singular variety -- see e.g. \cite{bruns1988determinantal}
and recall Figure \ref{fig:spectrahedron} -- which
makes our problem more difficult from a geometric point of view.

Algorithms in \cite{BGHP10} deal with singular situations but do not
return sample points in the connected components that are contained in
the singular locus of the variety. As a consequence, one
cannot use them to decide the emptiness of $\setD \cap \RR^n$. The
algorithm in \cite{S05} may be used but it suffers from an extra-cost,
since it requires essentially $m^{4n}$ arithmetic operations.

Moreover, in \cite{FauSafSpa10a, faugere2012critical, FauSafSpa13},
the authors have developed algorithms and complexity estimates to
isolate the real solutions of determinantal systems (see also
\cite{FauSafSpa11} for related works on a bilinear setting). Beyond
the interest of solving our problem for the aforementioned
applications, it is of interest to extend these works to the real and
positive dimensional case.

In practice, one can observe that, when a determinantal equation is
given as input to software implementing singly exponential algorithms
\cite{raglib}, its behaviour is significantly different and worse than
the one observed on generic equations.

\subsection{Basic definitions}\label{subsection:basicdefn}

Before describing the main results of this paper and the basic ideas
on which they rely, we need to introduce some notations and basic
definitions that are used further. We refer to
\cite{perrin2008algebraic,Shafarevich77} for details.

We use the notations $\QQ_* = \QQ\setminus\{0\}$  and $\CC_* = \CC\setminus\{0\}$.
We also denote $\CC^m$ the set of complex vectors of length $m$ and $\CC_*^m = \CC^m
\setminus \{0\}$. Given two vectors $x,y \in \CC^m$, with $x'y$ we denote their
scalar product $x_1y_1+\ldots+x_my_m$.

A subset $\setV \subset \CC^n$ is said to be an {\it affine algebraic variety}
defined over $\QQ$ if there exists a system (i.e. a finite set) of polynomials
$\f=(f_1, \ldots, f_p) \in \QQ[\X]^p$ such that $\setV$
is the locus of their common complex solutions, i.e.
$\setV = \{\X \in \CC^n \: :\: f(\X) = 0\} = \{\X \in \CC^n \: :\: f_1(\X) = \cdots = f_p(\X) = 0\}$.
In this case we write $\setV=\zeroset{\f}=\f^{-1}(0)$. Algebraic varieties are the closed sets in
the Zariski topology, hence any set defined by a polynomial
inequation $f\neq 0$ defines an open set for the Zariski topology.
We also consider the closure $\overline{\setV}$
of a set $\setV \subset \CC^n$ for the Zariski topology, that is the
smallest algebraic subset of $\CC^n$ containing $\setV$.

The set of polynomials that vanish on an
algebraic set $\setV$ generates an ideal of $\QQ[\X]$
associated to $\setV$, denoted by $\ideal{\setV}$.
This ideal is radical (i.e. $f^k\in \ideal{\setV}$ for some integer $k$
implies that $f\in \ideal{\setV}$) and it is generated by a finite set
of polynomials, say $\f=(f_1, \ldots, f_p)$, and we write
$\ideal{\setV} = \langle f_1,\ldots,f_p \rangle = \langle f \rangle$.

Let $\setV \subset \CC^n$ be an affine algebraic variety. Then the quotient
ring $\CC[\setV]=\CC[\X]/I(\setV)$ is the {\it coordinate ring}
of the variety $\setV$: the elements of $\CC[\setV]$ are called {\it regular
functions} on $\setV$. A map $f \colon \setV \to \setW \subset \CC^p$
defined over $\setV$ with values in $\setW$, such that $f \in \CC[\setV]^p$,
is called a {\it regular map}, and if $f$ is a bijection and its inverse
is also a regular map, then $f$ is an {\it isomorphism} of affine
algebraic varieties.

Let $\GL(n,\QQ)$ denote the set of non-singular matrices of size $n$
with coefficients in $\QQ$. Its identity matrix is denoted by $Id_n$.
Given a matrix $\M \in \GL(n,\QQ)$ and a polynomial system
$\X \in \CC^n \mapsto \f(\X) \in \CC^p$
we denote by $\f\circ\M$ the polynomial system
$\X \in \CC^n \mapsto \f(\M\X) \in \CC^p$.
If $\setV = \zeroset{\f}$, the image set
$\zeroset{\f\circ\M}=\{\X \in \CC^n : \f(\M\X) = 0\}
=\{\M^{-1}x \in \CC^n : \f(\X) = 0\}$ is denoted by
$M^{-1}\setV$.

Let
\[
\left(\frac{\partial f}{\partial \X_k}\right) = 
\left(\begin{array}{c}
\frac{\partial f_1}{\partial \X_k} \\
\vdots \\
\frac{\partial f_p}{\partial \X_k}
\end{array}\right)
\]
denote the vector of $\QQ[\X]^p$ containing partial derivatives of $f$ w.r.t. variable $\X_k$,
for some $k=1,\ldots,n$.
The co-dimension $c$ of $\setV$ is the
maximum rank of the Jacobian matrix 
$$
\jac\f = \left(\frac{\partial f}{\partial \X_k}\right)_{k=1,\ldots,n} =
\left(
\begin{array}[]{cccc}
  \frac{\partial f_1}{\partial \X_1} & \ldots & \ldots&  \frac{\partial f_1}{\partial \X_n}\\
\vdots & & & \vdots \\
  \frac{\partial f_p}{\partial \X_1} & \ldots & \ldots&  \frac{\partial f_p}{\partial \X_n}\\
\end{array}
\right)
$$
evaluated at $\X \in \setV$. The dimension of $\setV \subset \CC^n$ is
$n-c$. 

Let $\setV\subset \CC^n$ be an algebraic set. We say that $\setV$ is
{\it irreducible} if it is not the union of two sets that are closed for the
Zariski topology and strictly contained in $\setV$.  Otherwise
$\setV$ is the union of finitely many irreducible algebraic
sets, its {\it irreducible components}.

Most of the time, we will consider {\it equidimensional} algebraic
sets: these are algebraic sets whose irreducible components share
the same dimension. An algebraic set $\setV$ of dimension $d$ is the
union of equidimensional sets of dimensions $k=0,1,\ldots,d$:
this is the so-called equidimensional decomposition of $\setV$.
Suppose that $\setV$ is $d$-equidimensional, that is, equidimensional of dimension $d$.
Given a polynomial system $\f : \CC^n \to \CC^p$ and a point $\X \in \setV=\zeroset{\f}$, we say that $\X$ is {\it regular} if
$\jac\f(\X)$ has rank $n-d$, and {\it singular} otherwise. An algebraic set 
whose points are all regular is called {\it smooth}, and {\it singular} otherwise.
The set of singular points of an algebraic set $\setV$ is denoted
by $\sing\setV$, while the set of its regular points is denoted
by $\reg\setV$.

Given  a polynomial system $\f  : \CC^n \to \CC^p$, suppose that $\setV=\zeroset{\f} \subset\CC^n$
is a smooth $d$-equidimensional algebraic set, and let $g : \CC^n \to \CC^m$ be a polynomial system.
Then the set of {\it critical points} of the restriction of $g$ to $\setV$ is defined by the zero
set of $\f$ and the minors of size $n-d+m$ of the matrix
\[
\left(\begin{array}{c}
\jac\f \\
\jac g
\end{array}\right)
\]
and we denote it by $\text{crit}(g,f)$. In particular, the critical points of the restriction
to $\setV$ of the \textit{projection map} $\pi_i \colon (x_1,\ldots,x_n) \mapsto (x_1,\ldots,x_i)$
is the zero set of $\f$ and the minors of size $n-d$ of the truncated Jacobian
\[
\left(\frac{\partial f}{\partial \X_k}\right)_{k=i+1,\ldots,n} 
\]
obtained by removing the first $i$ columns in the Jacobian of $f$.
The same definition applies to the equidimensional components of a generic algebraic set.

\subsection{Data representation and genericity assumptions}

\subsubsection{Input}

We assume that the linear matrix $\A(\X) = \A_0+\X_1\A_1+\cdots+\X_n\A_n$ is described
via the square matrices $\A_0, \A_1,\ldots,\A_n$ of size $m$ with coefficients in $\QQ$,
which can also be understood as a point in $\QQ^{(n+1)m^2}$.
To refer to this input we use the short-hand notation $\A$.

\subsubsection{Output}\label{ratpar}

Our goal is to compute exactly
sample points in each connected component of the real variety $\setD\cap\RR^n$.
Our algorithm  consists of reducing the initial problem to isolating
the real solutions of an algebraic set $\setZ\subset \CC^n$ of dimension
at most $0$. To this end, we compute a {\it rational parametrization} of
$\setZ$ that is given by a polynomial  system $\q=(q_0, q_1, \ldots, q_n, q_{n+1}) \in \QQ[t]^{n+2}$ such that
$q_0$, $q_{n+1}$ are coprime (i.e. with constant greatest common divisor) and
\[
\setZ = \left\{ x = \left(\frac{q_1(t)}{q_0(t)},\cdots,\frac{q_n(t)}{q_0(t)}\right) \in \CC^n \: :\: q_{n+1}(t)=0 \right\}.
\]
This allows to reduce real root counting isolation to a univariate
problem. Note also that the cardinality of $\setZ$ is the degree of polynomial
$q_{n+1}$, provided it is square-free; we denote it by $\deg\:\q$.
 
Given a polynomial system defining a finite algebraic set
$\setZ\subset \CC^n$, there exist many algorithms for computing such a
parametrization of $\setZ$. In the experiments reported in Section
\ref{sec:experiments}, we use implementations of algorithms based on Gr\"obner
bases \cite{F4, F5} and the so-called change of ordering algorithms
\cite{fglm, newfglm} because they have the current best practical
behavior. Nevertheless, our complexity analyses are based on the
geometric resolution algorithm given in \cite{GiLeSa01}.

\subsection{Genericity assumptions}

\subsubsection{Singular locus of the determinant} \label{remark:sing}

Throughout the paper, we suppose that the singular locus of the
algebraic set $\setD = \{x \in \CC^n : \det A(x)=0\}$ is included in
the set $\{x \in \CC^n : \rank \, A(x) \leq m-2\}$, that is if $x \in
\setD$ is such that $\rank \, A(x) = m-1$, then $D_x \det A \neq 0$ at
$x$. This property is generic in the space of input matrices
$\CC^{m^2(n+1)}$. This fact can be proved easily via Bertini's theorem
(see for example the proof of \cite[Theorem 5.50]{bpt13}).

In the sequel, we say that $\A$ satisfies $\sfG_1$ if the singular
locus of $\det A$ is exactly the set of points at which the rank
deficiency of this matrix is greater than $1$.

\subsubsection{Smoothness and equidimensionality}\label{sec:genericityinput}

We define
\[
\begin{array}{lrcl}
\f(\A) : & \CC^{n+m} & \to & \CC^m \\
& (\X,\Y) & \mapsto & \A(\X)\Y
\end{array}
\]
as a polynomial system of size $m$ in the variables $\X=(\X_1,\ldots,\X_n)$ and $\Y=(\Y_1, \ldots, \Y_m)$.
Given ${u}=(u_1, \ldots, u_m, u_{m+1})  \in \CC^{m+1}$ with $u_{m+1} \neq 0$, define
\[
\begin{array}{lrcl}
\f(\A,u) : & \CC^{n+m} & \to & \CC^{m+1} \\
& (\X,\Y) & \mapsto &  (\A(\X)\Y,\: u'(y,-1))
\end{array}
\]
where $u'(y,-1) = u_1y_1 + \cdots + u_my_m-u_{m+1}$ denotes the inner
product of vectors $u$ and $(y,-1) \in \CC^{m+1}$, and let $\setV(\A,
u)= \zeroset{\f(\A,u)} \subset \CC^{n+m}$. 

In the sequel, we will
assume that the system $\f(\A,u)$ satisfies the following assumption
$\sfG_2$.

{\bf Assumption $\sfG_2$}.
We say that a polynomial system $f \in \QQ[x]^p$ satisfies $\sfG_2$ if
\begin{itemize}
\item $\langle \f \rangle$ is radical, and
\item $\zeroset{\f}$ is either empty or smooth and equidimensional of
  co-dimension $p$. \smallskip
\end{itemize}

We also say that a linear map $\A$ satisfies $\sfG_2$ if the polynomial
system $\f(\A,u)$ satisfies $\sfG_2$ for all $u \in \CC^m_*$. We will
prove later that for a generic choice of $\A$, this property holds.

We say that $\A$ satisfies $\sfG$ when it satisfies $\sfG_1$ and $\sfG_2$. 

\subsection{Main results and organization of the paper}

The main result of the paper is sketched in the following. Its detailed
statement is in Proposition \ref{prop:complexity:RealDet} and it will be proved
in Section \ref{ssec:algo:complexity}.

{\it Let $\A_0, \A_1, \ldots, \A_n$ be square matrices of size $m$
  with coefficients in $\QQ$ satsifying the above genericity
  assumptions.  There exists a probabilistic exact algorithm with
  input $\A_0, \A_1, \ldots, \A_n$ and output a rational
  parametrization encoding a finite set of points with non-empty
  intersection with each connected component of $\setD\cap\RR^n$.

In case of success, the complexity of the algorithm is within
$$
\softO
\left(
n^2m^2(n+m)^5\binom{n+m}{n}^6
\right)
$$
arithmetic operations, where $\softO(s)={O}(s \log^k s)$ for some
$k \in \mathbb{N}$.
}

We also analyze the practical behaviour of a first implementation of
this algorithm. Our experiments show that it outperforms the
state-of-the-art implementations of general algorithms for grabbing
sample points in real algebraic sets.

The paper is organized as follows.

{\it Section} \ref{sec:algo} contains a detailed description of the algorithm
and of its subroutines. Moreover, its formal description is provided. Section
\ref{ssec:algo:correctness} contains all regularity
results, that is Propositions \ref{prop:smoothness}, \ref{prop:dimlag} and
\ref{prop:closedness}, proved in the following sections. It also contains
the proof of correctness of the algorithm (Theorem \ref{theo:correctness}). As already mentioned,
the proof of the main result is given in Section \ref{ssec:algo:complexity}.
{\it Section} \ref{sec:smoothness} contains the proof of Proposition 
\ref{prop:smoothness}. {\it Section} \ref{sec:dimension} contains the proof
of Proposition \ref{prop:dimlag}. {\it Section} \ref{sec:closedness} contains
the proof of Proposition \ref{prop:closedness}. Finally, {\it Section} \ref{sec:experiments}
contains numerical data of practical experiments and some examples.

\section{Algorithm: correctness and complexity}\label{sec:algo}

\subsection{Description of the algorithm}\label{ssec:algo:description}

Our algorithm is guaranteed to return an output under some genericity
assumptions on the input. If the genericity assumptions are not satisfied,
the algorithm raises an error.  The algorithm consists of computing critical points
of the restriction of linear projections to a given algebraic variety after a randomly chosen
linear change of variables. These points are  the solutions
of a Lagrange system to be defined in this section.

\subsubsection{Notations}\label{notations:algo}

Before giving an overview of the algorithm, we need to introduce some
notations that partly extend those introduced in Subsection
\ref{sec:genericityinput}.

\paragraph*{\it Change of variables.}

We denote by $\A \circ \M$ the affine map $\X \mapsto \A(\M\X)$ obtained by
applying a change of variables with matrix $\M \in \GL(n,\CC)$.
In particular $\A = \A \circ Id_n$.

\paragraph*{\it Incidence variety.}

Given a matrix $\M \in \GL(n,\CC)$, define
\[
\begin{array}{lrcl}
\f(\A\circ\M) : & \CC^{n+m} & \to & \CC^m \\
& (\X,\Y) & \mapsto & \A(\M\X)\Y
\end{array}
\]
as a polynomial system of size $m$ in the variables $\X=(\X_1,\ldots,\X_n)$ and $\Y=(\Y_1, \ldots, \Y_m)$.
Given ${u}=(u_1, \ldots, u_m, u_{m+1})  \in \CC^{m+1}$ with $u_{m+1} \neq 0$, define
\[
\begin{array}{lrcl}
\f(\A\circ\M,u) : & \CC^{n+m} & \to & \CC^{m+1} \\
& (\X,\Y) & \mapsto &  (\A(\M\X)\Y,\: u'(y,-1))
\end{array}
\]
where $u'(y,-1) = u_1y_1 + \cdots + u_my_m-u_{m+1}$ denotes the inner product
of vectors $u$ and $(y,-1) \in \CC^{m+1}$, and let $\setV(\A\circ\M, u)=
\zeroset{\f(\A\circ\M,u)} \subset \CC^{n+m}$. We will see that under some
{\it genericity assumptions}, the algebraic variety $\setV(\A\circ\M, u)$
is equidimensional and smooth.  

\paragraph*{\it Fibers.}

Given $\fiber \in \CC$, define
\[
\begin{array}{lrcl}
\f_{\fiber}(\A\circ\M,u)  : & \CC^{n+m} & \to & \CC^{m+2} \\
& (\X,\Y) & \mapsto & (\A(\M\X)\Y,\:u'(y,-1),\:\X_1-\fiber)
\end{array}
\]
and let $\setV_{\fiber}(\A\circ\M, u) =\zeroset{\f_{\fiber}(\A\circ\M, u)} \subset \CC^{n+m}$.

We also define $\A_\fiber$ the matrix obtained by instantiating $\X_1$
to $\fiber$ in $\A$. The hypersurface defined by $\det A_\fiber =0$ is
denoted by $\setD_\fiber$.

\paragraph*{\it Lagrange system.}

Given $v \in \CC^{m+1}$, let $J(x,y)=\jac_1\f(\A\circ\M,u)$
denote the matrix of size $m+1$ by $n+m-1$ obtained by removing the first column
of the Jacobian  matrix of $\f(\A\circ\M,u)$, and define
\[
\begin{array}{lrcl}
l(\A\circ\M,u,v) : &  \CC^{n+2m+1} & \to & \CC^{n+2m+1} \\
& (\X,\Y,\Z) & \mapsto & (\A(\M\X)\Y,\:u'(y,-1), J(x,y)'\Z,\:v'z-1)
\end{array}
\]
where variables $\Z=(\Z_1, \ldots, \Z_{m+1})$ stand for Lagrange
multipliers, and
let $$\setZ(\A\circ\M, u, v) = \zeroset{l(\A\circ\M,u,v)} \subset \CC^{n+2m+1}.$$

\subsubsection{Formal description}\label{formdesc:algo}

The algorithm takes as input $\A$ which is assumed to satisfy $\sfG$.
Then, it chooses randomly $\M\in \GL(n,\QQ)$, $u \in \QQ^m$, $v \in
\QQ^{m+1}$ and $\fiber \in \QQ$ and computes a rational
parametrization of $\setZ(\A\circ\M, u, v)\subset \CC^{n+2m+1}$. Its
projection on the $(\X, \Y)$-space is expected to be the set of
critical points of the restriction to $\setV(\A\circ\M, u)$ of the
projection on the $\X_1$-coordinate. Next, a recursive call is
performed with input $\A\circ\M$ where the $\X_1$-coordinate is
instantiated to $\fiber$. The new input should satisfy the same
genericity properties as the one satisfied by $\A$.  Before giving a
detailed description of the algorithm, we describe basic subroutines
required by our algorithm. \smallskip \smallskip

\paragraph*{\it Main subroutines.} The algorithm uses the following subroutines:

\begin{itemize}
\item {\sf IsSing}: it takes as input the polynomial system $f(A \circ
  M,u)$ and it returns {\sf false} if the system satisfies $\sfG$. It
  returns {\sf true} otherwise;
\item {\sf RatPar}: it takes as input a polynomial system with coefficients in $\QQ$
  defining a finite set and it returns a rational parametrization of the set, as
defined in Section \ref{ratpar}.
\end{itemize}
It also uses the following subroutines that perform basic operations
on rational paramet\-rizations of finite sets:
\begin{itemize}
\item {\sf Image}: it takes as input a  rational parametrization of a finite set $\setZ\subset \CC^N$ and a matrix
  $\M \in \GL(N,\CC)$ and it returns a rational parametrization of the image set $\M^{-1}\setZ$
corresponding to a change of variables;
\item {\sf Union}: it takes as input two rational parametrizations of finite sets $\setZ_1,\setZ_2$
  and returns a rational parametrization of $\setZ_1\cup \setZ_2$;
\item {\sf Project}: it takes as input a rational parametrization of a finite set $\setZ$
and a subset of variables, and it computes a rational parametrization of the projection of $\setZ$
on the linear subspace generated by these variables;
\item {\sf Lift}: it takes as input a  rational parametrization of a finite set $\setZ\subset \CC^N$ and a number
$\fiber \in \CC$, and it returns a rational parametrization of $\setZ'=\{(x,\fiber) \: :\: x \in \setZ\} \subset \CC^{N+1}$.
\end{itemize}

We can now describe more precisely our algorithm {\sf RealDet}.  It
uses a recursive subroutine ${\sf RealDetRec}$ that takes as input
$\A$ satisfying $\sfG$, and returns a rational parametrization of a
finite set which meets all connected components
of $\setD \cap \RR^n$. \\

${\sf RealDetRec}(\A)$:
\begin{enumerate}
\item \label{step:1} If $n=1$ then return $(1,t,\det \A(t))$;
\item \label{step:2} Choose randomly
  \begin{itemize}
    \item $\M \in \GL(n,\QQ)$
    \item $u=(u_1,\ldots, u_{m+1}) \in \QQ^{m+1}$
    \item $v\in \QQ^{m+1}$
    \item $\fiber \in \QQ$;
  \end{itemize}
\item\label{step:3} ${\sf P}={\sf Project}({\sf RatPar}(l(\A\circ\M, u, v)), (\X_1, \ldots, \X_n))$;
\item\label{step:4} ${\sf Q}={\sf RealDetRec}({\sf Substitute}(\X_1=\fiber, \A\circ\M)))$;
\item\label{step:5} ${\sf Q}={\sf Lift}({\sf Q}, \fiber)$;
\item\label{step:6} return ${\sf Image}({\sf Union}({\sf Q}, {\sf P}), \M^{-1})$. \\
\end{enumerate}

The main algorithm {\sf RealDet} checks that the input satisfies
$\sfG$,
in which case it calls {\sf RealDetRec}. \\

${\sf RealDet}(\A)$:
\begin{enumerate}
\item \label{step:main:1} Choose randomly $u \in \QQ^{m+1}$;
\item \label{step:main:2} If ${\sf IsSing}(\f(\A, u)) = {\tt true}$ then output an error message
  saying that the genericity assumptions are not satisfied;
\item \label{step:main:3} else return ${\sf RealDetRec}(\A)$. \smallskip 
\end{enumerate}

\subsection{Proof of correctness}\label{ssec:algo:correctness}

It is immediate that it is sufficient to prove the correctness of {\sf
  RealDetRec} to obtain the correctness of {\sf RealDet}. This
algorithm takes as input an affine map $\A$ satisfying $\sfG$.

The result below shows that Assumption $\sfG$ is {\it generic} in the
sense that there a exists a non-empty Zariski open set of
$\CC^{m^2(n+1)}$ contained in the set of linear matrices satisfying
$\sfG$. It is also useful to ensure that recursive calls are valid,
i.e. the inputs in recursive calls satisfy the genericity assumption.
The proof is given in Section \ref{sec:smoothness}.

\begin{proposition}\label{prop:smoothness}
  Let $u =(u_1, \ldots, u_{m+1}) \in \QQ^{m+1}$ such that $(u_1, \ldots, u_m) \in \QQ^{m}_*$ and
  $u_{m+1} \neq 0$. Then the following holds. 
  \begin{enumerate}
  \item There exists a non-empty Zariski open set
    $\zarA\subset\CC^{m^2(n+1)}$ such that for all $\A\in\zarA$,
    $\f(\A, u)$ satisfies $\sfG$.
  \item If $\f(\A, u)$ satisfies $\sfG$ then there exists a non-empty
    Zariski open set $\zarfiber\subset \CC$ such that for any $\fiber
    \in \zarfiber$ $\f_{\fiber}(\A, u)$ satisfies $\sfG$.
  \end{enumerate}
\end{proposition}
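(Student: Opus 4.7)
To prove item 1, I would introduce the universal incidence variety
\[
\setW \;=\; \{(\A, x, y) \in \CC^{m^2(n+1)} \times \CC^n \times \CC^m \: :\: \A(x)y = 0,\ u'(y,-1) = 0\}
\]
together with the natural projection $\pi \colon \setW \to \CC^{m^2(n+1)}$. Writing $F(\A, x, y) = (\A(x)y,\, u'(y,-1))$, we have $\setW = \zeroset{F}$ and $\pi^{-1}(\A) = \zeroset{\f(\A, u)}$. The plan is to show that $F$ is submersive at every point of $\setW$, hence $\setW$ is smooth equidimensional of codimension $m+1$, and then to invoke generic smoothness (Sard in characteristic zero) to conclude that for $\A$ outside a proper Zariski closed subset the fiber $\zeroset{\f(\A, u)}$ is smooth of the expected codimension. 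A parallel argument on the projection $(x,y) \mapsto x_1$ from $\setV(\A, u)$ to $\CC$ handles item 2.

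\textbf{Submersivity and item 1.} At any point of $\setW$, the relation $u'y = u_{m+1}$ with $u_{m+1} \neq 0$ forces $y \neq 0$. Perturbing only the entries of $\A_0$ gives the linear map $\delta \A_0 \mapsto (\delta \A_0\cdot y,\, 0)$, which surjects onto $\CC^m \times \{0\}$ since $y \neq 0$. Perturbing only $y_j$ for a $j$ with $u_j \neq 0$ (which exists by the hypothesis $(u_1,\ldots,u_m) \neq 0$) yields a vector whose last coordinate equals $u_j \neq 0$, and we can then correct the first $m$ coordinates by an appropriate $\delta \A_0$. The differential of $F$ is therefore surjective onto $\CC^{m+1}$ at every point of $\setW$, establishing smoothness of $\setW$ of codimension $m+1$. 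Generic smoothness applied to the dominant morphism $\pi$ (dominance follows because the determinantal hypersurface is non-empty for generic $\A$) yields a non-empty Zariski open $\zarA_2 \subset \CC^{m^2(n+1)}$ over which $\zeroset{\f(\A, u)}$ is empty or smooth equidimensional of codimension $m+1$; radicality of $\langle \f(\A, u) \rangle$ then follows from the Jacobian criterion on a smooth complete intersection, giving $\sfG_2$. For $\sfG_1$, the Bertini-style argument sketched in Subsection \ref{remark:sing} produces a non-empty Zariski open $\zarA_1$; we take $\zarA = \zarA_1 \cap \zarA_2$.

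\textbf{Item 2.} Assume $\A$ satisfies $\sfG$, so that $\setV(\A, u)$ is smooth equidimensional of codimension $m+1$. I would apply generic smoothness to the regular map $\pi_1 \colon \setV(\A, u) \to \CC$, $(x, y) \mapsto x_1$: its critical values form a proper Zariski closed subset of $\CC$, and on the complement $\zarfiber_2$ the fiber $\setV_{\fiber}(\A, u)$ is smooth equidimensional of codimension $m+2$ in $\CC^{n+m}$, whence $\langle \f_{\fiber}(\A, u) \rangle$ is radical by the Jacobian criterion. To recover $\sfG_1$ for $\A_\fiber$, I would apply Bertini to the hypersurface $\setD$ stratified by $(\reg \setD,\, \sing \setD)$: for $\fiber$ outside a proper closed subset $\zarfiber_1^c$, the affine hyperplane $\{x_1 = \fiber\}$ is transverse to both strata, so $\sing(\setD \cap \{x_1 = \fiber\}) = \sing \setD \cap \{x_1 = \fiber\}$, which under the identification $\{x_1 = \fiber\} \cong \CC^{n-1}$ is exactly the rank-drop-by-2 locus of $\A_\fiber$. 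Setting $\zarfiber = \zarfiber_1 \cap \zarfiber_2$ concludes.

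\textbf{Main obstacle.} The delicate point is the submersivity of $F$: both standing hypotheses on $u$ intervene in an essential way, since $u_{m+1} \neq 0$ is what ensures $y \neq 0$ on $\setW$ (and thus non-triviality of the $\A_0$-derivatives), while $(u_1, \ldots, u_m) \neq 0$ is what allows the last coordinate of $F$ to be reached. A secondary subtlety is the transversality argument for item 2, where one must make sure that slicing $\setD$ by a generic hyperplane introduces no new singular points beyond those inherited from $\sing \setD$, i.e.\ apply Bertini carefully to an affine hypersurface with prescribed singular locus.
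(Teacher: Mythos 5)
Your proposal is correct and follows essentially the same route as the paper: you introduce the universal incidence variety, establish submersivity of $F$ by perturbing the entries of $\A_0$ (using $y\neq0$, forced by $u_{m+1}\neq0$) and a $y$-coordinate with $u_j\neq0$ — which is exactly the paper's exhibition of a triangular nonzero maximal minor in the columns $\partial/\partial\fraka_{0,r,s}$ and $\partial/\partial\Y_\ell$ — and then invoke generic smoothness, which the paper states as Thom's Algebraic Weak Transversality. Item 2 is handled identically: Sard applied to $\pi_1$ restricted to $\setV(\A,u)$ together with the Jacobian criterion gives $\sfG_2$ for $\f_\fiber(\A,u)$, and transversality of the generic hyperplane $\{x_1=\fiber\}$ to the stratification of $\setD$ gives $\sfG_1$ for $\A_\fiber$, just as in the paper.
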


Note that random choices are performed by algorithm {\sf
  RealDetRec} at Step \ref{step:2}. These are needed to ensure
some genericity properties. The first one ensures that set $\setZ(\A\circ\M,u,v)$ is finite; it is proved in
Section \ref{sec:dimension}.

\begin{proposition}\label{prop:dimlag}
  Assume that $\A \in \zarA $ (see Section \ref{remark:sing}) and $u \neq 0$.
  Then there exist non-empty Zariski open sets $\zarM_1 \subset \GL(n,\CC)$
  and $\zarV\subset\CC^{m+1}$ such that for all $\M \in \zarM_1 \cap
  \QQ^{m \times m}$ and $v \in \zarV \cap \QQ^{m+1}$, the following
  properties hold:
  \begin{enumerate}
  \item $\setZ(\A\circ\M, u, v)$ is a finite set;
  \item the Jacobian matrix $\jac\:l(\A\circ\M, u, v)$ has maximal rank at any point of $\setZ(\A\circ\M, u, v)$;
  \item the projection of $\setZ(\A\circ\M, u, v)$ on the $(\X,
    \Y)$-space contains the set of critical points of the restriction to $\setV(\A\circ\M, u)$
    of the projection on the $\X_1$-coordinate.
  \end{enumerate}
\end{proposition}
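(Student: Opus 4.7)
The plan is to read $\setZ(\A\circ\M,u,v)$ as a Lagrange system encoding the critical points of $\pi_1$ restricted to $\setV(\A\circ\M,u)$ together with the associated multiplier, and then to prove the three claims in order; items (2) and (3) will follow from the dimension count in (1). By invoking Proposition \ref{prop:smoothness} for $\A\circ\M$, I first restrict $\M$ to a Zariski-open subset of $\GL(n,\CC)$ on which $\f(\A\circ\M,u)$ satisfies $\sfG$, so that $\setV(\A\circ\M,u)$ is smooth and equidimensional of dimension $n-1$ and $\jac\f(\A\circ\M,u)$ has full row rank $m+1$ everywhere.

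For item (1), the critical locus $\scC$ of $\pi_1|_{\setV(\A\circ\M,u)}$ is cut out of $\setV(\A\circ\M,u)$ by the $(m+1)\times(m+1)$ minors of $J(x,y)=\jac_1\f(\A\circ\M,u)$, a matrix of size $(m+1)\times(n+m-1)$. In the ambient space of such matrices the rank-$\leq m$ (resp.\ rank-$\leq m-1$) locus has codimension $n-1$ (resp.\ $2n$), so the expected dimension of $\scC$ is $\dim\setV(\A\circ\M,u)-(n-1)=0$ and no corank-$\geq 2$ point should appear. To realise this expected count I introduce the incidence
\[
\scS=\{(\M,x,y)\in\GL(n,\CC)\times\CC^{n+m}\: :\: (x,y)\in\setV(\A\circ\M,u),\ \rank J(x,y)\leq m\},
\]
control its dimension via the second projection $\scS\to\CC^{n+m}$ (using the transitivity of the $\GL(n,\CC)$-action on linear projection directions in $\CC^n$), and then apply the theorem on the dimension of fibers to the first projection $\scS\to\GL(n,\CC)$ to obtain a nonempty Zariski open $\zarM_1$ over which the fibers are zero-dimensional. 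The same argument applied to corank $\geq 2$ yields an empty generic fiber. At each critical point the left kernel of $J(x,y)^{\!\top}$ is therefore a line, and the Zariski open $\zarV\subset\CC^{m+1}$ defined as the complement of the finitely many hyperplanes orthogonal to these lines is nonempty; for $v\in\zarV$ each such line has a unique representative with $v^{\!\top}z=1$, and $\setZ(\A\circ\M,u,v)$ is finite.

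Item (3) is then formal: any $(x,y)\in\scC$ with generator $z$ of $\ker J(x,y)^{\!\top}$ lifts to $(x,y,z/(v^{\!\top}z))\in\setZ(\A\circ\M,u,v)$, so the projection on $(\X,\Y)$ recovers $\scC$. For item (2), the square Jacobian $\jac\,l$ of $l(\A\circ\M,u,v)$ has block structure with $\jac_{x,y}\f(\A\circ\M,u)$ at the top-left, the column block $\left(\begin{smallmatrix}J^{\!\top}\\ v^{\!\top}\end{smallmatrix}\right)$ at the bottom-right, and a Hessian-like coupling $\partial_{x,y}(J^{\!\top}z)$ in between. Tracing a right null vector $(\beta_{x,y},\beta_z)$ through the blocks forces $\beta_{x,y}\in T_{(x,y)}\setV(\A\circ\M,u)$ from the top rows and reduces the remaining equations to a square linear system of size $n+m$, whose nonsingularity amounts to a Morse-type nondegeneracy of the critical point along $\pi_1$; this is again generic in $\M$ and is enforced by shrinking $\zarM_1$ once more. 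The principal obstacle in the whole argument is the transversality count underlying (1): one must verify that $\scS$ is smooth of the expected dimension at some concrete test point, and the natural route is to exploit the $\GL(n,\CC)$-action to reduce to a normal form in which the rank drop of $J$ can be checked by an explicit Jacobian computation.
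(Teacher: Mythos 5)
Your high-level plan is close to what the paper does: you reformulate finiteness as an expected-dimension statement for the critical locus, introduce an incidence variety over the parameter space, and appeal to the theorem on the dimension of fibers plus a Sard/Thom-type transversality argument. The paper also uses these tools (its incidence map $p$ in Lemma~\ref{lemma:lagr:proj} is parametrized by the pair $(w,v)$, with $w$ playing the role of the first row of $\M^{-1}$, which keeps $\setV(\A,u)$ fixed and only moves the projection direction --- a cleaner bookkeeping than parametrizing by $\M$ directly). So the architecture is right.

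The genuine gap is exactly the point you flag as ``the principal obstacle'' and then leave unresolved: the claim that the determinantal locus $\{\rank J(x,y)\leq m\}$ meets $\setV(\A\circ\M,u)$ in the expected codimension $n-1$. You say one can ``reduce to a normal form in which the rank drop of $J$ can be checked by an explicit Jacobian computation,'' but that computation is the entire content of the proof, and it is not automatic. The difficulty the paper has to confront is that $\setD$ is stratified by $\rank A(x)$, and over the strata $\rank A(x)\leq m-2$ the fibers of the resolution $\setV\to\setD$ are positive-dimensional, so the na\"ive dimension count for the critical locus of a hypersurface does not apply there. The paper resolves this in two steps your proposal does not reproduce: (i) Lemma~\ref{lemma:local}, a Schur-complement local description of $\setV$ on the locus where an $r\times r$ minor $\det N$ of $A$ is nonzero, yielding explicit equations $Sch(N)\cdot y=0$; and (ii) Lemma~\ref{lemma:local-lag}, a dimension-of-fibers argument using the codimension $(m-r)^2$ of the rank-$\leq r$ determinantal variety, which shows that for generic $w$ no Lagrange solution sits over a point with $\rank A(x)\leq m-2$. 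Only after this reduction, when one knows $\rank A(x)=m-1$, does $\setV$ become a local graph over $\det A=0$ and the Jacobian of the Lagrange system becomes block-triangular, so the transversality check (maximal rank of $Dp$) collapses to reading off diagonal blocks $z_\bullet Id_n$, $Id_m$, $z_\ell$ and a nonvanishing $\partial\det A/\partial x_j$. Your Morse-type nondegeneracy claim for item (2) is likewise not free: in the paper it falls out of the same single application of Thom's weak transversality to the parametrized family $p_{v,w}$ --- giving finiteness and maximal Jacobian rank simultaneously --- rather than requiring a separate genericity argument.

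A minor remark: your corank-$\geq 2$ worry for $J$ is vacuous and needs no dimension count. Since $\jac\f(\A\circ\M,u)$ has full rank $m+1$ at every point of $\setV(\A\circ\M,u)$ (by $\sfG_2$) and $J$ is obtained by deleting a single column, one always has $\rank J\geq m$ there, purely by linear algebra. The real issue is entirely in the rank-$=m$ locus.
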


\noindent
The proposition below states that, for $\M\in \GL(n,\QQ)$
generically chosen, and for any connected component $\setC$ of $\setD \cap
\RR^n$, $\pi_i(\M^{-1}\setC)$ is closed for $i=1,\ldots,n-1$. This is
proved in Section \ref{sec:closedness}.

\begin{proposition}\label{prop:closedness}
  Assume that $\A \in \zarA$. Then there exist two
  non-empty Zariski open sets $\zarM_2 \subset \GL(n,\CC)$ and
  $\mathscr{U} \subset \CC^m$ such that for any
  $\M \in \zarM_2 \cap \QQ^{n \times n}$, $u \in (\zarU \cap \QQ^m) \times \QQ_*$
  and any connected component $\setC$ of $\setD \cap \RR^n$, the following
  holds:
  \begin{enumerate}
  \item for $i=1,\ldots,n-1$, $\pi_i(\M^{-1}\setC)$ is closed for the
    Euclidean topology;
  \item for any $\fiber \in \RR$ lying on the boundary of
    $\pi_1(\M^{-1}\setC)$,  $\pi_1^{-1}(\fiber) \cap \M^{-1}\setC$ is finite
    and there exists $(\X, \Y) \in \RR^n \times \RR^m$ such that
    $(\X,\Y) \in \setV(\A\circ\M, u)$ and $\pi_1(\X, \Y)=\fiber$.
  \end{enumerate}
\end{proposition}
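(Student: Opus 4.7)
The plan is to treat Parts 1 and 2 separately. Part 1 requires a generic choice of $\M$, while Part 2 requires generic choices of both $\M$ and $u$. The target Zariski open sets $\zarM_2$ and $\zarU$ are then defined as the intersection of the generic sets produced in each of the two steps.

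For Part 1, I would first reformulate closedness in terms of sequences: $\pi_i(\M^{-1}\setC)$ fails to be closed in the Euclidean topology precisely when there exists a sequence $(\X^{(k)}) \subset \M^{-1}\setC$ with $\|\X^{(k)}\| \to \infty$ and $\pi_i(\X^{(k)})$ converging to some $y \in \RR^i$ outside $\pi_i(\M^{-1}\setC)$. Passing to the projective closure $\overline{\M^{-1}\setD} \subset \mathbb{P}^n$, such a failure produces a real point at infinity of $\overline{\M^{-1}\setD}$ lying in the coordinate subspace $\{x_0 = 0,\, x_1 = \cdots = x_i = 0\}$, together with a failure of the corresponding projection to surject onto a neighbourhood of $y$. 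Under $\sfG_1$ the singular locus of $\det \A$ is confined to the rank-$(m-2)$ stratum and hence has codimension at least $2$, so the variety at infinity is cut out by the generic top-degree form $\det(x_1 \tilde{\A}_1 + \cdots + x_n \tilde{\A}_n)$ whose singularities can be controlled. I would then invoke a Bertini-style transversality-at-infinity argument to show that for $\M$ in a non-empty Zariski open subset $\zarM_2 \subset \GL(n,\CC)$, the projection $\pi_i$ restricted to $\M^{-1}\setD \cap \RR^n$ is semi-proper in the sense that every Euclidean limit point of its image is attained, which yields closedness of $\pi_i(\M^{-1}\setC)$ for every connected component $\setC$.

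For Part 2, fix a boundary point $\fiber \in \RR$ of the (now closed) set $\pi_1(\M^{-1}\setC)$. By the real curve selection lemma, $\fiber$ is a critical value of the restriction of $\pi_1$ to $\M^{-1}\setC$, so $\pi_1^{-1}(\fiber) \cap \M^{-1}\setC$ is contained in the real trace of the critical locus $\crit(\pi_1, \M^{-1}\setD)$. By Proposition \ref{prop:dimlag}, the projection to $\CC^n$ of $\setZ(\A\circ\M, u, v)$ is finite for generic $\M$ and $v$, and it contains the relevant critical set; this gives finiteness of the fibre. For the lift to $\setV(\A\circ\M, u)$, observe that any $\X \in \pi_1^{-1}(\fiber) \cap \M^{-1}\setC$ satisfies $\rank \A(\M\X) \leq m - 1$, and since $\A(\M\X)$ is a real matrix, its kernel contains a non-zero real vector $\Y_0$. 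If $u = (u_1, \ldots, u_m, u_{m+1})$ is chosen with $u_{m+1} \neq 0$ and such that the linear form $u_1 Y_1 + \cdots + u_m Y_m$ does not vanish on $\Y_0$, then rescaling $\Y_0$ produces $\Y \in \RR^m$ with $u_1 Y_1 + \cdots + u_m Y_m = u_{m+1}$, yielding a point $(\X, \Y) \in \setV(\A\circ\M, u) \cap (\RR^n \times \RR^m)$ over $\fiber$. Since the set of $\X$ arising from boundary fibres of all components is finite, only finitely many linear conditions on $(u_1, \ldots, u_m)$ must be avoided, and their complement defines a non-empty Zariski open $\zarU \subset \CC^m$.

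The main obstacle is Part 1. Closedness of a semi-algebraic projection after a generic linear change of variables is delicate because complex transversality results at infinity (generic fibres are finite, intersections have the expected dimension) do not by themselves exclude the real points at infinity that would witness non-closedness. I expect to handle this by combining a detailed rank stratification of the projective variety at infinity with the genericity assumption $\sfG_1$, which forces the singular strata to have sufficiently large codimension so that complex transversality after a generic $\M$ can be transferred to the real locus through semi-algebraic dimension counts, in the spirit of the proper-projection techniques developed in Safey El Din--Schost (2003).
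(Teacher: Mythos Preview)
Your proposal has genuine gaps in both parts, and the overall strategy diverges from the paper's in a way that leaves the hardest steps unproven.

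\textbf{Part 1.} What you have written is a plan, not a proof. You correctly identify that non-closedness would produce a real point at infinity of $\overline{\M^{-1}\setD}$, and you correctly observe that complex transversality at infinity does not by itself rule out such real points. But your proposed fix---``Bertini-style transversality-at-infinity'' combined with ``semi-algebraic dimension counts''---is never carried out, and it is not clear it can be: the determinantal hypersurface is singular, the rank stratification at infinity is not obviously transverse to real coordinate subspaces after a generic linear change, and the hypothesis $\sfG_1$ controls the affine singular locus, not the projective one. The paper proceeds very differently. It defines a recursive tower of auxiliary varieties ${\mathcal O}_i(\M^{-1}\setD)$ built from equidimensional components, singular loci, and critical loci of the successive projections $\pi_i$, and proves (Lemma \ref{lemma:noeth1}, Lemma \ref{lemma:noeth2}, Proposition \ref{prop:propertyP}) that after a generic $\M$ each ${\mathcal O}_i$ is in Noether position with respect to $\X_1,\ldots,\X_i$. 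Noether position makes the restriction of $\pi_i$ proper (via \cite{Jelonek}), and properness plus an inductive argument (Proposition \ref{prop:frontier}) forces the boundary of $\pi_i(\M^{-1}\setC)$ to lie inside $\pi_i({\mathcal O}_{i-1}\cap\M^{-1}\setC)\subset\pi_i(\M^{-1}\setC)$, hence closedness. This is a strict generalization of \cite{SaSc03} to the singular case, and the singular locus is handled explicitly by folding it into the ${\mathcal O}_i$'s, not by hoping its codimension is large enough.

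\textbf{Part 2, finiteness.} Your argument is incorrect. You write that the fibre $\pi_1^{-1}(\fiber)\cap\M^{-1}\setC$ is contained in $\crit(\pi_1,\M^{-1}\setD)$ and then invoke Proposition \ref{prop:dimlag} for finiteness. Two problems. First, $\setD$ is singular, so any reasonable notion of $\crit(\pi_1,\M^{-1}\setD)$ must include the singular locus, which is in general positive-dimensional; nothing you have said bounds it. Second, Proposition \ref{prop:dimlag} concerns the critical points of $\pi_1$ on the \emph{smooth incidence variety} $\setV(\A\circ\M,u)$, not on $\M^{-1}\setD$; the projection of $\setZ(\A\circ\M,u,v)$ to $\X$-space is not asserted to contain the singular locus of $\setD$, and there is no reason it should. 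The paper obtains finiteness instead from Lemma \ref{lemma:fiber}: using the Noether-position tower again, one shows that every point of the boundary fibre actually lies in ${\mathcal O}_0(\M^{-1}\setD)$, which has dimension $\leq 0$ by construction.

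\textbf{Part 2, lift.} Your argument here is essentially correct and matches the paper's: for each of the finitely many $\X$ in the boundary fibre, the condition that the affine system $\A(\M\X)\Y=0$, $u'\Y=u_{m+1}$ be solvable is that $(u_1,\ldots,u_m)$ not lie in the row space of $\A(\M\X)$, a proper linear subspace; intersecting the finitely many complements gives $\zarU$. The paper phrases this via the rank of the augmented matrix, which is equivalent to your kernel formulation.

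In summary, the lift is fine, but both closedness and fibre-finiteness require the Noether-position machinery that you gesture at but do not build.
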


Note that, starting with an $n$-variate affine map, there are $n$
calls to {\sf RealDetRec}, among which $n-1$  are
recursive.

The random choices performed at Step \ref{step:2} of every recursive
call to {\sf RealDetRec} can be organized in an array
\begin{equation}\label{eq:array}
\left( (\M^{(1)}, u^{(1)}, v^{(1)}, \fiber^{(1)}), \ldots, (\M^{(n-1)}, u^{(n-1)}, v^{(n-1)}, \fiber^{(n-1)}) \right)
\end{equation}
where the upperscripts indicate the depth of the recursion. There are
$n-1$ choices of these data because when $n=1$ the recursive subroutine
directly returns a rational parametrization without making such a choice.
To ensure the correctness of {\sf RealDetRec}, we need to assume that
these choices are random enough so that data $(\M^{(j)}, u^{(j)},
v^{(j)}, \fiber^{(j)})$ lie in some prescribed non-empty Zariski open
set ${\zarO}^{(j)}$ for $j=1,\ldots,n-1$ as suggested by the previous
propositions. Because of the recursive calls, {\it a priori} the set
$\zarO^{(j)}$ depends on the previous choices. This is formalized by
the following assumption. \smallskip \smallskip

{\bf Assumption} $\sfH$. We use the notations for sets introduced in 
Propositions \ref{prop:smoothness}, \ref{prop:dimlag}, \ref{prop:closedness},
with the upperscript ${(j)}$ to indicate the depth of recursion. We say
that $\sfH$ holds if the array (\ref{eq:array}) satisfies the following conditions:
\begin{itemize}
\item $\A$ satisfies assumption $\sfG$;
\item $\M^{(j)} \in \zarM_1^{(j)} \cap \zarM_2^{(j)} \cap \QQ^{n \times n}$, for $j=1,\ldots,n-1$;
\item $u^{(j)} \in (\zarU^{(j)}\times\QQ_*) \cap \QQ^{m+1}$ for $j=1,\ldots,n-1$;
\item $v^{(j)} \in \zarV^{(j)} \cap \QQ^{m+1}$, for  $j=1,\ldots,n-1$;
\item $\fiber^{(j)} \in \zarfiber^{(j)} \cap \QQ$, for  $j=1,\ldots,n-1$.
\end{itemize}

We can now prove the following correctness statement.

\begin{theorem}\label{theo:correctness}
  Assume that $\A\in\zarA$ and that
  $\sfH$ holds.  Then, ${\sf RealDet}(\A)$ returns
  a rational parametrization encoding a finite set of points 
with non-empty intersection with each connected component of $\setD\cap\RR^n$. 
\end{theorem}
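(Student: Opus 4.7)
The plan is to proceed by induction on $n$, since ${\sf RealDet}$ reduces to ${\sf RealDetRec}$ (it only verifies $\sfG$ via ${\sf IsSing}$ before calling the recursive routine). For the base case $n=1$, the output $(1, t, \det\A(t))$ is a univariate rational parametrization whose zero set equals $\setD\cap\CC$ entirely, so it trivially meets every connected component of $\setD \cap \RR$.

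For the inductive step, I would fix a connected component $\setC$ of $\setD\cap\RR^n$ and exhibit a point of $\setC$ in the output. After applying the change of variables, $\setC' := \M^{-1}\setC$ is a connected component of $\M^{-1}\setD \cap \RR^n$, and by Proposition \ref{prop:closedness}(1) the projection $\pi_1(\setC')\subset\RR$ is closed. Then I would split the analysis according to whether $\fiber$ belongs to $\pi_1(\setC')$.

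In the first case, $\fiber\in\pi_1(\setC')$, so the non-empty fiber $\setC'\cap\{\X_1=\fiber\}$ is contained in the real determinantal variety of the substituted linear matrix $\A_{\fiber}\circ \M$, viewed as a function of $(\X_2,\ldots,\X_n)$. By Proposition \ref{prop:smoothness}(2) and the hypothesis $\fiber\in\zarfiber$ guaranteed by $\sfH$, this substituted matrix satisfies $\sfG$, so the inductive hypothesis applies to the recursive call at Step \ref{step:4}. The call returns a point in every connected component of the fibered real determinantal variety, in particular a point of $\setC'\cap\{\X_1=\fiber\}$; lifting by $\fiber$ at Step \ref{step:5} and applying $\M$ at Step \ref{step:6} delivers a point of $\setC$. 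In the second case, $\fiber\notin\pi_1(\setC')$, so the closed set $\pi_1(\setC')$ is a non-empty proper subset of $\RR$ and admits a boundary point $\fiber^\star$. Proposition \ref{prop:closedness}(2) then furnishes a real point $(\x^\star,\y^\star)\in\setV(\A\circ\M,u)$ with $\x^\star\in\setC'$ and $\pi_1(\x^\star,\y^\star)=\fiber^\star$. The key claim is that this point is a critical point of the restriction of $\pi_1$ to $\setV(\A\circ\M,u)$; from there Proposition \ref{prop:dimlag}(3) places it in the $(\X,\Y)$-projection of $\setZ(\A\circ\M,u,v)$, hence its $\X$-component is included in ${\sf P}$ at Step \ref{step:3} and is mapped back to $\setC$ at Step \ref{step:6}. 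Finiteness of the output follows from Proposition \ref{prop:dimlag}(1) at every level combined with the inductive hypothesis.

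The hard part will be the critical-point claim in the second case. Morally, since $\fiber^\star$ is a local extremum of $\X_1$ on $\setC'$, one expects its lift to $\setV(\A\circ\M,u)$ to have a zero differential in the tangent directions; however $\setC'$ may be singular, so I cannot invoke a naive lifting argument directly on $\setC'$. Instead I would use that $\setV(\A\circ\M,u)$ is smooth and equidimensional of dimension $n-1$ by $\sfG_2$ (Proposition \ref{prop:smoothness}(1)), and that by $\sfG_1$ the singular locus of $\setD$ sits in rank $\leq m-2$, so the projection $\setV(\A\circ\M,u)\to\M^{-1}\setD$ is a local diffeomorphism on the rank-$(m-1)$ locus. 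A tangent-space analysis at $(\x^\star,\y^\star)$, combined with the finiteness of $\pi_1^{-1}(\fiber^\star)\cap\setC'$ from Proposition \ref{prop:closedness}(2), then forces the Jacobian $\jac_1 \f(\A\circ\M,u)$ (the one without its first column) to drop rank at $(\x^\star,\y^\star)$, which is exactly the critical-point condition entering the Lagrange system $l(\A\circ\M,u,v)$.
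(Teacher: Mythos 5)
Your high-level structure matches the paper's: induction on $n$, base case returns $(1,t,\det\A(t))$, and the inductive step splits into a case handled by the recursive call on the fibered matrix and a case handled by the Lagrange-system critical points. Your case split (does $\fiber\in\pi_1(\setC')$?) is phrased differently from the paper's (is $\pi_1(\M^{-1}\setC)=\RR$?), but it is logically sound --- your recursion case is strictly larger than the paper's, and the fiber argument you give for it (any $\x^\star\in\setC'\cap\{\X_1=\fiber\}$ sits in some connected component $\setC''$ of the fibered variety, and $\{\fiber\}\times\setC''\subset\setC'$ by connectedness) is correct.

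The genuine gap is exactly where you flagged it: the critical-point claim in the second case. Your sketched fix --- use $\sfG_1$ to make the projection $\setV(\A\circ\M,u)\to\M^{-1}\setD$ a local diffeomorphism on the rank-$(m-1)$ locus and then run a tangent-space argument --- does not cover the point you actually have. Proposition \ref{prop:closedness} guarantees a real lift $(\x^\star,\y^\star)\in\setV(\A\circ\M,u)$ of the boundary point $\x^\star$, but it gives no control on $\rank\,\A(\M\x^\star)$: when the rank drops to $m-2$ or lower, $\x^\star$ lies in the singular locus of $\M^{-1}\setD$ (which by $\sfG_1$ is precisely where the rank deficiency exceeds $1$), the fiber over $\x^\star$ in $\setV$ is positive-dimensional, and the local-diffeomorphism picture breaks down entirely. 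The finiteness of $\pi_1^{-1}(\fiber^\star)\cap\setC'$ does not rescue this; it plays no role in establishing rank-deficiency of $\jac_1 \f(\A\circ\M,u)$.

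The paper's argument avoids this by never reasoning downstairs on $\setC'$ at all. Instead it lifts the boundary property \emph{upstairs}: letting $\setC''$ be the connected component of $\setV(\A\circ\M,u)\cap\RR^{n+m}$ containing $(\x^\star,\y^\star)$, one shows by contradiction that $\fiber^\star$ is also a boundary point of $\pi_1(\setC'')$. The contradiction is obtained by a continuous semi-algebraic path in $\setC''$ from $(\x^\star,\y^\star)$ to a point with strictly smaller first coordinate, which projects to a continuous semi-algebraic path in $\M^{-1}\setD\cap\RR^n$ landing in $\setC'$ --- contradicting the assumption that $\fiber^\star$ bounds $\pi_1(\setC')$ from below. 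Once the boundary property is established on $\setC''$, the implicit function theorem applies directly because $\setV(\A\circ\M,u)$ is smooth and equidimensional by $\sfG_2$, and yields rank-deficiency of $\jac_1\f(\A\circ\M,u)$ at $(\x^\star,\y^\star)$. This path-lifting step, which handles the low-rank strata uniformly, is the ingredient missing from your sketch, and you would need to supply it (or an equivalent argument, e.g.\ via the open-mapping property of a submersion) to close the proof.
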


\begin{proof} Our reasoning is by induction on $n$, the number of
  variables. We start with the initialization.  When $n=1$,
  $\setD\subset \CC$ is finite. Then a rational parametrization of
  $\setD$ is the triple $(1,t,\text{det} \A(t))$, which is the output result.
  Now, our induction assumption is that for any linear map
  $\X \mapsto \A(\X)=\A_0+\X_1\A_1+\cdots+\X_{n-1}\A_{n-1}$ that
  satisfies  $\sfG$, the algorithm {\sf RealDetRec}
  returns a correct answer provided that $\sfH$ holds.

  Now, let $\A$ be a linear  map and let $\setC$ be a connected
  component of $\setD\cap\RR^n$. We let  $\M, u, v$ and $\fiber$ be
  respectively the matrix, vectors and rational number chosen at Step
  \ref{step:2} of {\sf RealDetRec}, with input $\A$.

  First assume that the projection on the $\X_1$-coordinate of
  $\M^{-1}\setC$ is the whole $\X_1$-axis. Since $\A$ satisfies
  $\sfG$, we deduce that $\A\circ\M$ satisfies $\sfG$.  Since $\sfH$
  holds, we conclude by Proposition \ref{prop:smoothness} that
  $\f_{\fiber}(\A\circ\M, u)$ generates a radical ideal and defines an
  algebraic variety which is either empty or smooth
  $(n-2)$-equidimensional. Moreover the singular locus of the
  determinant obtained by instantiating $\X_1$ to $\fiber$ in
  $\A\circ\M$ is exactly the set of points at which the rank
  deficiency is greater than $1$.

  Since, by assumption, $\pi_1(\M^{-1}\setC)$ is the whole $\X_1$-axis,
  there exists a connected component $\setC'$ of the solution set of
  $\f_{\fiber}(\A\circ\M, u)$ such that $\{(\fiber, \x)\mid \x  \in \setC'\}$
 is contained in $\M^{-1}\setC$.  In other words, the input of
  {\sf RealDetRec} at Step \ref{step:4} satisfies $\sfG$ and
  it is sufficient to compute sample points in each connected
  component of the solution set of $\f_{\fiber}(\A\circ\M, u)$
  to obtain a sample point in $\setC$. Correctness follows from the
  induction assumption which implies that {\sf RealDetRec}
  computes at least one point in each connected component of the
  algebraic set defined by $\f_{\fiber}(\A\circ\M, u)$.

   Now, assume that the projection $\pi_1$ on the
  $\X_1$-coordinate of $\M^{-1}\setC$ is not the whole $\X_1$-axis.  Since
  $\sfH$ is satisfied, we deduce by Proposition \ref{prop:closedness}
  that $\pi_1(\M^{-1}\setC)$ is closed for the Euclidean topology.  Since
  $\pi_1(\M^{-1}\setC) \neq \RR$ by assumption and since $\pi_1(\M^{-1}\setC)$ is
  closed, there exists $\x=(\x_1,\ldots, \x_n)\in \M^{-1}\setC$ such that $\fiber = \x_1$ lies in the
  boundary of $\pi_1(\M^{-1}\setC)$. Without loss of generality, we assume
  below that $\pi_1(\M^{-1}\setC)$ is contained in $[\fiber, +\infty[$.

  Recall that $\sfH$ holds. Then, by Proposition
  \ref{prop:closedness}, $\pi_1^{-1}(\fiber) \cap \M^{-1}\setC$ is finite and
  for all $\x \in \pi_1^{-1}(\fiber)\cap \M^{-1}\setC$ there exists $\y\in \RR^m$
  such that $(\x, \y)\in \setV(\A\circ\M, u)$.

  Below, we reuse the notations of the algorithm and we prove that
  there exists $\z \in \CC^{m+1}$ such that $(\x, \y, \z)$ is a point
  lying in $\setZ(\A\circ\M, u, v)$.  Combined with Proposition
  \ref{prop:dimlag}, we also deduce that the above polynomial system
  defines a finite set which contains $(\x, \y, \z)$.  Thus, the calls
  to {\sf RatPar} and {\sf Project} are valid and
  $(\x, \y, \z)$ lies in the finite set of points computed at Step
  \ref{step:3} of {\sf RealDetRec}. Correctness of the
  algorithm follows straightforwardly.

  Thus, it remains to prove that there exists $\z\in \CC^{m+1}$ such
  that $(\x, \y,\z)$ lies in $\setZ(\A\circ\M, u, v)$.  Let
  $\M^{-1}\setC'$ be the connected component of $\setV(\A\circ\M,
  u)\cap\RR^{n+m}$ which contains $(\x, \y)$. We claim that
  $\fiber=\pi_1(\x, \y)$ lies on the boundary of $\pi_1(\M^{-1}\setC')$.

  Indeed, assume by contradiction that this is not the case, i.e.
  $\fiber \in \pi_1(\M^{-1}\setC')$ but does not lie in the boundary of
  $\pi_1(\M^{-1}\setC')$. This implies that there exists $\varepsilon >0$
  such that the interval $(\fiber-\varepsilon, \fiber+\varepsilon)$
  lies in $\pi_1(\M^{-1}\setC')$. As a consequence, there exists $(\x',
  \y')\in\M^{-1}\setC'$ such that that $\pi_1(\x',\y')<
  \fiber$. Moreover, since $(\x',\y')\in \M^{-1}\setC'$ and $\M^{-1}\setC'$ is
  connected, we deduce that there exists a continuous semi-algebraic
  function $\tau \colon [0, 1]\to \M^{-1}{\setC'}$ with $\tau(0)=(\x,
  \y)$ and $\tau(1)=(\x',\y')$. Let $\pi_\X$ be the projection map
  $\pi_\X(\X,\Y) = \X$. Since $\pi_\X$ and $\tau$ are
  continuous semi-algebraic functions, $\gamma = \pi_\X \circ \tau$
  is continuous and semi-algebraic (since it is the composition of
  semi-algebraic continuous functions). Finally, note that
  $\gamma(0)=\x$ and $\gamma(1)=\x'$; we deduce that $\x'\in \M^{-1}\setC$
  with $\pi_1(\x')< \fiber=\pi_1(\x)$. This contradicts the fact that
  $\fiber$ lies in the boundary of $\pi_1(\M^{-1}\setC)$ and that
  $\pi_1(\M^{-1}\setC)$ lies in $[\fiber, +\infty[$. We conclude that
  $\fiber=\pi_1(\x, \y)$ lies on the boundary of $\pi_1(\M^{-1}\setC')$.

  As a consequence of the implicit function theorem \cite[Section
  3.5]{BaPoRo06}, we deduce that $(\x, \y)$ is a critical point of the
  restriction of the projection $\pi_1$ to $\M^{-1}\setC'$. Since
  $\M^{-1}\setC'$ is a connected component of $\setV(\A\circ\M, u)\cap
  \RR^n$ and since the input satisfies $\sfG_2$, we deduce that the
  truncated Jacobian matrix $\jac_1\f(\A\circ\M,u)$ (defined jointly
  with the Lagrange system in paragraph \ref{notations:algo}) is rank
  defective at $(\x, \y)$ (see \cite[Sections 2.1.4 and
  2.1.5]{SaSc13}).  Moreover, since $\sfH$ holds, we deduce by
  Proposition \ref{prop:dimlag} that $(\x, \y)$ lies in the projection
  of $\setZ(\A\circ\M, u, v)$. Thus, there exists $\z\in \CC^{m+1}$
  such that $(\x, \y,\z)$ lies in $\setZ(\A\circ\M, u, v)$ as
  requested.
\end{proof}

\subsection{Complexity analysis and degree bounds}\label{ssec:algo:complexity}

In this section, we estimate the complexity of the algorithm {\sf
  RealDet} and we give an explicit formula for a bound on the number
of complex solutions computed by the algorithm.

We assume that $\sfG$ holds, so that we
do not need to estimate the complexity of the subroutine {\sf IsSing}
and we focus on the complexity of the algorithm {\sf RealDetRec}. 

We assume in the sequel that $\sfH$ holds. On input $\A$
satisfying $\sfG$, {\sf RealDetRec} computes a
rational parametrization of the solutions set of $l(\A\circ\M, u, v)$
 (Step \ref{step:3}) and performs a
recursive call with input ${\sf Substitute}(\X_1 = \fiber,
\A\circ\M)$ (Step \ref{step:4}). On input $l(\A\circ\M, u, v)$,
our routine for computing rational parametrization of
its solution set starts by building an equivalent system.

The complexity results stated below depend on  degrees of
geometric objects defined by systems which are equivalent to the
Lagrange systems we consider.  We need to introduce some notations.

  The sequence of linear matrices considered during the
  recursive calls is denoted by $\A^{(0)}, \ldots, \A^{(n-1)}$, where
  $\A^{(i)}$ is a linear matrix in $n-i$ variables; the
  systems $\f(\A^{(i)}\circ\M^{(i)}, u^{(i)})$ for $0 \leq i \leq n-1$ are
  respectively denoted by
  \[
  \f_i=(f_{i,1}, \ldots, f_{i,m+1})
   \]
  where $f_{i,m+1} : \Y \mapsto \Y'u^{(i)}-1$.
  Note that the $\f_i$ involve $n+m-i$ variables.  The Lagrange systems
  $l(\A^{(i)}\circ\M^{(i)}, u^{(i)},v^{(i)})$ are denoted by
  \[
  l_i=(\f_i, g_{i, 1}, \ldots, g_{i, n+m-i})
  \]
  where $g_{i, n+m-i} : \Z \mapsto v^{(i)}\Z_{n+m-i+1}-1$.

  Using $f_{i,m+1}$, one can eliminate one of the $\Y$-variables, say
  $\Y_m$, in $\f_i$. We denote by
  \[
  \tilde \f_i =(\tilde f_{i, 1}, \ldots, \tilde f_{i, m})
  \]
  the polynomial system obtained this way. Recall that the polynomials
  $\g_{i, 1}, \ldots, \g_{i, n+m-i}$
  express that there is a nonzero vector in the left kernel of the
  truncated Jacobian matrix $\jac_1\f_i$. Hence, one can equivalently express
  the existence of a non-zero vector in the left kernel of the truncated
  Jacobian matrix $\jac_1\tilde\f_i$. This yields a new polynomial system
  \[
  \tilde l_i = (\tilde f_{i,1}, \ldots, \tilde f_{i, m}, \tilde g_{i, m+1}, \ldots,
  \tilde g_{i, n-i-1}, \tilde g_{i, n-i}, \ldots, \tilde g_{i,
    n+2m-i-2}).
  \]
Note that since we have assumed that $\sfH$ holds, one can deduce
using Proposition \ref{prop:dimlag} that the Jacobian matrix $\jac\tilde l_i$
has maximal rank at any complex solution of $\tilde l_i$.

This new polynomial system contains:
\begin{itemize}
\item $m$ polynomials which are bilinear in $(\X_1, \ldots, \X_n)$
  and $(\Y_1, \ldots, \Y_{m-1})$;
\item $m-1$ polynomials which are bilinear in $(\X_1, \ldots, \X_n)$
  and $(\Z_1, \ldots, \Z_{m-1})$;
\item $n-1$ polynomials which are bilinear in $(\Y_1, \ldots, \Y_{m-1})$
  and $(\Z_1, \ldots, \Z_{m-1})$. 
\end{itemize}
In the sequel, we denote by $\setV_{i,j}$ the
algebraic set defined by 
\[
\tilde f_{i,1}, \ldots, \tilde f_{i, j}, \qquad \text{ when } 1\leq j \leq m
\]
and 
\[
\tilde f_{i,1}, \ldots, \tilde f_{i, m}, \tilde g_{i, m+1}, \ldots, \tilde g_{i, j} \qquad \text{ when } m+1\leq j \leq n+2m-i-2.
\]
The algebraic set $\setW_{i,j}$ is the subset of $\setV_{i,j}$ at which the Jacobian
matrix of its above defining system has maximal rank. 
For $0 \leq i \leq n-1$, we denote by
$$
\delta_i = \max \{\deg \setW_{i,j} \: :\: 1\leq j \leq n+2m-i-2\}
$$
and by $\delta$ the maximum of the $\delta_i$. Remark that since $\sfH$ holds, Proposition
\ref{prop:dimlag} implies that $\setW_{i, n+2m-i-2}=\zeroset{\tilde l_i}$.

We start by estimating the complexity of the main subroutines called
by {\sf RealDetRec}. We prove the following result.

\begin{proposition}\label{prop:complexity:RealDet}
  Assume that $\sfH$ holds. Then, {\sf RealDetRec} outputs a rational
  parametrization whose real zero locus meets each connected component
  of $\setD\cap \RR^n$ within
  $$ \softO
  \left(n^2m^2(n+m)^5 {\delta}^2\right) $$ arithmetic operations in
  $\QQ$ with $\delta \leq {{n+m}\choose{m}}^3$ and $\softO(s)={O}(s\cdot\log^k(s))$ for some $k \in \mathbb{N}$.
\end{proposition}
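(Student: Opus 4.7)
The plan is to establish the complexity bound in two stages: first bound the arithmetic cost of a single call to {\sf RealDetRec} in terms of the geometric degrees $\delta_i$ of the intermediate varieties $\setW_{i,j}$, then bound $\delta = \max_i \delta_i$ via a multi-homogeneous Bezout estimate and sum over the $n-1$ recursive calls.

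For the single-iteration cost, the dominant work is Step \ref{step:3}, i.e.\ the call to {\sf RatPar}. I would replace the Lagrange system $l_i$ by the equivalent square system $\tilde l_i$ obtained by eliminating one $\Y$-variable using $f_{i,m+1}$; under assumption $\sfH$, Proposition \ref{prop:dimlag} guarantees that the Jacobian of $\tilde l_i$ has full rank on its zero locus, which puts us in the regular square setting where the geometric resolution algorithm of \cite{GiLeSa01} applies. I would invoke its complexity statement: for a square system in $N_i = O(n+m)$ variables given by a straight-line program of length $L_i$ with maximum intermediate degree bounded by $\delta_i$, the cost is $\softO(N_i(N_iL_i + N_i^4)\delta_i^2)$ operations in $\QQ$. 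Since $\tilde l_i$ consists of $O(n+m)$ polynomials, each bilinear in two of the three blocks $\X, \Y, \Z$, one has $L_i = O((n+m)^2)$, so one call to {\sf RatPar} costs $\softO((n+m)^5 \delta_i^2)$. The auxiliary routines {\sf Project}, {\sf Image}, {\sf Union}, {\sf Lift} run in polynomial time in $N_i$ and the degree of the input parametrization, which is dominated by the geometric resolution cost.

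The main obstacle is the degree bound $\delta \le \binom{n+m}{m}^3$. The key observation is that $\tilde l_i$ is tri-linear in the three variable groups $(\X_2,\ldots,\X_{n-i})$, $(\Y_1,\ldots,\Y_{m-1})$, $(\Z_1,\ldots,\Z_{m-1})$: it has $m$ polynomials bilinear in $(\X,\Y)$, $m-1$ polynomials bilinear in $(\X,\Z)$, and $n-i-1$ polynomials bilinear in $(\Y,\Z)$. Each intermediate variety $\setW_{i,j}$ is cut out by a subfamily of these equations, and its degree is controlled by the multi-homogeneous Bezout number, i.e.\ the corresponding coefficient in a product of linear forms in three groups of variables, truncated by the block sizes. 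A careful combinatorial count of such coefficients, picking the monomial exponents compatible with the sizes $n-i,m-1,m-1$ and the number of polynomials in each pair of blocks, then yields a uniform bound $\delta_i \le \binom{n+m}{m}^3$. I expect this combinatorial manipulation to be the most delicate step; an intermediate reformulation in terms of lattice paths or a generating-function identity will likely be convenient.

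Finally, I would combine the pieces. There are $n-1$ recursive calls indexed by $i=0,\ldots,n-2$, and by the previous paragraph each costs $\softO((n+m)^5 \delta^2)$ for the main resolution. The manipulations on rational parametrizations involved in {\sf Image}, {\sf Union}, {\sf Lift}, {\sf Project} have to propagate $O(nm)$ coordinates through systems of size up to $\delta$, inflating the count by an additional factor of order $nm^2$. Summing over $i$ yields the announced bound $\softO(n^2 m^2 (n+m)^5 \delta^2)$ with $\delta \le \binom{n+m}{m}^3$, from which the complexity estimate stated in the abstract follows at once. Correctness of the output, i.e.\ that the returned rational parametrization meets every connected component of $\setD \cap \RR^n$, is already granted by Theorem \ref{theo:correctness}.
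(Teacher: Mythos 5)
Your high-level strategy matches the paper's: replace $l_i$ by the reduced square system $\tilde l_i$, invoke the geometric resolution algorithm of \cite{GiLeSa01} on the regular square system (using Proposition \ref{prop:dimlag} to guarantee reducedness), control the intermediate degrees via a multi-homogeneous B\'ezout estimate exploiting the trilinear block structure, and sum over the $n$ (not $n-1$) calls. Two things are worth flagging.

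First, the degree bound $\delta \le \binom{n+m}{m}^3$ is not a routine ``careful combinatorial count'' that you can wave at; in the paper it is an independent lemma (Lemma \ref{lemma:compl:RUR}) whose proof requires (i) invoking the multi-homogeneous degree bound from \cite[Proposition 10.1.1]{SaSc13} to reduce each intermediate degree $\deg\setW_{i,j}$ to a quantity $\Delta(m,n;t)$ defined as the coefficient sum of a product of binomials taken modulo a power-ideal $(s_1^{n+1},s_2^m,s_3^m)$, (ii) writing $\Delta(m,n;t)$ explicitly as triple sums over index sets $\mathcal F_t$ in three separate ranges of $t$, and (iii) bounding each case by $\binom{n+m}{n}$, $\binom{n+m}{n}^2$, $\binom{n+m}{n}^3$ respectively using the identity $\binom{n+m}{n}^k = \sum_{i_1,\dots,i_k}\prod_j \binom{m}{i_j}\binom{n}{i_j}$. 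Leaving this as ``an intermediate reformulation in terms of lattice paths... will likely be convenient'' is precisely where the proof is incomplete; as written, the proposition is unproved.

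Second, a smaller point: you attribute the extra $nm^2$ factor in the final bound to the cost of propagating coordinates through {\sf Image}, {\sf Union}, {\sf Lift}, {\sf Project}. In the paper those auxiliary routines cost $\softO((n+m)\delta^2)$ and $\softO((n+m)^2\delta+(n+m)^3)$ and are strictly dominated. The $nm^2$ enters already at the level of a single geometric-resolution call: the complexity formula of \cite[Theorem 1]{GiLeSa01} is $\softO(\tilde n^2(\tilde o + \tilde n^3)\delta^2)$ with $\tilde o$ the evaluation complexity of $\tilde l$, and the paper takes $\tilde o = \softO(nm^2)$ rather than your $L_i = O((n+m)^2)$. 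After multiplying by $n$ calls, this gives $\softO(n(n+m)^2(nm^2+(n+m)^3)\delta^2) \le \softO(n^2m^2(n+m)^5\delta^2)$. Your bookkeeping happens to land on the same final exponents, but the reasoning is not the paper's, and for fixed $m$ and growing $n$ your intermediate bound $\softO(n\,(n+m)^5\delta^2)$ is actually tighter than what you then inflate it to.

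Also minor: the $\X$-block in $\tilde l_i$ consists of all $n-i$ remaining $\X$-variables, not $(\X_2,\dots,\X_{n-i})$; the removal of the first coordinate happens in the truncated Jacobian, not in the variable block.
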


Assume that $\A$ satisfies $\sfG$ and that $\M$, $u$ and $\v$ lies
in the non-empty Zariski open sets defined in Propositions
\ref{prop:dimlag} and \ref{prop:closedness}.

\begin{lemma}\label{lemma:complexity:lagrange}
  Under the above notations and assumptions, there exists a
  probabilistic algorithm which, on input $l_i$, computes a
  rational parametrization of the complex solution set of it within
$$\softO\left (n^2m^2(n+m)^5\delta^2\right )$$
arithmetic operations in $\QQ$ with $\delta \leq
{{n+m}\choose{m}}^3$.
\end{lemma}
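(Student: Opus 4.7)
The plan is to invoke the geometric resolution algorithm of Giusti--Lecerf--Salvy \cite{GiLeSa01}, whose complexity is (essentially) quadratic in the maximum geometric degree of the intermediate cuts of a reduction. To this end, I would first replace $l_i$ by the equivalent square system $\tilde l_i = (\tilde f_{i,1}, \ldots, \tilde f_{i,m}, \tilde g_{i,m+1}, \ldots, \tilde g_{i,n+2m-i-2})$ of $n+2m-i-2$ polynomials in $n+2m-i-2$ variables $(\X_1,\ldots,\X_{n-i},\Y_1,\ldots,\Y_{m-1},\Z_1,\ldots,\Z_{m-1})$ obtained by eliminating $\Y_m$ through $f_{i,m+1}$ and re-expressing the left-kernel conditions of $\jac_1\f_i$ in terms of $\jac_1\tilde\f_i$. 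Since $\sfH$ is in force, Proposition \ref{prop:dimlag} gives that $\jac\,\tilde l_i$ has maximal rank at every point of its zero set, so $\tilde l_i$ can be split, after a generic triangular rearrangement, into a reduction sequence in the sense of \cite{GiLeSa01}.

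Next, I would apply \cite[Theorem~1]{GiLeSa01} to $\tilde l_i$. Its statement gives a rational parametrization in $\softO(N \, L \, \delta_i^2)$ arithmetic operations, where $N = n+2m-i-2$ is the number of equations, $\delta_i$ is the maximum of the geometric degrees $\deg\setW_{i,j}$ of the intermediate cuts of the reduction, and $L$ is the length of a straight-line program evaluating $\tilde l_i$ together with its Jacobian. Because each polynomial in $\tilde l_i$ is bilinear in two of the three variable blocks of sizes $n-i$, $m-1$, $m-1$, each entry of $\tilde l_i$ and of $\jac\,\tilde l_i$ is evaluable by an arithmetic circuit of size $O(nm)$, so the whole system plus Jacobian is evaluable by an SLP of length $L = O(N^2 \cdot nm) = O(n m (n+m)^2)$. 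Plugging in, the single call to \cite{GiLeSa01} costs $\softO(nm(n+m)^3\,\delta_i^2)$; an additional factor from the cost of handling each of the $N$ steps of the reduction (inversion of the Jacobian sub-block and univariate operations modulo the current parametrization of degree at most $\delta_i$, each of cost $\softO((n+m)^2 \delta_i)$) brings the complexity to the claimed bound $\softO(n^2m^2(n+m)^5\delta_i^2)$.

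It remains to bound $\delta_i$. I would use a multi-homogeneous B\'ezout estimate \`a la \cite{FauSafSpa11,FauSafSpa13}: the system $\tilde l_i$ consists of $m$ polynomials of tri-degree $(1,1,0)$, $m-1$ polynomials of tri-degree $(1,0,1)$ and $n-1$ polynomials of tri-degree $(0,1,1)$ in the blocks $\X,\Y,\Z$. The generic geometric degree of every cut $\setW_{i,j}$ is then bounded by the corresponding coefficient of $a^{n-i}b^{m-1}c^{m-1}$ in the expansion of $(a+b)^{m}(a+c)^{m-1}(b+c)^{n-1}$, which in turn is majorized crudely by $\binom{n+m}{m}^3$. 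Taking the maximum over $i,j$ gives $\delta \leq \binom{n+m}{m}^3$.

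The main technical obstacle will be the third step: verifying that the reduction sequence defined by reordering $\tilde l_i$ remains regular up to the top cut (so that the geometric resolution algorithm is effectively applicable at each intermediate step), and keeping track of the precise factors coming from SLP evaluation of the full $(n+2m-i-2)$-square Jacobian. The regularity part is handled by using Proposition~\ref{prop:dimlag} and the genericity of $\M^{(i)}, u^{(i)}, v^{(i)}$ to ensure that $\setW_{i,j}$ is equidimensional of the expected codimension $j$; the SLP bookkeeping is routine once the bilinear structure of the blocks is exploited.
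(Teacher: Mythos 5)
Your overall strategy is the same as the paper's: pass to the equivalent square system $\tilde l_i$, apply the geometric resolution algorithm of Giusti--Lecerf--Salvy (GiLeSa01) to the reduced regular sequence, and control the intermediate degrees through multi-linear B\'ezout bounds. However, there are three concrete gaps in the details.

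First, you mis-state the complexity of GiLeSa01. Their Theorem~1 gives a bound of the form $\softO\bigl(\tilde n^2(\tilde o + \tilde n^3)\delta^2\bigr)$, where $\tilde n = n+2m-2$ is the number of variables and $\tilde o$ is the cost of a straight-line program evaluating the system \emph{alone} (the term $\tilde n^5$ already accounts for the internal linear algebra, including Jacobian manipulation). Your formula $\softO(N L \delta_i^2)$ with $L$ including the Jacobian evaluation is not what the theorem says, and the two expressions do not coincide. Plugging $\tilde o = \softO(nm^2)$ into the correct formula gives $\softO\bigl((n+m)^2(nm^2+(n+m)^3)\delta^2\bigr)$, which is then loosened to the lemma's $\softO(n^2m^2(n+m)^5\delta^2)$.

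Second, your "additional factor" step is not a derivation. You claim the $N$ steps of the reduction, each costing $\softO((n+m)^2\delta_i)$, bring $\softO(nm(n+m)^3\delta_i^2)$ up to $\softO(n^2m^2(n+m)^5\delta_i^2)$, but $N\cdot(n+m)^2\delta_i = (n+m)^3\delta_i$ is an additive contribution dominated by your starting term, not a multiplicative factor of $nm(n+m)^2$. The GiLeSa01 complexity already internalizes the cost of each reduction step, so appending an extra per-step cost either double-counts or is simply incoherent arithmetic; the claimed bound does not follow from your intermediate estimate.

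Third, your degree bound only handles the last cut. The multi-homogeneous B\'ezout count given by a single coefficient such as that of $a^{n-i}b^{m-1}c^{m-1}$ in $(a+b)^{m}(a+c)^{m-1}(b+c)^{n-1}$ controls a zero-dimensional intersection. The intermediate sets $\setW_{i,j}$ for $j < n+2m-i-2$ are positive-dimensional, and their degrees are bounded (following \cite[Proposition 10.1.1]{SaSc13}) by the \emph{sum} of coefficients of $(s_1+s_2)^{j_1}(s_1+s_3)^{j_2}(s_3+s_2)^{j_3}$ modulo the ideal $(s_1^{n+1}, s_2^{m}, s_3^{m})$, with exponents depending on how many of the three types of bilinear equations are present at step $j$. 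One must check the inequality $\Delta(m,n;t)\le\binom{n+m}{m}^3$ across all $t$, not just the final one. The conclusion is correct, but you have not actually verified it.
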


\begin{proof}[of Proposition \ref{prop:complexity:RealDet}]
  Through its recursive calls, the algorithm {\sf RealDetRec}
  computes rational parametrizations of the solution sets of the Lagrange
  systems $l_0, \ldots, l_{n-1}$. 

  Lemma \ref{lemma:complexity:lagrange} shows that these computations
  are done within
  \[
  \softO\left ((n+m)^2(nm^2+(n+m)^3)\delta^2\right )
  \]
  arithmetic operations in $\QQ$ with $\delta \leq
  {{n+m}\choose{m}}^3$. Since there are $n$ Lagrange systems to solve,
  all these parametrizations are computed within
  $$ \softO
  \left(n^2m^2(n+m)^5 {\delta}^2\right) $$ arithmetic operations in
  $\QQ$. Note that in all systems $l_0, \ldots, l_{n-1}$ the number of variables
  is bounded by $n+2m+1$ and
  the cardinality of their solution set is bounded by $\delta$.

  Following \cite[Lemma 10.1.3]{SaSc13}, the call to the routine {\sf
    Project} at Step \ref{step:3} requires at most
  $\softO((n+m)\delta^2)$ arithmetic operations in $\QQ$.

  Next, by \cite[Lemma 10.1.1 and Lemma 10.1.3]{SaSc13}, the calls to
  the routines {\sf Image} and {\sf Union} and in
  Step \ref{step:6} require respectively at most
  $\softO((n+m)^2\delta+(n+m)^3)$ and $\softO((n+m)\delta^2)$ arithmetic
  operations in $\QQ$.  Summing up all these complexity estimates
  yields to the announced complexity bounds.
\end{proof}

\begin{proof}[of Lemma \ref{lemma:complexity:lagrange}]
  It is sufficient to describe the proof for $l=l_0$ only.
  We use the geometric resolution algorithm given in \cite{GiLeSa01}
  to compute a rational parametrization of the complex solution set of
  the system $\tilde l$ obtained following the construction
  in Paragraph \ref{ssec:algo:complexity}. Note that since $\sfH$
  holds by assumption, we deduce that $\tilde l$ is a reduced
  regular system, in the sense defined in the introduction of \cite{GiLeSa01}.

  Note that all polynomials of $\tilde l$ have degree $\leq
  2$ and that evaluating $\tilde l$ requires $\softO(nm^2)$
  arithmetic operations.

  Thus, one can apply \cite[Theorem 1]{GiLeSa01}. When $\tilde l$
   is a { reduced regular sequence}, it states that one can
  compute a rational parametrization of the complex solution set of
  $\tilde l$ in probabilistic time $$\softO(\tilde{n}^2(\tilde{o}+\tilde{n}^3)\delta^2)$$
where
\begin{itemize}
\item $\tilde{n}=n+2m-2$ is the total number of variables involved in $\tilde l$, 
\item $\tilde{o}$ is the complexity of evaluating $\tilde l$, 
\item and $\delta$ is the quantity introduced in Paragraph \ref{ssec:algo:complexity}.
\end{itemize}
We obtain that one can compute a rational parametrization of the
complex solution set of $\tilde l$ in probabilistic time
$$\softO((n+m)^2(nm^2+(n+m)^3)\delta^2).$$
Our conclusion follows and the bound on $\delta$ is proved in the
following lemma.
\end{proof}

\begin{lemma}\label{lemma:compl:RUR}
Under the above notations and assumptions the following inequality holds: 
$$
\delta\leq {{n+m}\choose {m}}^3. 
$$
\end{lemma}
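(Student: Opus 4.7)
The plan is to exploit the tri-linear structure of the Lagrange system $\tilde l_i$ described in Subsection~\ref{ssec:algo:complexity}. I would partition the $n+2m-2$ variables into three blocks $X=(x_1,\ldots,x_n)$, $Y=(y_1,\ldots,y_{m-1})$, $Z=(z_1,\ldots,z_{m-1})$ of sizes $n$, $m-1$, $m-1$. With respect to this partition, the polynomials of $\tilde l_i$ split into three bilinear classes of multi-degrees $(1,1,0)$, $(1,0,1)$ and $(0,1,1)$, with cardinalities $m$, $m-1$ and $n-1$ respectively, as highlighted at the end of that subsection.

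First I would observe that under assumption $\sfH$ and by Proposition~\ref{prop:dimlag}, every prefix subsystem of $\tilde l_i$ has Jacobian of maximal rank on its zero locus, so each intermediate variety $\mathcal{W}_{i,j}$ is a complete intersection of codimension $j$. Bézout's inequality applied inductively along the chain $\mathcal{W}_{i,1}\supset\mathcal{W}_{i,2}\supset\cdots$ shows that $\deg\mathcal{W}_{i,j}$ is non-decreasing in $j$; consequently $\delta_i=\deg\zeroset{\tilde l_i}$, reducing the problem to bounding the degree of the full zero-dimensional complete intersection.

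Next I would apply the multi-homogeneous Bézout theorem with respect to this three-block grading: the degree of $\zeroset{\tilde l_i}$ is at most the coefficient of $T_1^n T_2^{m-1} T_3^{m-1}$ in the polynomial
\begin{equation*}
(T_1+T_2)^m\,(T_1+T_3)^{m-1}\,(T_2+T_3)^{n-1}.
\end{equation*}
Expanding via the binomial theorem and matching exponents, this coefficient equals the combinatorial sum
\begin{equation*}
N\;=\;\sum_{a}\binom{m}{a}\binom{m-1}{n-a}\binom{n-1}{a-1}.
\end{equation*}

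The hard part is bounding $N$ by $\binom{n+m}{m}^3$. A naive pointwise estimate of the form $\binom{m}{a}\leq\binom{n+m}{m}$ is false in general (for instance when $m$ greatly exceeds $n$), so the three-block shape of the sum must be used non-trivially; in particular the crude total-degree Bézout bound $2^{n+2m-2}$ already exceeds $\binom{n+m}{m}^3$ for $n$ small. My approach is to apply the Chu--Vandermonde identity to the pair $\binom{m}{a}\binom{m-1}{n-a}$, whose $a$-summed value is $\binom{2m-1}{n}$, and then absorb the remaining factor $\binom{n-1}{a-1}$ by a second Chu--Vandermonde convolution. This reorganises $N$ as a product of three binomial coefficients, each of which is bounded by $\binom{n+m}{m}$ through the standard monotonicity of binomials in the upper argument and with respect to moving the lower argument towards the middle (within the range where the expressions remain non-zero). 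Combining these three bounds yields $N\leq\binom{n+m}{m}^3$, and thus $\delta_i\leq\binom{n+m}{m}^3$ for every $i$, giving the announced bound on $\delta$.
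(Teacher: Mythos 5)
Your multi-homogeneous Bézout computation for the \emph{final} zero-dimensional system is correct (it recovers the paper's explicit formula $\Delta(m,n;n+2m-2)=\sum_{i=0}^{m-1}\binom{m}{n-i}\binom{n-1}{i}\binom{m-1}{i}$), but the proof has two genuine gaps before and after that point.

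First, you reduce $\delta_i$ to $\deg\zeroset{\tilde l_i}$ by asserting that ``B\'ezout's inequality applied inductively along the chain shows that $\deg\mathcal{W}_{i,j}$ is non-decreasing in $j$.'' B\'ezout only gives an \emph{upper} bound $\deg(V\cap H)\leq\deg V\cdot\deg H$, not a lower bound, and the intermediate multi-homogeneous B\'ezout quantities are in fact not monotone in $j$. Already for $m=3$, $n=4$ one computes $\Delta(3,4;5)=27$ while $\Delta(3,4;8)=15$: the intermediate bound exceeds the final one. Since $\delta_i$ is defined as the maximum of $\deg\mathcal{W}_{i,j}$ over \emph{all} $j$ (precisely because the geometric resolution algorithm of \cite{GiLeSa01} is sensitive to every intermediate degree, not just the last), one cannot discard the intermediate stages. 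The paper instead bounds $\Delta(m,n;t)$ in three regimes, according to how many of the bilinear blocks $(x,y)$, $(x,z)$, $(y,z)$ have been introduced, obtaining the bounds $\binom{n+m}{n}$, $\binom{n+m}{n}^2$, $\binom{n+m}{n}^3$ respectively; your argument only addresses the third regime.

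Second, the Chu--Vandermonde step does not go through. The identity $\sum_a\binom{m}{a}\binom{m-1}{n-a}=\binom{2m-1}{n}$ concerns the unweighted convolution, but in $N=\sum_a\binom{m}{a}\binom{m-1}{n-a}\binom{n-1}{a-1}$ the third factor also depends on $a$, so there is no ``second convolution'' that turns $N$ into a product of three binomials. Your observation that the naive pointwise bound $\binom{m}{a}\leq\binom{n+m}{m}$ is false is correct, but the fix proposed does not repair it. The paper instead exploits the Vandermonde expansion $\binom{n+m}{n}=\sum_i\binom{m}{i}\binom{n}{i}$: it enlarges the summation range of the triple sum, bounds each of $\binom{m}{i}$, $\binom{n-1}{j}$, $\binom{t-m-n+1}{\ell}$ term-by-term so that the enlarged sum factors as a triple product, and identifies each factor with one copy of $\sum_i\binom{m}{i}\binom{n}{i}=\binom{n+m}{n}$. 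That elementary but careful bookkeeping replaces the convolution step that your sketch leaves unfinished.
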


\begin{proof}
  To prove degree bounds on $\delta$, we take into account the
  multi-linear structure in $\X, \Y, \Z$ of the intermediate systems
\[
\tilde f_{i,1}, \ldots, \tilde f_{i, t}, \qquad \text{ for } 1\leq j \leq t
\]
and 
\[
\tilde f_{i,1}, \ldots, \tilde f_{i, m}, \tilde g_{i, m+1}, \ldots, \tilde g_{i, m+t} \qquad \text{ for } 1\leq t \leq n+2m-i-2.
\]
We define $\Delta(m,n;t)$ as follows:
\begin{itemize}
\item when $1\leq t\leq m$, $\Delta(m,n;t)$ is the sum of the coefficients of the
  polynomial $(s_1+s_2)^{t}$ modulo the ideal generated by $(s_1^{n+1}, s_2^{m})$;
\item when $m+1\leq t\leq n+m-1$, $\Delta(m,n;t)$ is the sum of the coefficients of the
  polynomial $(s_1+s_2)^{m}(s_1+s_3)^{t-m}$ modulo the ideal generated by $(s_1^{n+1}, s_2^{m}, s_3^{m})$; 
\item when $n+m\leq t\leq n+2m-2$, $\Delta(m,n;t)$ is the sum of the coefficients of the
  polynomial $(s_1+s_2)^{m}(s_1+s_3)^{n-1} (s_3+s_2)^{t-m-n+1}$ modulo the ideal
generated by $(s_1^{n+1}, s_2^{m}, s_3^{m})$. 
\end{itemize}

By \cite[Proposition 10.1.1]{SaSc13}, the degrees of {\it their
  components of highest dimension} is bounded by $\Delta(m,n;t)$.
Immediate computations show that the following holds:
$$
\Delta(m,n;t)=
\begin{cases}
  \sum_{i=0}^{\min(n,t)}\binom{t}{i} & t\in\{1,\dots,m\}, \\
  \sum_{(i,j)\in\mathcal{F}_t}\binom{m}{i}\binom{t-m}{j} & t\in\{m+1,\dots,n+m-1\}, \\
  \sum_{(i,j,\ell)\in\mathcal{F}_t}\binom{m}{i}\binom{n-1}{j}\binom{t-m-n+1}{\ell} & t\in\{n+m,\dots,n+2m-2\}.
\end{cases}
$$
for every $m$ and $n$, where:
$$
\mathcal{F}_t= \ \ 
 \begin{cases}
  (i,j) \in \{1,\dots,m\} \times \{0,\dots,n-1\}, \\
  1 \leq i \leq \min(m,n), \\
  \max(0,t-2m+1) \leq j \leq \min(t-m,i-1),
 \end{cases}
$$
if $t\in\{m+1,\dots,n+m-1\}$, and
$$
\mathcal{F}_t= \ \ 
 \begin{cases}
  (i,j,\ell) \in \{1,\dots,m\} \times \{0,\dots,n-1\} \times \{0,\dots,t-m-n+1\}, \\
  \max(0,t-2m+1) \leq j+\ell \leq n-1,\\
\max(1,t-2m+2) \leq i+\ell \leq \min(n,t-n+1).
 \end{cases}
$$
if $t\in\{n+m,\dots,n+2m-2\}$. Let us remark that relations defining $\mathcal{F}_{n+2m-2}$ become
 linear contraints, which yields the following equality for the case $t=n+2m-2$:
\begin{equation}\label{bound-complex}
\Delta(m,n;n+2m-2)=\sum_{i=0}^{m-1}\binom{m}{n-i}\binom{n-1}{i}\binom{m-1}{i}.
\end{equation}

One can easily check that for all $k \in \mathbb{N}$
$$
\binom{n+m}{n}^k = \sum_{i_1, \ldots, i_{k}=0}^{n}{\binom{m}{i_1}\binom{n}{i_1}\cdots\binom{m}{i_k}\binom{n}{i_k}}.
$$
Moreover, for all $m,n$ and for $t \in \{1, \ldots, m\}$,
$\Delta(m,n;t) \leq \Delta(m,n;t+1)$, and $\Delta(m,n;m)$ is bounded
by $\binom{n+m}{n}$ because of the previous formula.

Let $t \in \{m+1,\dots,n+m-1\}$. Then $\Delta(m,n;t) = \sum_{i=1}^{\min(m,n)}{a_i\binom{m}{i}}$ where
$$
a_i=\sum_{j:(i,j)\in\mathcal{F}_t}{\binom{t-m}{j}}=\sum_{j=\max(0,t-2m+1)}^{\min(t-m,i-1)}{\binom{t-m}{j}} \leq \sum_{j=0}^{n}{\binom{n}{i}\binom{m}{j}\binom{n}{j}}.
$$
and so $\Delta(m,n;t) \leq \binom{n+m}{n}^2$ for all $t \in \{m+1,\dots,n+m-1\}$.

Finally, for $t \in \{n+m,\dots,n+2m-2\}$, one gets
 $$\begin{array}{rcl}
 \Delta(m,n;t) &\leq &\sum_{i,j,\ell=0}^n\binom{m}{i}\binom{n-1}{j}\binom{t-m-n+1}{\ell}\\
& \leq &\sum_{i,j,\ell=0}^n \binom{m}{i}\binom{n}{j}\binom{m}{\ell}\\
& \leq &\binom{n+m}{n}^3.
 \end{array}
 $$
\end{proof}

\subsubsection{Complexity of {\sf Project}.}\label{subsub:comp}

According to \cite[Lemma 9.1.6]{SaSc13},
given a rational parametrization $\q$ defining a
zero-dimensional set $\setZ \subset \CC^N$, there exists a probabilistic
algorithm computing a rational parametrization $\q'$ of
the projection $\pi_i(\setZ)$ whose complexity is within
$\mathcal{O}^{\thicksim}(N\deg\:\q^2)$ operations.  We remark here
that $\deg\:\q$ is the cardinality of $\setZ$ provided that
$\q$ is square-free; if not, it is an upper bound. In the case of
$\setZ(\A\circ\M,u,v)$, we obtain from Lemma
\ref{lemma:compl:RUR} that $\deg\:\q \leq \binom{n+m}{n}^3$.

\begin{lemma}\label{lemma:compl:proj}
The complexity of {\sf Project} in ${\sf RealDetRec}$ is
$$
\mathcal{O}^{\thicksim}
\left(
(n+2m-2) \binom{n+m}{n}^6
\right).
$$
\begin{proof}
It follows from the bound for $\delta$ of Lemma \ref{lemma:compl:RUR} and from \cite[Lemma 9.1.6]{SaSc13}.
\end{proof}
\end{lemma}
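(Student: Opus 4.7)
The plan is to apply directly the complexity estimate from \cite[Lemma 9.1.6]{SaSc13} recalled immediately before the statement: given a rational parametrization $\q$ of a zero-dimensional set $\setZ\subset\CC^N$, one can compute a rational parametrization of $\pi_i(\setZ)$ in $\softO(N\,\deg\:\q^2)$ arithmetic operations. Hence the whole problem reduces to identifying correct values of $N$ and $\deg\:\q$ at the specific call site of {\sf Project} in Step \ref{step:3} of ${\sf RealDetRec}$.

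For the ambient dimension $N$, I would use the reformulation described in Section \ref{ssec:algo:complexity}: the normalization $u'(y,-1)=0$ allows us to eliminate one $\Y$-variable, and the normalization $v'z-1=0$ allows us to eliminate one $\Z$-variable, so the reduced Lagrange system $\tilde l$ presents the same zero-dimensional solution set in an ambient space of dimension $n+2(m-1)=n+2m-2$. This is the value of $N$ fed into the formula.

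For the degree bound, I would use that under $\sfH$ and by Proposition \ref{prop:dimlag} the set $\setZ(\A\circ\M,u,v)$ coincides with $\setW_{0,n+2m-2}$, whose degree is controlled by $\delta_0\leq\delta$. Lemma \ref{lemma:compl:RUR} above yields $\delta\leq\binom{n+m}{n}^3$, hence $\deg\:\q\leq\binom{n+m}{n}^3$ and therefore $\deg\:\q^2\leq\binom{n+m}{n}^6$. Plugging these two inputs into the formula of \cite[Lemma 9.1.6]{SaSc13} produces the claimed bound $\softO\bigl((n+2m-2)\binom{n+m}{n}^6\bigr)$.

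I do not anticipate a genuine obstacle; the only mildly delicate point is that $\deg\:\q$ equals the cardinality of $\setZ$ only when $q_{n+1}$ is square-free, but as recalled in Section \ref{ratpar}, $\deg\:\q$ is in any case an upper bound on that cardinality, which is all the complexity statement requires. Everything else is a direct instantiation of results established earlier in the paper.
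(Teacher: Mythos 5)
Your proposal is correct and follows essentially the same route as the paper: it instantiates the $\softO(N\,\deg\q^2)$ bound from \cite[Lemma 9.1.6]{SaSc13} with the ambient dimension $N=n+2m-2$ of the reduced Lagrange system $\tilde l$ and the degree bound $\deg\q\le\binom{n+m}{n}^3$ from Lemma~\ref{lemma:compl:RUR}. The paper states this proof in one sentence; you have merely spelled out the same two ingredients, including the correct observation that $\deg\q$ is an upper bound on the cardinality even when $q_{n+1}$ is not square-free.
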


\subsubsection{Complexity of {\sf Image, Union}.}\label{subsub:others}

By \cite[Lemma 9.1.1]{SaSc13}, given a rational parametrization $\q$
and a matrix $\M\in \GL(N,\QQ)$, there exists an algorithm
computing a rational parametrization $\q'$ such that $\setZ(\q')=\M^{-1}\setZ(\q)$ using
$\mathcal{O}^{\thicksim}(N^2\delta+N^3)$ operations. Moreover, by \cite[Lemma 9.1.3]{SaSc13}
if $\q_1,\q_2$ are rational parametrizations with degree sum bounded by $\delta$,
a rational parametrization of $\setZ(\q_1) \cup \setZ(\q_2)$ can be computed in $\mathcal{O}^{\thicksim}(N\delta^2)$
operations.

\begin{lemma}\label{lemma:compl:others}
The complexity of {\sf Image} and {\sf Union} in ${\sf RealDetRec}$ is
$$
\mathcal{O}^{\thicksim}
\left(
(n+2m-2)^2\binom{n+m}{n}^3+(n+2m-2)^3+(n+2m-2)\binom{n+m}{n}^6
\right).
$$
\begin{proof}
The proof of this fact follows straightforwardly from \cite[Lemma 9.1.1]{SaSc13},
\cite[Lemma 9.1.3]{SaSc13} and Lemma \ref{lemma:compl:RUR}.
\end{proof}
\end{lemma}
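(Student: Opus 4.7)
The plan is to put together the two cited black-box complexity estimates from \cite{SaSc13} with the degree bound of Lemma \ref{lemma:compl:RUR}. First I would identify the ambient dimension and the degree that govern both subroutines: inside \textsf{RealDetRec} every rational parametrization handled by \textsf{Image} and \textsf{Union} encodes a finite subset of an affine space of dimension at most $N = n+2m-2$ (this is the number of variables in the reduced Lagrange system $\tilde l_i$ used in the complexity analysis of Section \ref{ssec:algo:complexity}), and by Lemma \ref{lemma:compl:RUR} the degree of each such parametrization is bounded by $\delta \leq \binom{n+m}{n}^3$.

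Next I would apply \cite[Lemma 9.1.1]{SaSc13}, which, given a rational parametrization of a zero-dimensional subset of $\CC^N$ of degree $\delta$ and a matrix $\M \in \GL(N,\QQ)$, returns a rational parametrization of the pre-image in $\mathcal{O}^{\thicksim}(N^2 \delta + N^3)$ arithmetic operations. Substituting $N = n+2m-2$ and $\delta \leq \binom{n+m}{n}^3$ yields a cost of
$$
\mathcal{O}^{\thicksim}\!\left((n+2m-2)^2\binom{n+m}{n}^3+(n+2m-2)^3\right)
$$
for the call to \textsf{Image} at Step \ref{step:6}.

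For \textsf{Union} I would invoke \cite[Lemma 9.1.3]{SaSc13}: given two rational parametrizations in $N$ variables whose degrees sum to at most $\delta$, a rational parametrization of their union is computed in $\mathcal{O}^{\thicksim}(N \delta^2)$ operations. The two inputs to \textsf{Union} at Step \ref{step:6} are the outputs of Steps \ref{step:3} and \ref{step:5}, both of which are rational parametrizations of zero-dimensional sets whose degrees are bounded by $\binom{n+m}{n}^3$ (by Lemma \ref{lemma:compl:RUR} applied to the current recursion level and inductively to the recursive call). Their sum is therefore absorbed into $\mathcal{O}\bigl(\binom{n+m}{n}^3\bigr)$, and the cost of \textsf{Union} becomes $\mathcal{O}^{\thicksim}\!\bigl((n+2m-2)\binom{n+m}{n}^6\bigr)$.

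Summing the two estimates gives exactly the announced bound. No serious obstacle is expected: the only point that requires a moment of care is checking that the degree bound of Lemma \ref{lemma:compl:RUR}, which was proved for the solution set of the Lagrange system $\tilde l$, also controls the two parametrizations fed into \textsf{Union}, namely the projection \textsf{P} and the lifted recursive output \textsf{Q}; this is immediate because projection and the trivial lift $x \mapsto (x,\fiber)$ both preserve cardinality and hence the degree of the underlying parametrization.
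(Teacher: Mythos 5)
Your proposal is correct and follows the same route as the paper: instantiate \cite[Lemma 9.1.1]{SaSc13} and \cite[Lemma 9.1.3]{SaSc13} with $N \leq n+2m-2$ and the degree bound $\delta \leq \binom{n+m}{n}^3$ from Lemma~\ref{lemma:compl:RUR}, then add the two estimates. The extra remark about the parametrizations \textsf{P} and \textsf{Q} inheriting the degree bound through projection and lift is a correct (and slightly more explicit) version of what the paper leaves implicit.
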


\subsubsection{A bound on the degree of the output}\label{subsub:proj}
Let $A$ be a $n-$variate linear matrix of size $m$ and apply algorithm
${\sf RealDet}$ to $A$. Recall that the number $\Delta(m,n,n+2m-2)$ computed
in \eqref{bound-complex} is a bound on the number of complex solutions computed
by the first call of ${\sf RealDetRec}$.

The following result, whose proof is straightforward, counts the maximum number
of complex solutions computed by ${\sf RealDet}$.

\begin{lemma}
The number of complex solutions computed by ${\sf RealDet}$ with input a
linear matrix $A$ satisfying Assumption ${\sfG}$, is upper-bounded by the number
$$
b(m,n) = \sum_{j=1}^n\Delta(m,j,j+2m-2) = \sum_{j=1}^n\sum_{i=0}^{m-1}\binom{m}{j-i}\binom{j-1}{i}\binom{m-1}{i}.
$$
\end{lemma}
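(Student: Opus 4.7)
The plan is to follow the recursive structure of \textsf{RealDet}, which invokes \textsf{RealDetRec} on the input $\A$ with $n$ variables, which in turn recursively calls itself on inputs in $n-1, n-2, \ldots, 1$ variables (a total of $n$ calls, one terminal and $n-1$ recursive, as noted in Section~\ref{ssec:algo:correctness}). At each depth $j \in \{1, \ldots, n\}$ of the recursion, the set of complex points returned at that level is the union of the points produced by the recursive call (corresponding to smaller depth) with the points extracted from the Lagrange system $l_j$ via \textsf{Project} applied to $\mathrm{RatPar}(l(\A\circ\M,u,v))$. Thus, by an immediate induction on the recursion depth, the total number of complex points in the output rational parametrization is bounded by the sum, over $j=1,\ldots,n$, of the number of complex solutions of the Lagrange system processed at depth $j$.

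Next, I would quote the degree bound from the proof of Lemma~\ref{lemma:compl:RUR}: at depth $j$, the Lagrange system $\tilde l$ has $n'=j$ base variables and $m$ is unchanged, so the relevant quantity is $\Delta(m,j;t)$ with $t = j + 2m - 2$, and the algebraic set $\setW_{\cdot, n+2m-i-2}$ coincides (under $\sfH$, by Proposition~\ref{prop:dimlag}) with the zero set of $\tilde l$. Hence the cardinality of the complex solution set at depth $j$ is bounded by $\Delta(m,j; j + 2m - 2)$, and summing yields
\[
b(m,n) \;\leq\; \sum_{j=1}^{n} \Delta(m,j; j+2m-2).
\]

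Finally, I would apply the explicit evaluation already established in equation~(\ref{bound-complex}), namely
\[
\Delta(m,n;n+2m-2) \;=\; \sum_{i=0}^{m-1}\binom{m}{n-i}\binom{n-1}{i}\binom{m-1}{i},
\]
with the formal substitution $n \mapsto j$ (the derivation of (\ref{bound-complex}) only used that the defining constraints on $\mathcal{F}_{n+2m-2}$ collapse to linear equalities, which is a purely combinatorial fact independent of the name of the variable). Plugging this into the sum produces the claimed closed form for $b(m,n)$.

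The statement is labelled as having a straightforward proof, and indeed the only genuine step is the bookkeeping of which Lagrange system is solved at each recursive depth and the reuse of~(\ref{bound-complex}); the minor pitfall to be careful about is to verify that no other source of complex points contributes to the output beyond the projected solutions of the Lagrange systems at each level (in particular, the terminal case $n=1$ returns the parametrization $(1,t,\det\A(t))$ whose degree $m = \Delta(m,1;2m-1)$ still fits in the summand at $j=1$), and that the operations \textsf{Lift}, \textsf{Image}, and \textsf{Union} do not increase the cardinality of the represented set.
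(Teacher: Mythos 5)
Your proof is correct and fills in exactly the details the paper omits (the paper only remarks that the proof is ``straightforward''). You correctly trace the $n$ levels of recursion, observing that the output at each level is the {\sf Union} of the {\sf Project}ed Lagrange solutions at that level with the {\sf Lift}ed recursive output (followed by a bijective {\sf Image}), none of which increases cardinality; you correctly invoke the per-level degree bound $\Delta(m,j;j+2m-2)$ coming from the multilinear B\'ezout argument in Lemma~\ref{lemma:compl:RUR} with $j=n-i$ base variables at recursion depth $i$; and you correctly check that the terminal case $j=1$, which returns $(1,t,\det\A(t))$ of degree at most $m$, is covered by $\Delta(m,1;2m-1)=m$, so the sum can legitimately start at $j=1$. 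Substituting \eqref{bound-complex} then gives the stated closed form. This is the intended argument.
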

We remark the following facts:
\begin{itemize}
\item $\Delta(m,j,j+2m-2)=0$ if $j \geq 2m$;
\item if $m=m_0$ is fixed, $n \mapsto b(m_0,n)$ is constant if $n \geq 2m_0$.
\end{itemize}

\section{Regularity properties of the incidence variety}\label{sec:smoothness}

The aim of this section is to prove Proposition \ref{prop:smoothness}.
We start with $\sfG_1$. 

Indeed, the genericity of $\sfG_1$ is already established in the proof
of \cite[Theorem 5.50]{bpt13} using Bertini's theorem (see Subsection
\ref{remark:sing}). We denote by $\zarA_1 \subset \CC^{m^2(n+1)}$ a
non-empty Zariski open set such that for any $\A\in \zarA$, $\A$
satisfies $\sfG_1$.

Moreover, Sard's theorem implies that for a generic $\fiber$, a point
in $\setD_\fiber$ is regular if and only if it is regular in
$\setD$. We deduce that at any singular point of $\setD_\fiber$ the
rank deficiency of $A_\fiber$ is greater than $1$. We denote by
$\zarfiber_1$ a non-empty Zariski open set such that for any $\fiber
\in \zarfiber$, $\A_\fiber$ satisfies $\sfG_1$.

We focus on $\sfG_2$ which requires more technical proofs. 
To identify the linear map $\X\mapsto\A(\X)=\A_0+\X_1\A_1+\cdots+\X_n\A_n$
with a point in $\CC^{m^2(n+1)}$, we
denote by $\fraka_{l,i,j}$ the entry of matrix $\A_l$ at row $i$ and column $j$,
for $l=0,1,\ldots,n$ and $i,j=1,\ldots,m$.

\begin{proof}[of the first point of Proposition \ref{prop:smoothness}]
Consider the polynomial map
\[
  \begin{array}{lrcc}
  p : &  \CC^{n+m}\times\CC^{m^2(n+1)} & \longrightarrow & \CC^{m+1} \\
            &  (\x,\y,\frakA) & \longmapsto& \f(\frakA, u)
  \end{array}
\]
and, for a given $\A\in \CC^{m^2(n+1)}$, the induced map  
\[
  \begin{array}{lrcc}
  p_\A: &  \CC^{n+m} & \longrightarrow & \CC^{m+1} \\
            &  (\x,\y) & \longmapsto& p(\x,\y,\A).
  \end{array}
\]

Assume first that $\zeroset{p}$ is empty. This is equivalent to saying
that, for any $\A \in \CC^{m^2(n+1)}$, $\setV(\A,
u)=\zeroset{\f(\frakA,u)}$ is empty. By the Nullstellensatz
\cite[Chap. 8]{CLO}, this implies that for any $\A \in
\CC^{m^2(n+1)}$, the ideal $\ideal{\setV(\A,u)} = \langle \f(\frakA,u)
\rangle = \langle 1 \rangle$ is radical. We define $\zarA_2 =
\CC^{(n+1)m^2}$ and conclude by taking $\zarA=\zarA_1\cap \zarA_2$.

Assume that $\zeroset{p}$ is non-empty. We
prove below that there exists a non-empty Zariski open set $\zarA_2
\subset\CC^{m^2(n+1)}$ such that for any $\A \in \zarA_2$, the Jacobian
matrix $\jac\f(\A, u)$ has maximal rank at any
point in $\zeroset{p_{\A}}$. This is sufficient to establish the
requested property $\sfG_2$ since by the Jacobian criterion
\cite[Theorem 16.19]{Eisenbud95} this implies that 

\begin{itemize}
\item the ideal $\langle \f(\A, u) \rangle$ is radical; 
\item the algebraic set $\setV(\A, u)$ is either empty
or smooth and equidimensional of co-dimen\-sion $m+1$ in $\CC^{n+m}$. 
\end{itemize}

To prove the existence of the aforementioned non-empty Zariski open
set $\zarA_2$, we first need to prove that $0$ is a regular value
of $p$, i.e. at any point of the fiber $\zeroset{p}$ the
Jacobian matrix $\jac p$ with respect to variables $\X$, $\Y$ and ${\fraka}_{\ell, i,j}$
has maximal rank.  Take $(\x, \y, \A) \in \zeroset{p}$. It suffices to prove
that there exists a maximal minor of $\jac\f(\frakA,u)$
which is not zero at $(\x, \y, \A)$.

Remark that, since $\y$ is a solution of the equation $u'(y, -1) = 0$
and $u_{m+1} \neq 0$, there exists $1 \leq s \leq m$ such that
$\y_s \neq 0$. Moreover, since $(u_1, \ldots, u_m) \neq 0$, there exists
$1 \leq \ell \leq m$ such that $u_\ell\neq 0$. Now consider the
submatrix of $Df(A, u)$ obtained by selecting
\begin{itemize}
\item the partial derivatives with respect to $\Y_\ell$ where $\ell$ is as above;
\item the partial derivatives with respect to ${\fraka}_{0, r, s}$ for all $1 \leq r \leq m$
and for $s$ as above.
\end{itemize}
Checking that this submatrix has maximal rank at $(\x, \y, \A)$ is
straightforward since
\begin{itemize}
\item the partial derivatives of the entries of ${\frakA}(\X)\Y$ with
  respect to ${\fraka}_{0, r, s}$ for $1\leq r \leq m$ is the
  diagonal matrix with $\y_s \neq 0$ on the diagonal;
\item the partial derivative of the polynomial $u'(y,-1)$
  with respect to $\Y_\ell$ is $u_\ell\neq 0$,
  while the partial derivatives of that polynomial with respect to
  ${\fraka}_{0, r, s}$ are $0$.
\end{itemize}
Thus, up to reordering the columns of this submatrix, it is triangular
with non-zero entries on the diagonal.  Finally, we conclude that $0$
is a regular value of $p$. By Thom's Algebraic Weak Transversality
theorem \cite[Section 4.2]{SaSc13} there exists a non-empty Zariski
open set $\zarA_2 \subset \CC^{m^2(n+1)}$ such that, for every $\A \in
\zarA_2$, $0$ is a regular value of the map $p_{\A}$. Taking again
$\zarA=\zarA_1\cap\zarA_2$ concludes the proof.
\end{proof}

\begin{proof}[of the second point of Proposition \ref{prop:smoothness}]
Let $\A \in \zarA$ and consider the map
\[
  \begin{array}{lrcc}
  \pi_1 : &  \setV(\A, u) & \to & \CC \\
            &  (\x,\y) & \mapsto & \x_1.
  \end{array}
\]
which is the restriction  to $\setV(\A, u)$ of the projection on the first variable.
Since $\A \in \zarA$, the variety $\setV(\A, u)$ is
either empty or smooth and equidimensional and by Sard's Lemma (\cite[Section
4.2]{SaSc13}) the image by $\pi_1$ of the set of critical points of
$\pi_1$ is contained in an algebraic hypersurface of $\CC$ (that is, a finite set).
This implies that there exists a non-empty Zariski open set $\zarfiber_2 \subset \CC$
such that if $\fiber_2 \in \zarfiber$, at least one of the following
fact holds:
\begin{itemize}
\item the set $\pi_1^{-1}(\fiber) = \{(\x, \y) \in \setV(\A, u)
  \mid \x_1 = \fiber\}$ is empty: this fact implies that the system $\f_{\fiber}(\A, u)$
  defines the empty set, and that $\left \langle \f_\fiber(\A, u)\rangle
    = \langle 1 \right \rangle $, which is a radical ideal;
\item for all $(\x, \y) \in \pi_1^{-1}(\fiber)$, $(\x, \y)$ is not a critical point
  of $\pi_1$; this fact implies that the Jacobian matrix of $\f_\fiber(\A, u)$
  has full rank at each point $(\x, \y)$ in the zero set of $\f_\fiber(\A, u)$,
  and so by the Jacobian criterion \cite[Theorem 16.19]{Eisenbud95} that
  $\f_\fiber(\A, u)$ defines a radical ideal and its zero set
  is a smooth equidimensional algebraic set of codimension $m+2$ in $\CC^{n+m}$.
\end{itemize}
By this, we conclude that if $\fiber \in \zarfiber_2$, the system
$\f_{\fiber}(\A, u)$ satisfies $\sfG_2$ as requested. We conclude by
taking $\zarfiber=\zarfiber_1\cap \zarfiber_2$.
\end{proof}

\begin{example}
Consider the linear matrix
$$
\A(\X)=
\left(\begin{array}{ccc}
1    & \X_1 & \X_2 \\
\X_1 &    1 & \X_3 \\
\X_2 & \X_3 &    1
\end{array}\right)
$$
whose real determinantal variety is the Cayley cubic surface
with its four singular points $(x_1,x_2,x_3) \in \{(1,1,1),(1,-1,-1),(-1,1,-1),(-1,-1,1)\}$,
see Example 2 and Figure 3 in \cite{sturmfelsnieranestad}.
When evaluated at these points, $\A$ has rank one.
The following \texttt{Macaulay2} code shows that the incidence variety is smooth. \smallskip

\begin{verbatim}
MyRand = () -> (((-1)^(random(ZZ)))*(random(QQ)))
R = QQ[x_1,x_2,x_3]
A = matrix{{1,x_1,x_2},{x_1,1,x_3},{x_2,x_3,1}}
D = ideal det A
dim D, degree D
SingD = ideal singularLocus D
dim SingD, degree SingD
S = QQ[x_1,x_2,x_3,y_1,y_2,y_3]
Y = matrix{{y_1},{y_2},{y_3}}
V = ideal(sub(A,S)*Y) + ideal(1-sum(3,i->MyRand()*(y_(i+1))))
dim V, degree V
SingV = ideal singularLocus V
dim SingV, degree SingV
\end{verbatim}

\noindent The incidence variety in this example has dimension $2$ and degree $7$.

\end{example}

\begin{figure}[!ht]
\centering
\includegraphics[width=0.7\textwidth]{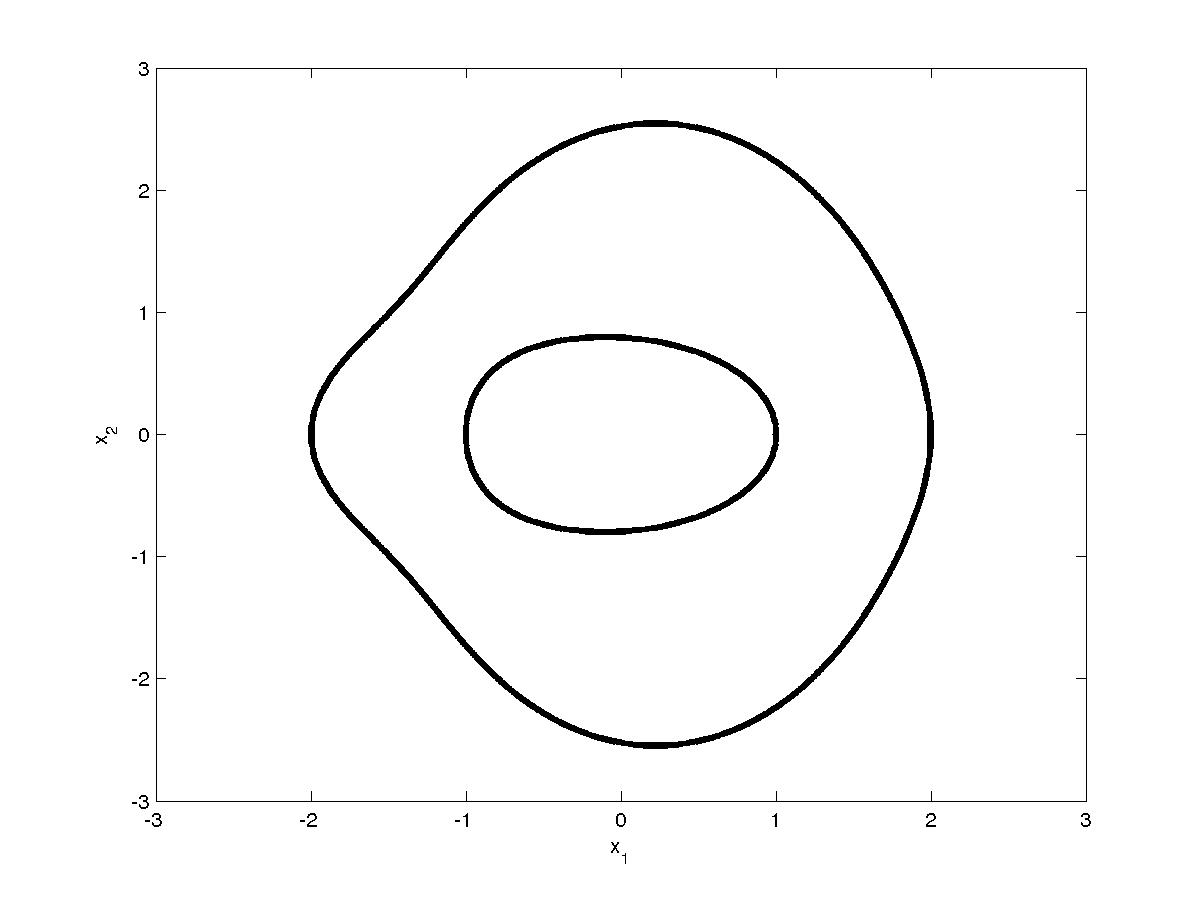}
\caption{The smooth quartic curve of Example \ref{ex:quartic} with its two nested ovals.\label{fig:quartic}}
\end{figure}      

\begin{example}\label{ex:quartic}
Consider the linear matrix
$$
\A(\X)=
\left(\begin{array}{cccc}
1+\X_1  & \X_2   & 0 & 0 \\
  \X_2  & 1-\X_1 & \X_2 & 0 \\
     0  & \X_2   & 2+\X_1 & \X_2 \\
     0  & 0      & \X_2 & 2-\X_1
\end{array}\right)
$$
whose real determinantal variety is a smooth quartic curve, the union of two
nested ovals, see Figure \ref{fig:quartic}.
Here the incidence variety is a smooth variety of dimension  $6$ and degree $10$.   
\end{example}

\section{Dimension properties of Lagrange systems} \label{sec:dimension}

The goal of this section is to prove Proposition \ref{prop:dimlag}.
First we provide a local description of the incidence variety $\setV(A
\circ M, u)$ depending on the rank of the matrix $A(x)$, which induces
a local description of the Lagrange system $\setZ(A \circ M, u, v)$.
We actually use this local description to prove that, generically, the
solutions $(x, y, z)$ of the Lagrange system are such that the rank of
$A(x)$ is $m-1$. This property is used to exploit a local description
of the solutions to the Lagrange system which make easier the proof of 
Proposition \ref{prop:dimlag}. 

Then we give a proof of it in Section \ref{subsec:proof-prop}. Recall
that the hypotheses of Proposition \ref{prop:dimlag} includes the fact
that $A \in \zarA$, where these Zariski open sets
have been defined respectively in Proposition \ref{prop:smoothness}
and Section \ref{remark:sing}.

In the sequel, for $f\in \QQ[x]$, we denote by $\QQ[x]_f$ the localized
polynomial ring.

\subsection{Local description} \label{subsec:local-des}

Let $A(x) = A_0+x_1A_1+\ldots+x_nA_n \in \QQ[x]^{m \times m}$ be a linear matrix
on $(x_1, \ldots, x_n)$, and suppose that $N=N(x) \in \QQ[x]^{r \times r}$ is
the upper-left $r \times r$ submatrix of $A$:
$$
A = 
\left[
\begin{array}{cc}
  N  & Q \\
  P' & R \\
\end{array}
\right]
$$
with $Q \in \QQ[x]^{r \times (m-r)}$, $P' \in \QQ[x]^{(m-r) \times r}$ and $R \in \QQ[x]^{(m-r) \times (m-r)}$.
Consider the equations $Ay=0$ and $u'y-u_{m+1}=0$ defining the incidence variety. 

\begin{lemma} \label{lemma:local} Let $A,N,Q,P,R$ be as above, and $u
  \neq 0$. Then there exist $\{q_{i,j}\}_{1 \leq i \leq r, 1 \leq j
    \leq m-r} \subset \QQ[x]_{\det N}$ and $\{q'_{i,j}\}_{1 \leq i,j
    \leq m-r} \subset \QQ[x]_{\det N}$ such that the constructible set
  $\setV \cap \{(x,y) : \det N \neq 0\}$ is defined by the equations
\begin{align*}
y_i-q_{i,1}(x)y_{r+1}- \ldots - q_{i,m-r}(x)y_m &= 0 \qquad i=1, \ldots, r \\
q'_{i,1}(x)y_{r+1} + \ldots + q'_{i,m-r}(x)y_{m} &= 0 \qquad i=1, \ldots, m-r \\
u'y-u_{m+1} &= 0.
\end{align*}
\end{lemma}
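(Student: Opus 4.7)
The plan is to reduce the defining equations of the incidence variety on the open set $\{\det N \neq 0\}$ to a triangular form by performing a block Gaussian elimination, which is exactly a Schur complement computation over the localized ring $\QQ[x]_{\det N}$.

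First I would split the vector $y$ into its top part $y^{(1)} = (y_1,\ldots,y_r)$ and bottom part $y^{(2)} = (y_{r+1},\ldots,y_m)$, and rewrite the equation $A(x)y=0$ blockwise as the pair
\[
N y^{(1)} + Q y^{(2)} = 0, \qquad P' y^{(1)} + R y^{(2)} = 0.
\]
On the open set $\{\det N \neq 0\}$, the matrix $N$ is invertible over $\QQ[x]_{\det N}$: by the adjugate formula, $N^{-1} = (\det N)^{-1}\operatorname{adj}(N)$, whose entries are cofactors of $N$ divided by $\det N$ and therefore lie in $\QQ[x]_{\det N}$. Solving the first block gives
\[
y^{(1)} = -N^{-1} Q\, y^{(2)},
\]
and I define the polynomials $q_{i,j} \in \QQ[x]_{\det N}$ as the entries of the matrix $-N^{-1}Q$. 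This yields the first $r$ equations of the statement.

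Next I would substitute this expression into the second block, obtaining
\[
(R - P' N^{-1} Q)\, y^{(2)} = 0,
\]
and define the $q'_{i,j} \in \QQ[x]_{\det N}$ as the entries of the Schur complement $R - P' N^{-1} Q$, which again lies in $\QQ[x]_{\det N}^{(m-r)\times(m-r)}$. This gives the second block of $m-r$ equations; the last equation $u'y - u_{m+1} = 0$ is just the equation already present in the definition of $\setV$ and is left unchanged.

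Finally I would check that the transformed system cuts out the same constructible set as the original on $\{\det N \neq 0\}$. This amounts to noting that the block row-reduction performed above is invertible over $\QQ[x]_{\det N}$ (multiplication by the unimodular block matrix $\begin{pmatrix} I_r & 0 \\ -P' N^{-1} & I_{m-r} \end{pmatrix}$), so the new system is equivalent to $Ay = 0$ as an ideal in $\QQ[x,y]_{\det N}$, and adjoining the affine equation $u'y = u_{m+1}$ preserves equivalence. There is no serious obstacle here: the whole argument is a formal block-triangularization, and the only point to be careful about is keeping all manipulations inside the localized ring $\QQ[x]_{\det N}$ so that the $q_{i,j}$ and $q'_{i,j}$ are honest elements of this ring, as required by the statement.
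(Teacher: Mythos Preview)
Your proposal is correct and follows essentially the same approach as the paper: both perform a block Gaussian elimination over $\QQ[x]_{\det N}$, solving the first block for $y^{(1)}$ via $N^{-1}$ and substituting into the second to obtain the Schur complement $R - P'N^{-1}Q$, with the invertibility of the transformation guaranteed by a unimodular block matrix. The only cosmetic difference is that the paper writes the elimination via a transpose (working with $y'A'$ and right-multiplying by block matrices) and takes $q'_{i,j}$ to be the entries of $-Sch(N)$ rather than $Sch(N)$, which is immaterial to the statement.
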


\begin{proof}
Consider the polynomial equations $A y = 0$ with $y \in \CC^m$. Localizing at
$\det N \neq 0$, that is looking at these equations in the local ring
$\QQ[x,y]_{\det N}$, similarly to the proof of \cite[Proposition 3.2.7]{SaSc13},
one has that $A y = 0$ is equivalent to
$$
0' = y' \left[ \begin{array}{cc} N'  & P \\ Q' & R' \\ \end{array} \right] = 
y'
\left[ \begin{array}{cc} N'  & P \\ Q' & R' \\ \end{array} \right]
\left[ \begin{array}{cc} N'^{-1} & 0 \\ 0 & Id_{m-r} \\ \end{array} \right]
\left[ \begin{array}{cc} Id_{r}  & -P \\ 0 & Id_{m-r} \\ \end{array} \right]
$$
which equals
$$
0' = y'
\left[ \begin{array}{cc} Id_r  & 0 \\ Q'N'^{-1} & R'-Q'N'^{-1}P \\ \end{array} \right]
\qquad
\text{that is}
\qquad
0 = \left[ \begin{array}{cc} Id_r  & N^{-1}Q \\ 0 & Sch(N) \\ \end{array} \right]y
$$
where $Sch(N) = R-P'N^{-1}Q$ is the Schur complement of $N$ in $A$. Denoting
with $q_{i,j}$ the entries of $-N^{-1}Q$ and with $q'_{i,j}$ the entries of
$-Sch(N)$ ends the proof.
\end{proof}

We let $\widetilde{A}$ be the matrix 
$$\widetilde{A}=\left[
\begin{array}{cc}
  Id_{r}   & N^{-1}Q \\
  0_{m-r,r} & Sch(N) \\
\end{array}
\right]
$$
and 
$
p'=
\left[ p_1 \, \ldots \, p_m \right]' = 
\widetilde{A}
\left[ y_1 \, \ldots \, y_m \right]'
$
the new equations given by Lemma \ref{lemma:local}. One deduces
\begin{align*}
\left[ p_{1} \, \ldots \, p_r \right]'
& = \left[ y_{1} \, \ldots \, y_r \right]' +
(N^{-1}Q) \cdot \left[ y_{r+1} \, \ldots \, y_m \right]' \\
\left[ p_{r+1} \, \ldots \, p_m \right]'
& = Sch(N) \cdot \left[ y_{r+1} \, \ldots \, y_m \right]'.
\end{align*}
We deduce that modulo $\left \langle p_1, \ldots, p_r \right \rangle$ one can
express $y_1, \ldots, y_r$ as functions of $x$ and $y_{r+1}, \ldots, y_m$. By this, the
system defining $\setV$ in the open set defined by $\det N\neq 0$ can be re-written
in the local ring $\QQ[x,y]_{\det N}$ \begin{align*}
Sch(N) \cdot \left[ y_{r+1} \, \ldots \, y_m \right]' &= 0 \\
q(u, y_{r+1}, \ldots, y_m) &= 0
\end{align*}
where $q$ is a linear form on $y_i$, $i=r+1, \ldots, m$, parametrized by $u$,
and whose coefficients belong to $\QQ[x,y]_{\det N}$, obtained by substituting
in $u'y-1=0$ the expressions for $y_1, \ldots, y_r$. Since $q$ is linear w.r.t.
$y_{r+1}, \ldots, y_m$, one deduces that one can eliminate another variable
among $y_{r+1}, \ldots, y_m$, say w.l.o.g. $y_{r+1}=\ell(x,y_{r+2} \, \ldots \, y_m)$,
in the localization of $\QQ[x,y]_{\det N}$ to $e(x) \neq 0$, where $e(x) \in
\QQ[x,y]_{\det N}$ is the coefficient of $y_{r+1}$ in the polynomial $q$.
Finally, the system defining $\setV$, in the local ring $\QQ[x,y]_{(\det N) \cdot e}$
can be re-written as
$$
Sch(N) \cdot \left[ \ell(y_{r+2} \, \ldots \, y_m), y_{r+2}, \ldots, y_m \right]' = 0.
$$
We call $F$ this new system. Consider the polynomial system
\begin{equation} \label{eq:local-lag}
\left[ z_1 \, \ldots \, z_{m-r} \right] D_xF = w'
\end{equation}
with $w = (w_1, \ldots, w_n) \in \CC^n$.

\begin{lemma} \label{lemma:local-lag}
Suppose that $r \leq m-2$.
There exists a non-empty Zariski open set $\mathscr{W} \subset \CC^n$ such that if
$w \in \mathscr{W}$, the system \eqref{eq:local-lag} defines the empty set.
\end{lemma}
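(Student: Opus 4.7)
My plan is to reduce the lemma to a non-dominance statement for a polynomial map. Define
\[
\Phi \colon (x, y, z) \longmapsto z^{T} D_{x} F(x,y)
\]
from $\CC^{n} \times \CC^{m-r-1} \times \CC^{m-r}$, restricted to the open locus where $\det N \cdot e \neq 0$, to $\CC^{n}$. The system in \eqref{eq:local-lag} has a solution exactly when $w$ lies in the image of $\Phi$, so it suffices to show that $\Phi$ is not dominant; one may then take $\mathscr{W} := \CC^{n} \setminus \overline{\mathrm{Im}(\Phi)}$, which is a non-empty Zariski open subset of $\CC^{n}$.

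To establish non-dominance I would exploit the structural identity $F = Sch(N)\cdot \tilde y$ with $\tilde y = (\ell, y_{r+2}, \ldots, y_{m})^{T}$, which follows directly from the construction preceding Lemma~\ref{lemma:local-lag}. Expanding in $x$ gives
\[
D_{x}F \;=\; \bigl(D_{x}Sch(N)\bigr)\tilde y \;+\; Sch(N)\,D_{x}\tilde y,
\]
and since only the first row of $D_{x}\tilde y$ is nonzero (it equals $D_{x}\ell$), the second summand has rank at most one with rows proportional to $D_{x}\ell$. Consequently, $z^{T} D_{x} F$ decomposes as $z^{T}\bigl(D_{x}Sch(N)\bigr)\tilde y + (z^{T}Sch(N) e_{1})\cdot D_{x}\ell$, in which the second contribution is a scalar multiple of a single row vector in $\CC^{n}$. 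This rigid form constrains the directions attainable by $\Phi$.

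From this decomposition I would analyze the Jacobian $d\Phi$ at a generic point in the source. The hypothesis $r \leq m-2$, i.e.\ $m-r \geq 2$, provides at least one nontrivial direction for $z$ in the left null space of a suitably specialized $Sch(N)$, and hence produces a linear relation among the partial derivatives of $\Phi$ in the $(x,y,z)$-directions. A Sard/algebraic Bertini argument (see \cite[Section 4.2]{SaSc13}) then yields that $\mathrm{rk}(d\Phi) < n$ on a dense open subset of the source, forcing $\overline{\mathrm{Im}(\Phi)}$ to be a proper subvariety of $\CC^{n}$. The main obstacle is the rigorous extraction of this rank deficiency: the entries of $Sch(N)$ are rational in $x$ with $\det N$ in the denominator, and the bilinear dependence of $D_{x}F$ on $(x,y)$ makes the Jacobian computation delicate. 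The assumption $m-r \geq 2$ is precisely what guarantees that the rank-one contribution $Sch(N) D_{x}\tilde y$ cannot account for the entire row space of $D_{x}F$, producing the nontrivial relation needed for non-dominance; under the opposite inequality this redundancy disappears and the argument would collapse.
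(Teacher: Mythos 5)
Your proposal shares a surface similarity with the paper's proof --- both ultimately reason about the closure of the image of $(x,y,z) \mapsto z'D_xF$ in the $w$-space --- but it is missing the constraint that makes the lemma true, and without it your argument collapses.

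You define $\Phi$ on the full open set $\{\det N \cdot e \neq 0\}$ in $\CC^n \times \CC^{m-r-1} \times \CC^{m-r}$ and try to show it is not dominant by arguing that $d\Phi$ is everywhere rank-deficient. But this is false for the unconstrained map: the partials $\partial \Phi / \partial z_j$ are exactly the rows of $D_xF(x,y)$, which for a generic linear matrix $A$ and generic $(x,y)$ is a matrix of rank $\min(m-r,n)$, so the $z$-derivatives alone already give $d\Phi$ rank $m-r$, and the $(x,y)$-derivatives add more. For instance with $n=2$, $m=3$, $r=1$ the unconstrained $\Phi$ is visibly dominant. The decomposition $D_xF = (D_x Sch(N))\tilde y + Sch(N)\,D_x\tilde y$ is correct, but the first summand is a generic $(m-r)\times n$ matrix in $x,y$ and nothing forces a rank drop. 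The term ``$z$ in the left kernel of a specialized $Sch(N)$'' does not produce a linear relation among the partial derivatives of $\Phi$; it is unrelated to the rank of $d\Phi$.

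What you omit is the constraint $\rank A(x) = r$ (equivalently $Sch(N)(x)=0$, since $\det N \neq 0$), which the paper imposes in its definition of the set $C$. This is where the hypothesis $r \leq m-2$ actually bites: for generic $A$ the locus $\{x : \rank A(x) \le r\}$ has codimension $(m-r)^2 \geq 4$ in $\CC^n$, and restricting $\Phi$ to this locus cuts the source down to dimension at most $n-(m-r)^2 + 2(m-r)-1 = n-(m-r-1)^2 \leq n-1$, so the image is automatically of dimension $\leq n-1$ with no need for any rank analysis of $d\Phi$. This is precisely the paper's argument (a fiber-dimension count using that the fiber of $\pi_1 \colon C \to \CC^n_x$ is the graph of $(y,z)\mapsto w$). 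Note also that once $Sch(N)(x)=0$ is imposed, your rank-one summand $Sch(N)\,D_x\tilde y$ vanishes identically, so the decomposition you highlight is not the operative structure. To repair your proposal, you must restrict $\Phi$ to the determinantal locus $Sch(N)=0$ and then invoke its codimension; after that, the non-dominance is immediate and the Jacobian calculation you worry about disappears.
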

\begin{proof}
  Let $C \subset \CC^{n+2(m-r)-1}$ be the Zariski closure of the
  constructible set defined by \eqref{eq:local-lag}, $\det N(x) \neq
  0$ and $\rank \, A(x) = r$. Consider the projections $\pi_1 \colon C
  \to \CC^n$ defined by $\pi_1(x,y,z,w)=x$ and $\pi_2 \colon C \to
  \CC^n$ defined by $\pi_2(x,y,z,w)=w$. The image $\pi_1(C)$ is dense
  in the set $\{x \in \CC^n : \rank \, A(x) \leq r\}$ and so has
  dimension $\leq n-(m-r)^2$ when $A$ is generic. The fiber of $\pi_1$
  over a generic point $x \in \pi_1(C)$ is the graph of the functions
  $w_1, \ldots, w_n$ of $(y,z)$, and so it has co-dimension $n$ and
  dimension $[n+(m-r)+m-r-1]-n = 2(m-r)-1$. By the Theorem of the
  Dimension of Fibers \cite[Sect. 6.3, Theorem 7]{Shafarevich77} one
  concludes that $C$ has dimension less than or equal to
  $n-(m-r)^2+2(m-r)-1 = n-(m-r-1)^2$ and since $r \leq m-2$ this
  dimension is $\leq n-1$. Therefore $\pi_2(C) \subset \CC^n$ is a set
  whose Zariski closure has dimension $\leq n-1$ and is contained in a
  hypersurface $H \subset \CC^n$. Setting $\mathscr{W} = \CC^n
  \setminus H$ ends the proof.
\end{proof}

We recall in what follows the global definition of Lagrange system for the
restriction of the projection $\pi_w \colon (x,y) \to w'x$ to the incidence
variety $\setV(A, u) = \setV(A \circ Id_n, u)$. The set $\setV(A, u)$ is defined by
$f(A, u)$ which consists in the entries of $A(x)y$ and $u'y-u_{m+1}$
with $u_{m+1} \neq 0$. We suppose that $A \in \zarA$. The associated
Lagrange system consists in the polynomial entries of
\begin{equation} \label{lagrange-system}
f(A, u),
\qquad
(g,h)=
\left[ z_1 \, \ldots \, z_{m+1}, z_{\bullet} \right]
\left[
\begin{array}{cc}
D_xf & D_yf \\
w_1 \, \ldots \, w_n & 0
\end{array}
\right],
\qquad
\sum_{i=1}^{m+1}v_iz_i-1
\end{equation}
where $g=(g_1, \ldots, g_n)$ and $h=(h_1, \ldots, h_m)$. We denote by
$\setW_w(A,u,v)$ the zero set of \eqref{lagrange-system}.
Over a solution of \eqref{lagrange-system}, $z_{\bullet} \neq 0$, since $A \in
\zarA$ and $f$ satisfies Assumption $\sfG_2$. The polynomial system \eqref{lagrange-system}
consists of $n+2m+2$ polynomials in $n+2m+2$ variables.

\begin{lemma} \label{lemma:lagr:proj}
Let $A \in \zarA$ (see \ref{remark:sing}) and $u \neq 0$. There exist non-empty Zariski open sets $\mathscr{W} \subset
\CC^n$ and $\zarV \subset \CC^{m+1}$ such that, if $w \in \mathscr{W}$ and $v \in \zarV$
the following holds:
\begin{enumerate}
\item the Jacobian matrix of \eqref{lagrange-system} has maximal rank
  at any point of $\setW_w(A,u,v)$ and $\setW_w(A,u,v)$ is finite;
\item the projection of $\setW_w(A,u,v)$ in the space of $x,y$ contains the critical
  points of $\pi_w \colon (x,y) \to w'x$ restricted to $\setV(A,u)$.
\end{enumerate}
\end{lemma}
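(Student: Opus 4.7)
The strategy combines the rank-localization from Lemma~\ref{lemma:local-lag} with a parametric version of Thom's weak transversality theorem. First, apply Lemma~\ref{lemma:local-lag} to every pair $(r, N)$ consisting of an integer $r \in \{0, \ldots, m-2\}$ and an $r \times r$ submatrix $N$ of $A$, and intersect the finitely many resulting non-empty Zariski open subsets of $\CC^n$. This yields a non-empty Zariski open set $\mathscr{W}_0 \subset \CC^n$ such that, for every $w \in \mathscr{W}_0$, any solution $(x, y, z_1, \ldots, z_{m+1}, z_\bullet) \in \setW_w(A, u, v)$ is forced to satisfy $\rank A(x) = m - 1$, since otherwise the corresponding local Lagrange system would be empty on the open chart $\det N \neq 0$.

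At such a solution, two structural facts follow from $A \in \zarA$. First, $z_\bullet \neq 0$: if $z_\bullet$ vanished, the Lagrange equations would give $(z_1, \ldots, z_{m+1}) \cdot Df(A, u) = 0$; by $\sfG_2$ the matrix $Df(A, u)$ has full row rank $m+1$ on $\setV(A, u)$, so its left kernel is trivial, contradicting the normalization $\sum v_i z_i = 1$. Second, the matrix $D_y f$ of size $(m+1) \times m$ consisting of $A(x)$ stacked over $u'$ has full column rank $m$: indeed $\ker A(x)$ is one-dimensional and spanned by $y$, and the equation $u'y = u_{m+1} \neq 0$ rules out any nontrivial kernel element. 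In particular $(z_1,\ldots,z_{m+1})$ is, up to scalar, the unique left null vector of $D_y f$.

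For item~(1), I would view the Lagrange system as a map
\[
\Phi \colon \CC^{n+2m+2}_{(x,y,z,z_\bullet)} \times \CC^{n+m+1}_{(w,v)} \longrightarrow \CC^{n+2m+2}
\]
and prove that $0$ is a regular value. The plan is to exhibit an $(n+2m+2) \times (n+2m+2)$ block-triangular submatrix of $D\Phi$ with nonsingular diagonal blocks: $m+1$ columns among $(x, y)$ where $Df(A, u)$ achieves its full rank (by $\sfG_2$); the $n$ columns of $w$ in the $g$-block (contributing $z_\bullet \cdot Id_n$, nonsingular because $z_\bullet \neq 0$); $m$ columns among $z_1, \ldots, z_{m+1}$ in the $h$-block realizing the rank of $(D_y f)^T$; and the column $v_{i^*}$ in the normalization row, where $i^*$ is the omitted $z$-index. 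Since $(z_1,\ldots,z_{m+1})$ spans the right kernel of $(D_y f)^T$ up to scalar, the leftover index $i^*$ can always be chosen so that $z_{i^*} \neq 0$. Thom's weak transversality then provides a non-empty Zariski open set of parameters $(w, v)$ for which $\setW_w(A, u, v) = \Phi^{-1}_{(w,v)}(0)$ is smooth; since the equation count matches the variable count, smoothness forces zero-dimensionality.

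For item~(2), given a critical point $(x_0, y_0) \in \setV(A, u)$ of $\pi_w$, the Lagrange multiplier $\mu \in \CC^{m+1}$ satisfying $\mu \cdot Df(A, u) = (w, 0)$ exists and is unique by $\sfG_2$; lifting it to a point of $\setW_w(A, u, v)$ reduces to solving $-z_\bullet \sum v_i \mu_i = 1$, which requires $\sum v_i \mu_i \neq 0$. A key structural observation is that $\mu_{m+1}$, being up to sign the maximal minor of $D_y f$ obtained by deleting the last row, equals $\pm \det A(x)$ and hence vanishes identically on $\setD$; the $v$-condition therefore only involves $(v_1, \ldots, v_m)$, and $v_{m+1}$ stays free. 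The hardest step I expect is extracting a genuine product of Zariski open subsets $\mathscr{W} \times \zarV$ rather than merely a Zariski open subset of $\CC^n \times \CC^{m+1}$: one must project the bad locus onto each factor, bound its dimension using the finiteness of the critical set (item~(1)) together with the genericity provided by $A \in \zarA$, and apply a Chevalley-type argument to obtain non-empty Zariski open sets $\mathscr{W} \subset \mathscr{W}_0$ and $\zarV \subset \CC^{m+1}$ for which both items hold simultaneously.
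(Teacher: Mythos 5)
Your proof of Assertion~(1) is essentially the same as the paper's. Both use Lemma~\ref{lemma:local-lag} to force $\rank A(x) = m-1$ at any solution, then observe $z_\bullet \neq 0$ via $\sfG_2$, and run Thom's weak transversality on the parametric Lagrange map with the same block-triangular column selection (columns of $(x,y)$ carrying the full-rank $Df(A,u)$, the $w$-columns giving $z_\bullet\,Id_n$, $m$ of the $z$-columns realizing the rank of $(D_yf)'$, and one $v$-column with $z_{i^*}\neq 0$). The only stylistic difference: you work directly with the full-column-rank property of $D_yf=\left[\begin{smallmatrix}A(x)\\u'\end{smallmatrix}\right]$ and the Cramer description of the unique left null vector, whereas the paper first passes to the explicit local presentation $\det A = 0$, $y_i - q_i(x) = 0$, $y_m - 1 = 0$ in $\QQ[x,y]_{\det N}$ so that $D_yf$ becomes $\left[\begin{smallmatrix}0\\Id_m\end{smallmatrix}\right]$. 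Your invariant version is arguably a bit cleaner; both work.

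For Assertion~(2), however, there is a gap. You invoke ``the finiteness of the critical set (item~(1))'' but item~(1) only establishes finiteness of $\setW_w(A,u,v)$, whose $(x,y)$-projection is \emph{a priori} a proper subset of $\crit(\pi_w,\setV(A,u))$ — precisely the points for which $\sum v_i\mu_i\neq 0$. Finiteness of the full critical locus needs an independent argument; the paper gets it by showing the constructible set $\mathcal{S}=\{(x,y,z):f=g=h=0,\ z\neq 0\}$ has dimension $1$, that each $\pi_{x,y}$-fiber has dimension $1$ by homogeneity in $z$, and then applying the theorem on the dimension of fibers to conclude $\pi_{x,y}(\mathcal{S})$ is finite. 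You need something equivalent before you can ``intersect over finitely many critical points.'' Your observation that $\mu_{m+1}=\pm\det A(x)$ vanishes on $\setV(A,u)$, so the lifting constraint only touches $(v_1,\dots,v_m)$, is a correct and nice structural remark not made in the paper, but it does not by itself close the gap. Finally, you correctly flag the subtlety of obtaining a genuine product $\mathscr{W}\times\zarV$ rather than an open subset of $\CC^n\times\CC^{m+1}$; note that the paper actually constructs $\zarV$ \emph{after} fixing $w\in\mathscr{W}$ (so $\zarV$ depends on $w$), which it gets away with because in the application $w$ is determined by the random choice of $\M$ — your ``Chevalley-type'' repair is the right instinct, but it is not carried out, and neither is it in the paper.
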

\begin{proof}[of Assertion (1) of Lemma \ref{lemma:lagr:proj}]
  The Lagrange system \eqref{lagrange-system} has a local description,
  as shown above, when we look at its equations in the local ring
  $\QQ[x,y,z]_{\det N}$ where $N$ is a given square submatrix of
  $A(x)$. If $\mathscr{W}' \subset \CC^n$ is the non-empty Zariski
  open set given by Lemma \ref{lemma:local-lag}, when $w \in
  \mathscr{W}'$ then if $(x,y,z)$ is a solution of
  \eqref{lagrange-system}, one deduces that the rank of $A(x)$ is
  $m-1$. Without loss of generality, we can assume to work in the open
  set $\det N \neq 0$ where $N$ is the $(m-1) \times (m-1)$ upper-left
  submatrix of $A(x)$ and work in the local ring $\QQ[x,y,z]_{\det
    N}$.  \smallskip

  Hence, let $N$ be the $(m-1) \times (m-1)$ upper-left submatrix of
  $A(x)$. By Lemma \ref{lemma:local}, the local equations of
  $\setV_r$ in $\QQ[x,y]_{\det N}$ are of the form
\[
y_i - q_i(x) y_m =0, \, i=1, \ldots, m-1 \qquad Sch(N) y_m = 0 \qquad u'y-u_{m+1} = 0.
\]
for some $q_i \in \QQ[x,y]_{\det N}$.

We prove by contradiction that $y_m\neq 0$; hence assume that
$y_m=0$. By the first $m-1$ equations one deduces that $y=0$, which is
a contradiction with $u'y-u_{m+1} = 0$ and $u_{m+1}\neq 0$. Then one
can suppose $y_m=1$. Moreover, since $N$ has size $m-1$, its Schur
complement is exactly $Sch(N) = \det(A) \cdot \det(N^{-1}) =
\det(A)/\det(N)$, and so in the local ring $\QQ[x,y]_{\det N}$ the
equation $Sch(N) y_m = 0$ is equivalent to $\det(A) = 0$. So one
obtains that the incidence variety $\setV(A,u)$ has the following
local description:
\[
\det(A) = 0 \qquad y_i - q_i(x) =0, \, i=1, \ldots, m-1 \qquad y_m - 1 = 0.
\]
We call $f = (f_1, \ldots, f_{m+1})$ these local equations. The Jacobian of $f$ reads
\[
Df = [D_xf \,\,\, D_yf] =
\left[
\begin{array}{cc}
D_x \det(A) & 0_{1 \times m} \\
\star & Id_{m}
\end{array}
\right]
\]
and the local equations of the associated Lagrange system are
$$
f=0,
\qquad
(g,h) =
[z_1 \,\, \ldots \,\, z_{m+1}, z_{\bullet}]
\left[
\begin{array}{cc}
D_xf & D_yf \\
w & 0
\end{array}
\right],
\qquad
\sum_{i=1}^{m+1}v_1z_i-1=0,
$$
which is also a square system with $n+2m+2$ polynomials and variables.
Now, consider the map
\[
  \begin{array}{lrcc}
  p : &  \CC^{n+2m+2} \times \CC^{n} \times \CC^{m+1} & \longrightarrow & \CC^{n+2m+2} \\
            &  (x,y,z,w,v) & \longmapsto & (f, g, h, \sum v_iz_i-1).
  \end{array}
\]
and its section map
\[
  \begin{array}{lrcc}
  p_{v,w} : &  \CC^{n+2m+2} & \longrightarrow & \CC^{n+2m+2} \\
            &  (x,y,z) & \longmapsto & (f, g, h, \sum v_iz_i-1).
  \end{array}
\]
for given $v \in \CC^{m+1}, w \in \CC^n$. Remark that $p_{v,w}^{-1}(0)=\setW_w(A,u,v)$
for all $v,w$. Now, there are two cases: either $\zeroset p = \emptyset$ or $\zeroset
p \neq \emptyset$. \smallskip

In the first case, this means in particular that for all $v,w$, $\setW_w(A,u,v) =
\emptyset$, which implies Assertion (1) with $\mathscr{W}=\CC^n$ and $\zarV=\CC^{m+1}$. \smallskip

Suppose now that $\zeroset p \neq \emptyset$.
We claim that $0$ is a regular value for $p$. Indeed, by Thom's Algebraic Weak
Transversality theorem \cite[Sect. 4.2]{SaSc13} there exist non-empty Zariski open
sets $\mathscr{W} \subset \CC^n$ and $\zarV' \subset \CC^{m+1}$ such that if
$\v \in \zarV'$ and $w \in \mathscr{W}'$, zero is a regular value of the map $p_{v,w}$.
The regularity property
of $p_{v,w}$ at $0$ implies that the rank of \eqref{lagrange-system} is maximal
(and equal to $n+2m+2$) at each point of $\setW_w(A,u,v)$ and, by the Jacobian
Criterion \cite[Theorem 16.19]{Eisenbud95}, that $\setW_w(A,u,v)$ is empty or
finite, implying that Assertion (1) is true. \smallskip

We prove that $0$ is a regular value for $p$. Let $(x,y,z,w,v) \in p^{-1}(0)$ and let $Dp$ be the jacobian
matrix of the map $p$ at $(x,y,z,w,v)$. Remark that:
\begin{itemize}
\item since $\rank \, A(x) = m-1$ and $A \in \zarA$, then $x$ is a
  regular point of $\det(A) = 0$;
\item since $A \in \zarA$ then $z_{\bullet} \neq 0$;
\item since $z_1, \ldots, z_{m+1}$ verifies $\sum v_iz_i=1$ then there exists
  $\ell \in \{1, \ldots, m+1\}$ such that $z_\ell \neq 0$;
\item polynomial entries of $h$ have the form $(h_1, \ldots, h_m) = (z_2, \ldots, z_{m+1})$.
\end{itemize}
We consider the submatrix of $Dp$ made by the following independent square blocks:
\begin{itemize}
\item the derivative $\frac{\partial}{\partial x_j}\det A$ which is non-zero at $x$;
\item the derivatives of $y_i-q_i(x), i=1, \ldots, m-1$ and $y_m-1$ with respect
  to $y_1, \ldots, y_m$;
\item the derivatives of $g$ w.r.t. $w_1, \ldots, w_n$ give a non-singular diagonal
  matrix $z_{\bullet}Id_{n}$;
\item the derivatives of $h$ w.r.t. $z_2, \ldots, z_{m+1}$ give a non-singular diagonal
  matrix $Id_{m}$;
\item the derivative of $\sum v_iz_i-1$ with respect to $v_\ell$ with $\ell$ as above
  is $z_{\ell} \neq 0$.
\end{itemize}

So one obtains that $Dp$ has a non-singular maximal minor of size $n+2m+2$ and
so it has full rank. By the genericity of the point $(x,y,z,w,v) \in p^{-1}(0)$
one deduces the claimed property.
\end{proof}

\begin{proof}[of Assertion (2) of Lemma \ref{lemma:lagr:proj}]
Suppose first that $\setW_w(A,u,v)=\emptyset$ for all $w \in \CC^n, v \in \CC^{m+1}$.
Fix $w \in \CC^n$, $(x,y) \in \setV(A,u)$ and suppose that $(x,y)$ is a critical point
of $\pi_w$ restricted to $\setV(A,u)$. Then, since $\setV(A,u)$ is equidimensional,
there exists $z \neq 0$ such that $(x,y,z)$ verifies the equations
$z'Df = [w,0]$. Since $z \neq 0$, there exists $v \in \CC^{m+1}$ such that
$v'z=1$. So we conclude that $(x,y,[z,1]) \in \setW_w(A,u,v)$, which is a contradiction. \smallskip

Suppose now that $A \in \zarA, u \neq 0$, $\zeroset p$ is non-empty
and that $w \in \mathscr{W}$. By \cite[Sect. 3.2]{SaSc13}, the set of
critical points of $\pi_w$ restricted to $\setV(A,u)$ is the
projection $\pi_{x,y}$ on $x,y$ of the constructible set:
$$
\mathcal{S} = \{(x,y,z) : f=g=h=0, \, z \neq 0\} 
$$
where $f,g,h$ have been defined in \eqref{lagrange-system}. One can
easily prove with the same techniques as in the proof of Assertion (1)
that $\mathcal{S}$ has dimension 1.  Moreover, for each $(x,y)$ in
$\pi_{x,y}(\mathcal{S})$, the fiber $\pi_{x,y}^{-1}(x,y)$ has
dimension 1 (because of the homogeneity of the $z$-variables). By the
Theorem on the Dimension of Fibers \cite[Sect. 6.3, Theorem
7]{Shafarevich77}, we deduce that $\pi_{x,y}(\mathcal{S})$ is
finite. Fix now $(x,y) \in \pi_{x,y}(\mathcal{S})$ and let
$\zarV_{(x,y)} \subset \CC^{m+1}$ be the non-empty Zariski open set of
$v$ such that the hyperplane $v'z-1=0$ intersects transversely
$\pi_{x,y}^{-1}(x,y)$, and recall that $\zarV' \subset \CC^{m+1}$ has
been defined in the proof of Assertion (1).  By defining
$$
\zarV = \zarV' \bigcap_{(x,y) \in \pi_{x,y}(\mathcal{S})}\zarV_{(x,y)}
$$
one concludes the proof.
\end{proof}

\subsection{Proof of Proposition \ref{prop:dimlag}} \label{subsec:proof-prop}

We prove now Proposition \ref{prop:dimlag}.

\begin{proof}[of Proposition \ref{prop:dimlag}]
Let $\zarM_1 \subset \GL(n,\CC)$ be the set of invertible matrices $\M$
such that the first row $w'$ of $\M^{-1}$ lies in the set $\mathscr{W}$
given in Lemma \ref{lemma:lagr:proj}. This is a non-empty Zariski open set of
$\GL(n,\CC)$ since the entries of $\M^{-1}$ are rational functions
of the entries of $\M$. Let $\zarV \subset \CC$
be the non-empty Zariski open set given by Lemma \ref{lemma:lagr:proj}
and let $v \in \zarV$.
Let $e'_1=(1, 0, \ldots, 0)\in \QQ^n$ and for all $\M \in \GL(n,\CC)$, let
\[ 
\tilde{\M} =\left(\begin{array}{cc}
  \M & {0} \\
{0} & Id_m \\
\end{array}\right). 
\]
Remark that for any $\M\in \zarM_1$ the following identity holds:
$$
\left(\begin{array}{c}
  \jac\f(\A \circ \M, u) \\
  e'_1 \quad 0\; \cdots \; 0\\  
\end{array}\right) = \left(\begin{array}{cc}
  \jac \f(\A,u) \\
w' \quad 0 \; \cdots \; 0\\
\end{array}\right)\tilde{\M}.
$$
We conclude that the set of solutions of the system 
\begin{equation}
  \label{eq:dim:1}
  \left(\f(\A, u), \quad
  z'\left(\begin{array}{cc}
  \jac_x \f & \jac_y \f \\
w' & 0 \; \cdots \; 0\\
\end{array}\right), \quad \sum_{i=1}^{m+1} v_iz_{i} -1 \right)
\end{equation}
is the image by the map $(\x, \y) \mapsto \tilde{\M}^{-1}(\x, \y)$ of the set $\mathcal{S}$ of solutions
of the system
\begin{equation}
  \label{eq:dim:2}
  \left(\f(\A \circ \M, u), \quad
 z'\left(\begin{array}{c}
      \jac \f(\A \circ \M, u) \\
      e'_1 \quad 0\; \cdots \; 0\\  
\end{array}\right), \quad \sum_{i=1}^{m+1} v_iz_{i} -1 \right).
\end{equation}
Now, let $\pi$ be the projection that forgets the last coordinate of $z$,
that is $\Z_{\bullet}$. Remark that $\pi(\mathcal{S})={\mathcal Z}(A\circ M,u,v)$
and that $\pi$ is a bijection. Moreover, it is an isomorphism of affine algebraic
varieties, since if $(x,y,z) \in {\cal S}$, then its $\Z_{\bullet}$-coordinate is
obtained by evaluating a polynomial at $(x,y,z_1 \ldots z_{m+1})$.

Thus, Assertion (1) of Lemma \ref{lemma:lagr:proj} implies that $\mathcal{S}$
and $\pi(\mathcal{S})={\mathcal Z}(A\circ M,u,v)$ are finite which proves
Assertion (1) of Proposition \ref{prop:dimlag}.

Assertion (1) of Lemma \ref{lemma:lagr:proj} also implies that the
Jacobian matrix associated to \eqref{eq:dim:2} has maximal rank at any
point of $\mathcal{S}$. Since we already observed that
$\pi(\mathcal{S})={\cal Z}(\A \circ \M, u, v)$ and that the
map is an isomorphism, Assertion (2) follows.

Assertion (3) is a straightforward consequence of Assertion (2) of
Lemma \ref{lemma:lagr:proj}.
\end{proof}

\section{Closure properties of projection maps}\label{sec:closedness}

The goal of this section is to prove Proposition
\ref{prop:closedness}.
We start by introducing some notations.

\begin{notations} \label{not:one}

For an algebraic set ${\mathcal Z} \subset \CC^n$ of dimension $d$, we denote
by $\Omega_i(\mathcal Z)$ the $i$-equidimensional component of $\mathcal Z$, for $i=0,1, \ldots, d$.

We denote by $\scS(\mathcal Z)$ the union of the following sets:
\begin{itemize}
\item $\Omega_0(\mathcal Z) \cup \cdots \cup \Omega_{d-1}(\mathcal Z)$
\item the set $\sing(\Omega_d(\mathcal Z))$ of singular points of $\Omega_d(\mathcal Z)$
\end{itemize}
and by $\scC(\pi_i, \mathcal Z)$ the Zariski closure of the union of the
following sets:
\begin{itemize}
\item $\Omega_0(\mathcal Z) \cup \cdots \cup \Omega_{i-1}(\mathcal Z)$;
\item the union for $r \geq i$ of the sets $\crit(\pi_i, \reg(\Omega_r(\mathcal Z)))$ of
  critical points of the restriction of $\pi_i$ to the regular locus of $\Omega_r(\mathcal Z)$.
\end{itemize}
Now, take $\M \in \GL(n,\CC)$ and fix ${\mathcal Z} \subset \CC^n$ algebraic
set of dimension $d$.
We define the collection of algebraic sets $\{{\mathcal O}_i(\M^{-1}{\mathcal Z})\}_{0 \leq i \leq d}$ with
\begin{itemize}
\item ${\mathcal O}_d(\M^{-1}{\mathcal Z})=\M^{-1}{\mathcal Z}$;
\item ${\mathcal O}_i(\M^{-1}{\mathcal Z})=\scS({\mathcal O}_{i+1}(\M^{-1}{\mathcal Z}))
\cup \scC(\pi_{i+1},  {\mathcal O}_{i+1}(\M^{-1}{\mathcal Z})) \cup
\scC(\pi_{i+1},\M^{-1}{\mathcal Z})$ for $i=0, \ldots, d-1$.
\end{itemize}
\end{notations}

\paragraph*{Property $\sfP(\mathcal Z)$.} Let ${\mathcal Z} \subset \CC^n$ be an algebraic set
of dimension $d$.
We say that $\M \in \GL(n,\CC)$ satisfies $\sfP(\mathcal Z)$ when for all
$i=0, 1, \ldots, d$
\begin{enumerate}
\item ${\mathcal O}_i(\M^{-1}{\mathcal Z})$ has dimension $\leq i$; 
\item ${\mathcal O}_i(\M^{-1}{\mathcal Z})$ is in Noether position with respect to $X_1, \ldots, X_i$.
\end{enumerate}

Note that Point (2) of $\sfP(\mathcal Z)$ implies Point (1) (this is an
immediate consequence of \cite[Chap. 1.5.3]{Shafarevich77}).
The following result shows that Property
$\sfP(\mathcal Z)$ holds for a generic choice of the matrix $\M$ and it will be
proved later on.

\begin{proposition} \label{prop:propertyP}
  Let ${\mathcal Z} \subset \CC^n$ be an algebraic set of dimension $d$.
  There exists a non-empty Zariski open set $\zarM_2 \subset \GL(n,\CC)$ such
  that for all $\M \in \zarM_2$, $\M$ satisfies $\sfP(\mathcal Z)$. 
\end{proposition}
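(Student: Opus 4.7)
The plan is to proceed by descending induction on $i$ from $d$ to $0$, constructing at each level a non-empty Zariski open set $\zarM^{(i)} \subset \GL(n,\CC)$ on which the two items of $\sfP(\mathcal{Z})$ hold; the final set $\zarM_2 = \zarM^{(d)} \cap \zarM^{(d-1)} \cap \cdots \cap \zarM^{(0)}$ will then be a finite intersection of non-empty Zariski opens, hence non-empty Zariski open. Note that, as remarked after the statement of $\sfP(\mathcal{Z})$, Noether position with respect to $X_1,\ldots,X_i$ implies the dimension bound, so it suffices to establish the Noether position condition at each level. The base case $i=d$ is classical: $\mathcal{O}_d(\M^{-1}\mathcal{Z}) = \M^{-1}\mathcal{Z}$ has dimension $d$ because $\M$ is invertible, and by the Noether normalization lemma there is a non-empty Zariski open subset of $\GL(n,\CC)$ on which $\M^{-1}\mathcal{Z}$ is in Noether position with respect to $X_1,\ldots,X_d$.

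For the inductive step, assume the statement at level $i+1$ and fix $\M \in \zarM^{(i+1)} \cap \cdots \cap \zarM^{(d)}$. The first task is to bound $\dim \mathcal{O}_i(\M^{-1}\mathcal{Z}) \leq i$, decomposing the defining union into three pieces. For $\scS(\mathcal{O}_{i+1}(\M^{-1}\mathcal{Z}))$, all equidimensional components of dimension $<i+1$ have dimension $\leq i$ and the singular locus of the top component $\Omega_{i+1}$ is a proper closed subset of an $(i+1)$-equidimensional set, hence of dimension $\leq i$. For $\scC(\pi_{i+1}, \mathcal{O}_{i+1}(\M^{-1}\mathcal{Z}))$, the inductive hypothesis puts $\mathcal{O}_{i+1}$ in Noether position with respect to $X_1,\ldots,X_{i+1}$, so $\pi_{i+1}$ restricted to $\Omega_{i+1}(\mathcal{O}_{i+1})$ is a finite surjective morphism onto $\CC^{i+1}$ and its ramification locus is therefore a proper closed subset, of dimension $\leq i$. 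For the third piece $\scC(\pi_{i+1}, \M^{-1}\mathcal{Z})$, the isomorphism $x \mapsto \M x$ identifies the critical locus of $\pi_{i+1}$ restricted to $\reg(\Omega_r(\M^{-1}\mathcal{Z}))$ with the critical locus of the linear projection $L\colon \CC^n \to \CC^{i+1}$ determined by the first $i+1$ rows of $\M$, restricted to $\reg(\Omega_r(\mathcal{Z}))$; for each such $r \geq i+1$, Thom's weak algebraic transversality theorem (as used elsewhere in the paper) together with the standard codimension bound $r-i$ on the locus of matrices of rank $<i+1$ in $\mathrm{Hom}(\CC^r,\CC^{i+1})$ shows that for a generic choice of these first $i+1$ rows this critical locus has dimension $\leq i$.

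Once the dimension bound is in hand, the Noether position property is obtained by a second application of generic Noether normalization, this time to the set $\mathcal{O}_i(\M^{-1}\mathcal{Z})$ itself. The key observation is that as $\M$ ranges over $\GL(n,\CC)$, the varieties $\mathcal{O}_i(\M^{-1}\mathcal{Z})$ are obtained from the finitely many algebraic sets built from $\mathcal{Z}$, its singular loci, and the critical loci of linear projections parametrized by the first $i+1$ rows of $\M$, all pulled back by $\M^{-1}$. Noether position with respect to $X_1,\ldots,X_i$ for each such variety is an open condition, and the remaining rows $i+2,\ldots,n$ of $\M$ can still be chosen freely once the first $i+1$ are fixed, so the set of $\M$ achieving Noether position at level $i$ contains a non-empty Zariski open subset $\zarM^{(i)}$ of $\zarM^{(i+1)} \cap \cdots \cap \zarM^{(d)}$. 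The main obstacle is precisely this coupling between rows of $\M$: the first $i+1$ rows control the critical loci while the rest control Noether position, and one must argue carefully (via Chevalley constructibility applied to the total family over $\GL(n,\CC)$) that the bad set in $\GL(n,\CC)$ is genuinely a proper Zariski closed subset despite this dependence. Taking the intersection $\zarM_2 = \bigcap_{i=0}^{d} \zarM^{(i)}$ finishes the proof.
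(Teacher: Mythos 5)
Your overall strategy — decreasing induction, generic linear change of coordinates, dimension bounds via critical loci — is in the same spirit as the paper's, but the inductive step has a genuine gap at exactly the point you flag, and the way you propose to paper over it does not work. The issue is the following: once the first $d$ rows of $\M^{-1}$ are fixed, the variety $\mathcal{O}_i(\M^{-1}\mathcal{Z})$ (transported back to the original coordinates) is completely determined, since its construction only invokes the projections $\pi_{i+1}, \ldots, \pi_d$. The Noether-position condition you want at level $i$ is the finiteness of the projection given by the first $i$ rows of $\M^{-1}$ restricted to this variety, and those rows are among the ones already fixed. So "choosing the remaining rows $i+2, \ldots, n$ freely" cannot repair a failure of Noether position; whether it holds is already decided. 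Invoking Chevalley constructibility to say the bad locus is a proper closed set does not by itself show it is proper — that is precisely what needs to be proved, and your argument provides no mechanism for it. In short, a "second application of generic Noether normalization to $\mathcal{O}_i$ itself" is not available because $\mathcal{O}_i$ moves when $\M$ moves.

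The paper resolves this coupling by proving Noether position first over the function field $\QQ(\fermU)$, where $\fermU$ is an $n\times n$ matrix of \emph{indeterminates} (Lemma \ref{lemma:noeth1}). Working over this generic base makes the statement unconditional — there is no separate family over which a bad locus could conspire — and the inductive step is carried out by an explicit ideal-theoretic analysis: one peels off the minors cutting out singular and critical loci one at a time, uses Krull's principal ideal theorem to track the drop in dimension, and exhibits monic polynomials giving the integral extension. The passage from the generic $\fermU$ to an actual Zariski-open set of rational matrices is then handled by a standard specialization lemma (Lemma \ref{lemma:noeth2}, proved as in Prop.~2 of Safey El Din--Schost 2003). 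This function-field device is the key missing ingredient in your argument. As a smaller remark: the paper notes right after the definition of $\sfP(\mathcal{Z})$ that Point (2) implies Point (1), so your separate verification of the dimension bound, while correct in outline, is redundant once Noether position is established.
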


\paragraph*{Property $\sfQ(\mathcal Z)$.} Let $\mathcal Z$ be an algebraic set of
dimension $d$ and $1\leq i \leq d$. We say that $\sfQ_i(\mathcal Z)$ holds if
for any connected component $\mathcal C$ of ${\mathcal Z}\cap \RR^n$ the boundary of
$\pi_i(\mathcal C)$ is contained in $\pi_i({\mathcal O}_{i-1}(\mathcal Z) \cap {\mathcal C})$. When there
is no ambiguity on $\mathcal Z$, we simply write that $\sfQ_i$ holds.

The following result describes properties of projections of the
connected components of the real counterpart of an algebraic set when 
property $\sfP(\mathcal Z)$ holds. 

\begin{proposition} \label{prop:frontier} Let ${\mathcal Z} \subset \CC^n$ be an
  algebraic set of dimension $d$ and $\M \in \GL(n,\CC) \cap \QQ^{n
    \times n}$.  If $\M$ satisfies $\sfP(\mathcal Z)$, then 
  $\sfQ_1(\M^{-1}\mathcal Z), \ldots, \sfQ_d(\M^{-1}\mathcal Z)$ hold.
\end{proposition}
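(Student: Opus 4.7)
The plan is to prove $\sfQ_d, \sfQ_{d-1}, \ldots, \sfQ_1$ by reverse induction on $i$. Fix a connected component $\mathcal{C}$ of $M^{-1}\mathcal{Z} \cap \RR^n$ and write $\mathcal{O}_j = \mathcal{O}_j(M^{-1}\mathcal{Z})$. For $y \in \partial \pi_i(\mathcal{C})$ the goal is to produce $x \in \mathcal{C} \cap \mathcal{O}_{i-1}$ with $\pi_i(x) = y$. Note that the conclusion of $\sfQ_i$ automatically implies closedness of $\pi_i(\mathcal{C})$, since the inclusion $\partial \pi_i(\mathcal{C}) \subset \pi_i(\mathcal{O}_{i-1} \cap \mathcal{C}) \subset \pi_i(\mathcal{C})$ gives $\overline{\pi_i(\mathcal{C})} \setminus \pi_i(\mathcal{C}) \subset \partial \pi_i(\mathcal{C}) \subset \pi_i(\mathcal{C})$, hence $\overline{\pi_i(\mathcal{C})} = \pi_i(\mathcal{C})$.

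For the base case $i = d$, the Noether position of $\mathcal{O}_d = M^{-1}\mathcal{Z}$ with respect to $X_1, \ldots, X_d$ makes $\pi_d|_{\mathcal{O}_d} \colon \mathcal{O}_d \to \CC^d$ a finite, hence proper, morphism, and this properness transfers to the real part: $\pi_d|_{\mathcal{C}}$ is proper, $\pi_d(\mathcal{C})$ is closed, and the real fiber $F_y := \pi_d^{-1}(y) \cap \mathcal{C}$ is non-empty and finite. Suppose for contradiction that $F_y \cap \mathcal{O}_{d-1} = \emptyset$. Using that $\scS(\mathcal{O}_d) \cup \scC(\pi_d, \mathcal{O}_d) \cup \Omega_0(\mathcal{O}_d) \cup \cdots \cup \Omega_{d-1}(\mathcal{O}_d) \subset \mathcal{O}_{d-1}$, every $x \in F_y$ is then a smooth point of a top-dimensional component $\Omega_d(\mathcal{O}_d)$ at which $\pi_d$ has rank $d$. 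Because $x$ is a smooth real point of a variety defined by real equations, the real tangent space has the same dimension as the complex one, and the real implicit function theorem yields a neighborhood of $x$ in $\mathcal{C}$ mapping diffeomorphically onto a neighborhood of $y$ in $\RR^d$. Since $F_y$ is finite, the union of finitely many such neighborhoods shows $y$ is interior to $\pi_d(\mathcal{C})$, contradicting $y \in \partial \pi_d(\mathcal{C})$.

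For the inductive step, assume $\sfQ_{i+1}, \ldots, \sfQ_d$ hold; in particular $\pi_{i+1}(\mathcal{C})$ is closed in $\RR^{i+1}$. Factor $\pi_i = \tau \circ \pi_{i+1}$ with $\tau \colon \RR^{i+1} \to \RR^i$ dropping the last coordinate. I first establish $y \in \pi_i(\mathcal{C})$: pick $y_k \to y$ in $\pi_i(\mathcal{C})$ and lifts $(y_k, z_k) \in \pi_{i+1}(\mathcal{C})$; the Noether position of $\mathcal{O}_{i+1}$ with respect to $X_1, \ldots, X_{i+1}$, combined with a curve-selection argument applied to the semi-algebraic set $\pi_{i+1}(\mathcal{C})$, allows one to choose the $z_k$ bounded, after which closedness of $\pi_{i+1}(\mathcal{C})$ yields a limit $(y, z^*) \in \pi_{i+1}(\mathcal{C})$ and hence a preimage $x \in \mathcal{C}$ with $\pi_i(x) = y$. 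Once $y \in \pi_i(\mathcal{C})$, the argument of the base case carries over: since $\scC(\pi_i, M^{-1}\mathcal{Z}) \subset \mathcal{O}_{i-1}$, any $x \in \pi_i^{-1}(y) \cap \mathcal{C}$ outside $\mathcal{O}_{i-1}$ would be a smooth point of a component of dimension $\geq i$ at which $\pi_i$ has maximal rank $i$, and the real implicit function theorem would force $y$ into the interior of $\pi_i(\mathcal{C})$.

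The main obstacle is the boundedness step in the inductive case: because $\mathcal{C} \subset \mathcal{O}_d$ is generally not contained in $\mathcal{O}_{i+1}$, the Noether position of $\mathcal{O}_{i+1}$ does not bound the lifted coordinates $z_k$ directly. Closing this gap is the technical heart of the proof and requires combining the semi-algebraic curve-selection lemma with careful use of the full filtration $(\mathcal{O}_j)_j$ and the dimensional constraints supplied by $\sfP(\mathcal{Z})$, typically by arguing that an unbounded real arc in $\mathcal{C}$ whose $\pi_i$-image stays bounded would, via Zariski closure, have to intersect some $\mathcal{O}_j$ in a way incompatible with the Noether positions of the filtration.
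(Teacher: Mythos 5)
Your base case matches the paper's closely, and your second half of the inductive step (the implicit function theorem argument showing $\scC(\pi_i,\cdot)\subset{\mathcal O}_{i-1}$ excludes non-critical points) is also the paper's. The gap you yourself flag, however, is real and the paper closes it in a way your sketch does not reach.

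The issue: you try to produce $x\in\pi_i(\mathcal C)$ first, by choosing bounded lifts $(y_k,z_k)\in\pi_{i+1}(\mathcal C)$ of a sequence $y_k\to y$. But $\mathcal C$ sits inside ${\mathcal O}_d=M^{-1}\mathcal Z$, not inside ${\mathcal O}_{i+1}$, and the Noether position of ${\mathcal O}_{i+1}$ controls fibers of $\pi_{i+1}$ only over points of ${\mathcal O}_{i+1}$. No curve-selection tweak of this step is going to work on $\mathcal C$ directly. More fundamentally, the target you set yourself ($y\in\pi_i(\mathcal C)$) is not what the argument should produce; you need $y\in\pi_i({\mathcal O}_{i-1}\cap\mathcal C)$, and passing through $\pi_i(\mathcal C)$ first is a detour that makes the properness unavailable.

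The paper's inductive step is organized differently and sidesteps lifting sequences entirely. For $y$ on the boundary of $\pi_i(\mathcal C)$, it shows first that $y\in\pi_i({\mathcal O}_i\cap\mathcal C)$, and then improves to ${\mathcal O}_{i-1}$ by the critical-point argument. The first step uses a purely topological device: for $r>0$, the cylinder $\mathcal B'_r=p^{-1}(\mathcal B_r)\subset\RR^{i+1}$ over the ball $\mathcal B_r$ around $y$ is connected and meets both $\pi_{i+1}(\mathcal C)$ and its complement (the first because $\pi_i^{-1}(\mathcal B_r)\cap\mathcal C\neq\emptyset$, the second because $\mathcal B_r\not\subset\pi_i(\mathcal C)$). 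Hence $\mathcal B'_r$ meets $\partial\pi_{i+1}(\mathcal C)$, which by $\sfQ_{i+1}$ lies in $\pi_{i+1}({\mathcal O}_i\cap\mathcal C)$; projecting down, $\mathcal B_r\cap\pi_i({\mathcal O}_i\cap\mathcal C)\neq\emptyset$ for all $r$. Finally --- and this is the point you were groping for --- $\pi_i({\mathcal O}_i\cap\mathcal C)$ \emph{is} closed, because ${\mathcal O}_i$ is in Noether position with respect to $X_1,\ldots,X_i$ and so $\pi_i|_{{\mathcal O}_i}$ is proper (via the cited result of Jelonek). The properness you need is of $\pi_i$ on ${\mathcal O}_i$, which $\sfP(\mathcal Z)$ provides, not of $\pi_{i+1}$ on $\mathcal C$, which it does not. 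The induction hypothesis $\sfQ_{i+1}$ is what lets you transfer from the cylinder's intersection with $\partial\pi_{i+1}(\mathcal C)$ down into ${\mathcal O}_i$; no boundedness of lifts is ever required.
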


The relationship of Noether position with closedness properties of
connected components of real counterparts in algebraic sets and
critical points is already exhibited and exploited in
\cite{SaSc03}. Actually, Propositions \ref{prop:propertyP} and
\ref{prop:frontier} are already proved in \cite{SaSc03} under the
assumption that $\mathcal Z$ is smooth and equidimensional. We cannot make this
assumption in our context to prove Proposition \ref{prop:closedness}
since $\setD$ is generically singular. Thus, this Section can be seen
as a strict generalization of \cite{SaSc03}.

As in \cite{SaSc03}, we use the notion of proper map. A map
$p : {\mathcal U} \subset \CC^n\to \CC^i$ is {\it proper at} $y\in \CC^i$
if and only if there exists a neighbourhood $\mathcal B$ of $y$ such that
$p^{-1}(\overline{\mathcal B})\cap {\mathcal U}$ is closed and bounded where
$\overline{\mathcal B}$ is the closure of $\mathcal B$ for the strong topology. We
simply say that $p$ is proper when it is proper at any point of
$\CC^i$.

  \begin{proof}[of Proposition \ref{prop:frontier}]
    To keep notations simple, we suppose that $Id_n$ satisfies
    $\sfP(\mathcal Z)$. Our reasoning is by decreasing induction on the index
    $i$. In the whole proof we also define the following function on
    $\mathcal Z$: we associate to $\y \in \mathcal Z$ the value
    $$
    J(\y) = \min \big\{ j \mid \y \in {\mathcal O}_j \big\}.
    $$

    We start by establishing that $\sfQ_d$ holds. Let $\x \in \RR^d$
    be on the boundary of $\pi_d(\mathcal C)$. By \cite[Lemma 3.10]{Jelonek},
    Property $\sfP(\mathcal Z)$ implies that the map $\pi_d$ restricted to
    ${\mathcal O}_d(\mathcal Z)$ is proper, and so closed.  We deduce that the
    restriction of $\pi_d$ to ${\mathcal O}_d(\mathcal Z) \cap {\mathcal C} \cap {\mathcal Z} = {\mathcal O}_d({\mathcal Z}) \cap {\mathcal C}$
    is closed and that $\x \in \pi_d({\mathcal O}_d(\mathcal Z) \cap {\mathcal C})$. Let $\y \in
    {\mathcal O}_d({\mathcal Z}) \cap {\mathcal C}$ such that $\pi_d(\y)=\x$. If $J(\y) \leq d-1$ our
    conclusion follows immediately. Suppose now that $J(\y)=d$. This
    implies that $\y \in \reg \Omega_d(\mathcal Z)$. By the Implicit  Function
    Theorem we conclude that $\y$ is a critical point of $\pi_d$ and
    that $\y \in \crit(\pi_d, \reg(\Omega_d(\mathcal Z))) \subset \scC(\pi_d,
    {\mathcal Z}) \subset {\mathcal O}_{d-1}(\mathcal Z)$, which is a contradiction since we assumed $J(y)=d$.

    Suppose now that $\sfQ_{i+1}$ holds. We proceed in two steps:

    \begin{enumerate}

    \item First, we prove that the boundary of $\pi_i(\mathcal C)$ is included in 
      $\pi_i({\mathcal O}_i(\mathcal Z) \cap {\mathcal C})$. Indeed, let $\x \in \RR^i$ be on the boundary
      of $\pi_i(\mathcal C)$. Let $p \colon \RR^{i+1} \to \RR^i$ be the map sending
      $(\x_1, \ldots \x_{i+1})$ to $(\x_1, \ldots \x_{i})$, so that $\pi_i = 
      p \circ \pi_{i+1}$. For $r>0$, let ${\mathcal B}_r$ be the ball of
      center $\x$ and radius $r$ in $\RR^i$ and ${\mathcal B}'_r = p^{-1}({\mathcal B}_r)$.
      We claim that ${\mathcal B}'_r$ meets both $\pi_{i+1}(\mathcal C)$ and its complementary
      in $\RR^{i+1}$. 

Indeed this is a consequence of the following immediate equalities 
\[
\pi_i^{-1}({\mathcal B}_r) \cap {\mathcal C} = 
      \pi_{i+1}^{-1} \circ p^{-1}({\mathcal B}_r) \cap {\mathcal C} = \pi_{i+1}^{-1}({\mathcal B}'_r) 
      \cap {\mathcal C}
\]
and $\pi_i^{-1}({\mathcal B}_r) \cap {\mathcal C}\neq \emptyset $ and ${\mathcal B}_r \cap \{\RR^i
\setminus \pi_i({\mathcal C})\} \neq \emptyset$.  Since ${\mathcal B}'_r$ is connected,
${\mathcal B}'_r$ meets also the boundary of $\pi_{i+1}(\mathcal C)$. Since ${\sfQ}_{i+1}$
holds, for every $r>0$ there exists $\y_r \in {\mathcal O}_i(\mathcal Z) \cap {\mathcal C}$ such that
$\pi_{i+1}(\y_r) \in {\mathcal B}'_r$, and so $\pi_i(\y_r) \in {\mathcal B}_r$. Thus, $\x$
lies in the closure of the image by $\pi_i$ of the set ${\mathcal O}_i(\mathcal Z) \cap
{\mathcal C}$. This image is closed and our claim follows.

\item Second, we prove that $\sfQ_i$ holds. Let $\x \in \RR^i$ be on the boundary
  of $\pi_i(\mathcal C)$. From (1), we deduce that there exists $\y
  \in {\mathcal O}_i(\mathcal Z) \cap {\mathcal C}$ such that $\pi_i(\y)=\x$.  Suppose by
  contradiction that for all $y$ as above, $J(\y)=i$. Fix $\y \in
  {\mathcal O}_i(\mathcal Z) \setminus {\mathcal O}_{i-1}(\mathcal Z)$ such that $\pi_i(\y)=\x$.  In
  particular, $\y \in {\mathcal O}_i(\mathcal Z) \setminus \scS({\mathcal O}_i(\mathcal Z))$, and thus, we
  deduce that $\y \in \reg(\Omega_i({\mathcal O}_i))$. Next, since $\x \in
  \pi_i(\Omega_i({\mathcal O}_i) \cap {\mathcal C})$ and lies on the boundary of $\pi_i(\mathcal C)$,
  we deduce that $\x$ lies on the boundary of $\pi_i(\Omega_i({\mathcal O}_i) \cap
  {\mathcal C})$. Finally, by the Implicit Function Theorem, we deduce that $\y
  \in \crit(\pi_i, \reg {\mathcal O}_i) \subset \scC(\pi_i,  {\mathcal O}_i) \subset
  {\mathcal O}_{i-1}$, which is a contradiction since we assumed that $J(\y)=i$.
    \end{enumerate}
    We conclude that $\sfQ_i$ holds, and so all statements 
    $\sfQ_1, \ldots, \sfQ_d$ hold.
  \end{proof}

  \begin{lemma} \label{lemma:fiber} Let ${\mathcal Z} \subset \CC^n$ be an
    algebraic set.  Let $\M \in \GL(n,\CC)$ be such that $\M$
    satisfies $\sfP(\mathcal Z)$. Let $\M^{-1}\mathcal C$ be a connected component of
    $\M^{-1}{\mathcal Z} \cap \RR^n$ and $\fiber \in \RR$ be on the boundary of
    $\pi_1(\M^{-1}{\mathcal C})$. Then $\pi_1^{-1}(\fiber) \cap \M^{-1}{\mathcal C}$ is a
    non-empty finite set contained in ${\mathcal O}_0(\M^{-1}{\mathcal Z}) \cap \M^{-1}{\mathcal C}$.
  \end{lemma}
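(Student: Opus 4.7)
The plan is to deduce non-emptiness from Proposition \ref{prop:frontier}, handle the containment $\pi_1^{-1}(\fiber)\cap \M^{-1}{\mathcal C}\subset {\mathcal O}_0(\M^{-1}\mathcal Z)\cap \M^{-1}{\mathcal C}$ by a direct local Implicit Function Theorem argument, and then derive finiteness from the dimension bound built into $\sfP(\mathcal Z)$.

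For non-emptiness, since $\M$ satisfies $\sfP(\mathcal Z)$, Proposition \ref{prop:frontier} gives that $\sfQ_1(\M^{-1}\mathcal Z)$ holds, i.e. the boundary of $\pi_1(\M^{-1}{\mathcal C})$ is contained in $\pi_1({\mathcal O}_0(\M^{-1}\mathcal Z)\cap\M^{-1}{\mathcal C})$; applied to $\fiber$, this yields at least one preimage in ${\mathcal O}_0(\M^{-1}\mathcal Z)\cap\M^{-1}{\mathcal C}$.

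For the containment (and hence finiteness), I would reuse the integer-valued function $J(\y)=\min\{j:\y\in{\mathcal O}_j(\M^{-1}\mathcal Z)\}$ introduced in the proof of Proposition \ref{prop:frontier}. Pick $\x\in\pi_1^{-1}(\fiber)\cap\M^{-1}{\mathcal C}$ and suppose for contradiction that $j:=J(\x)\geq 1$. Unwinding the definition of ${\mathcal O}_{j-1}(\M^{-1}\mathcal Z)$, the condition $\x\notin{\mathcal O}_{j-1}$ forces $\x\in\reg(\Omega_j({\mathcal O}_j))$ and $\x\notin\scC(\pi_j,{\mathcal O}_j)$, so $\x$ is a regular point of the $j$-equidimensional component of ${\mathcal O}_j$ at which $\pi_j$ is not critical. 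Since the defining equations are rational and $\x$ is real, the Implicit Function Theorem produces an open neighborhood $V$ of $\x$ inside $\Omega_j({\mathcal O}_j)\cap\RR^n$ that is a real analytic $j$-manifold and on which $\pi_j$ is a local diffeomorphism onto an open neighborhood of $\pi_j(\x)$ in $\RR^j$. Composing with the projection $p\colon(x_1,\dots,x_j)\mapsto x_1$ shows that $\pi_1(V)=p(\pi_j(V))$ contains an open interval around $\fiber=\pi_1(\x)$.

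To close the argument I would then transfer $V$ into $\M^{-1}{\mathcal C}$: since $\Omega_j({\mathcal O}_j)\subset{\mathcal O}_j\subset\M^{-1}\mathcal Z$ and $V$ is connected with $\x\in V\cap\M^{-1}{\mathcal C}$, connectedness of $\M^{-1}{\mathcal C}$ as a component of $\M^{-1}\mathcal Z\cap\RR^n$ forces $V\subset\M^{-1}{\mathcal C}$. Hence $\pi_1(\M^{-1}{\mathcal C})$ contains an open interval around $\fiber$, contradicting $\fiber\in\partial\pi_1(\M^{-1}{\mathcal C})$. Thus $J(\x)=0$, proving $\x\in{\mathcal O}_0(\M^{-1}\mathcal Z)\cap\M^{-1}{\mathcal C}$. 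Finiteness then follows because assertion $(1)$ of $\sfP(\mathcal Z)$ applied with $i=0$ gives $\dim{\mathcal O}_0(\M^{-1}\mathcal Z)\leq 0$, so ${\mathcal O}_0(\M^{-1}\mathcal Z)$ is a finite set and so is its intersection with $\M^{-1}{\mathcal C}$.

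The main obstacle is the passage from the complex non-criticality statement (encoded in $\scC(\pi_j,{\mathcal O}_j)$) to an open-map property over $\RR$; this is handled by noting that $\x$ is a real smooth point of a complex variety defined over $\QQ$, so the real IFT applies directly to a rational local parametrization and produces a real $j$-dimensional neighborhood on which $\pi_j$ is a real local diffeomorphism. Once this is granted, the rest is a straightforward combination of connectedness and the boundary hypothesis.
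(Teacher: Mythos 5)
Your proof is correct and follows essentially the same route as the paper's: non-emptiness from Proposition~\ref{prop:frontier} via $\sfQ_1$, the containment by contradiction using the function $J$ together with the Implicit Function Theorem applied at a regular, non-critical real point of $\Omega_j({\mathcal O}_j)$, and finiteness from the $0$-dimensionality of ${\mathcal O}_0(\M^{-1}\mathcal Z)$ guaranteed by $\sfP(\mathcal Z)$. The only cosmetic difference is that you derive the contradiction directly (non-criticality $\Rightarrow$ local diffeomorphism $\Rightarrow$ $\pi_1(\M^{-1}\mathcal C)$ open around $\fiber$) rather than passing, as the paper does, through the intermediate claim that $\pi_j(\x)$ lies on the boundary of $\pi_j(\Omega_j({\mathcal O}_j)\cap\M^{-1}\mathcal C)$; the two are logically equivalent.
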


\begin{proof}[of Lemma \ref{lemma:fiber}]
  By Proposition \ref{prop:frontier} we deduce that if $\fiber \in
  \RR$ belongs to the boundary of $\pi_1(\mathcal C)$, there exists $\x \in
  {\mathcal O}_0(\M^{-1}{\mathcal Z}) \cap \M^{-1}{\mathcal C}$ such that $\pi_1(\x)=\fiber$.  So
  $({\mathcal O}_0(\M^{-1}{\mathcal Z}) \cap \M^{-1}{\mathcal C}) \cap (\pi_1^{-1}(\fiber) \cap \M^{-1}{\mathcal C})
  \neq \emptyset$.  Now, we prove that $\pi_1^{-1}(\fiber) \cap \M^{-1}{\mathcal C}
  \subset {\mathcal O}_0(\M^{-1}{\mathcal Z}) \cap \M^{-1}{\mathcal C}$. Since $\M$ satisfies $\sfP(\mathcal Z)$,
  ${\mathcal O}_0(\M^{-1}{\mathcal Z})$ is finite and we also deduce that $\pi_1^{-1}(\fiber)
  \cap \M^{-1}{\mathcal C}$ is finite.

  We use again the definition of the function $\x \mapsto J(\x)$ over
  $\mathcal Z$ used in the proof of Proposition \ref{prop:frontier}. Suppose
  that there exists $\x \in \pi_1^{-1} (\fiber) \cap \M^{-1}{\mathcal C}$ such that
  $J(\x)=j>0$; this implies that $\x \in {\mathcal O}_j(\mathcal Z) \setminus {\mathcal O}_{j-1}(\mathcal Z)$.
  In particular, we deduce that $\x \in \reg(\Omega_j({\mathcal O}_j(\mathcal Z))) \cap
  \M^{-1}{\mathcal C}$. Since $\fiber=\pi_1(\x)$ is on the boundary of
  $\pi_1(\M^{-1}{\mathcal C})$, we conclude that $\pi_j(\x)$ is on the boundary of
  $\pi_j(\Omega_j({\mathcal O}_j(\M^{-1}{\mathcal Z})) \cap \M^{-1}{\mathcal C})$. Moreover, since $\x \in
  \reg \Omega_j({\mathcal O}_j(\M^{-1}{\mathcal Z})) \cap \M^{-1}{\mathcal C}$, we conclude by the Implicit
  Function Theorem that $\x$ is a critical point of the restriction of
  $\pi_j$ to ${\mathcal O}_j(\M^{-1}{\mathcal Z})$. So $\x \in \crit(\pi_j, {\mathcal O}_j(\M^{-1}{\mathcal Z}))
  \subset \scC(\pi_j, {\mathcal O}_j(\M^{-1}{\mathcal Z})) \subset {\mathcal O}_{j-1}(\M^{-1}{\mathcal Z})$.  We
  conclude that contradiction $J(\x) \leq j-1$ which is a
  contradiction.
\end{proof}

We are now able to prove Proposition \ref{prop:closedness}.

\begin{proof}[of Proposition \ref{prop:closedness}]
  Let $\zarM_2 \subset \GL(n,\CC)$ be the non-empty Zariski open set
  of matrices satisfying Property $\sfP(\mathcal Z)$ defined in Proposition
  \ref{prop:propertyP}, and let $\M \in \zarM_2$. Let $\M^{-1}{\mathcal C}$ be a
  connected component of $\M^{-1}\setD \cap \RR^n$, and let $1 \leq i
  \leq n-1$. Then, applying Proposition \ref{prop:frontier}, we
  conclude that $\sfQ_i(\M^{-1}\setD)$ holds. In particular the boundary
  of $\pi_i(\M^{-1}{\mathcal C})$ is contained in $\pi_i({\cal O}_{i-1}(\setD^\M) \cap
  \M^{-1}{\mathcal C})\subset \pi_i(\M^{-1}{\mathcal C})$ which implies that $\pi_i(\M^{-1}{\mathcal C})$ is
  closed. This proves Assertion (1).  

  We prove now Assertion (2). Take $\fiber \in \RR$ that lies in the
  frontier of $\pi_1(\M^{-1}{\mathcal C})$. By Lemma \ref{lemma:fiber},
  $\pi_1^{-1}(\fiber) \cap \M^{-1}{\mathcal C}$ is a finite set, and thus there
  exists $\x \in \M^{-1}\setD \cap \RR^n$ such that $\x \in \M^{-1}{\mathcal C}$ and
  $\pi_1(x)=\fiber$. For all such $\x$, the matrix $\A(\x)$ is rank
  defective. Fix $\x \in \pi_1^{-1}(\fiber) \cap \M^{-1}{\mathcal C}$ and let
  $r\leq m-1$ be the rank of $\A(\x)$. 
  Consider the linear system $y \mapsto f(\A,u)$ parametrized by the vector $u$.
  This system has at least one solution $y$ if and only if
  $$
  \rank
  \left [\begin{array}{c}
      \A(\X) \\
      u_1 \; \cdots \; u_m \\  
    \end{array}\right ] =
  \rank
  \left [\begin{array}{cc}
      \A(\X) & {0} \\
      u_1 \; \cdots \; u_m & u_{m+1} \\
    \end{array}\right ].
  $$
  Now, the second matrix has rank $r+1$ since $u_{m+1} \neq 0$, and the first matrix has rank $r+1$ if
  and only if $(u_1, \ldots, u_m)$ does not lie in the space generated by the rows of $A$. So
  there exists a non-empty Zariski open set $\zarU_{\mathcal{C},x}$ such that if $(u_1, \ldots, u_m) \in \zarU_{\mathcal{C},x}$
  the linear system has at least one solution.

  We conclude the proof by taking
  $$
  \zarU = \bigcap_{{\cal C} \subset \setD\cap\RR^n}\bigcap_{x \in \pi_1^{-1}(\fiber) \cap \M^{-1}{\mathcal C}} \zarU_{\mathcal{C},x}
  $$
  which is non-empty and Zariski open because of the finiteness of
  $\pi_1^{-1}(\fiber) \cap \M^{-1}{\mathcal C}$ and of the number of
  connected components of $\setD \cap \RR^n$.
\end{proof}

The remainder of this Section is
dedicated to the proof of Proposition \ref{prop:propertyP}.  We start
by introducing some notations.

\begin{notations} \label{not:two} Let $\fermU$ be an $n$-by-$n$
  matrix of indeterminates.  For $f \in \QQ[\X_1, \ldots,
  \X_n]$, let $f\circ\fermU\in\QQ(\fermU)[\X_1, \ldots, \X_n]$ denote the
  polynomial such that $(f\circ\fermU)(\X)=f(\fermU\X)$, and if ${\mathcal V} \subset
  \CC^n$ is defined by the ideal $I = \langle f_1, \ldots, f_s
  \rangle$, let $\fermU^{-1}{\mathcal V}$ be the algebraic set defined by $I\circ\fermU
  = \langle f_1\circ\fermU, \ldots, f_s\circ\fermU \rangle \subset \QQ(\fermU)
  [\X_1, \ldots, \X_n]$.

  For all $i=0, 1, \ldots, d$, we denote by $I_i$, $I_i\circ\M $ and
  $I_i\circ\fermU $ the ideals associated to the algebraic sets ${\mathcal O}_i(\mathcal Z),
  {\mathcal O}_i(\M^{-1}{\mathcal Z})$ and ${\mathcal O}_i(\fermU^{-1}{\mathcal Z})$,
 see Notations \ref{not:one}. 

\end{notations}

\begin{lemma} \label{lemma:noeth1} Let ${\mathcal Z} \subset \CC^n$ be an
  algebraic set of dimension $d$ and $0\leq i\leq d$. Let $\fermP$ be
  one of the components of the prime decomposition of $I\circ\fermU_i$ and
  let $r = \dim \fermP$.  Then $r \leq i$ and the ring extension
  $\QQ(\fermU)[\X_1 \dots \X_{r}] \longrightarrow
  \bigslant{\QQ(\fermU) [\X_1 \dots \X_{n}]}{\fermP}$ is integral.
\end{lemma}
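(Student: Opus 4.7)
The plan is to argue by decreasing induction on $i$, from $i=d$ down to $i=0$, establishing simultaneously the two conclusions: every component $\fermP$ of $I_i\circ\fermU$ has dimension $r\leq i$, and the ring extension $\QQ(\fermU)[\X_1,\ldots,\X_r]\hookrightarrow\QQ(\fermU)[\X_1,\ldots,\X_n]/\fermP$ is integral.

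For the base case $i=d$, the ideal $I_d\circ\fermU$ is the defining ideal of $\fermU^{-1}\mathcal Z$, whose components all have dimension $\leq d$. Since the entries of $\fermU$ are algebraically independent indeterminates over $\QQ$, working over $\QQ(\fermU)$ is equivalent to applying a generic linear change of coordinates, and the Noether normalization lemma places every $r$-dimensional irreducible component in Noether position with respect to $\X_1,\ldots,\X_r$.

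For the inductive step, I would decompose
$$
{\mathcal O}_i(\fermU^{-1}\mathcal Z)=\scS({\mathcal O}_{i+1}(\fermU^{-1}\mathcal Z))\cup\scC(\pi_{i+1},{\mathcal O}_{i+1}(\fermU^{-1}\mathcal Z))\cup\scC(\pi_{i+1},\fermU^{-1}\mathcal Z)
$$
and bound the dimension of each summand. The set $\scS({\mathcal O}_{i+1})$ collects the lower-dimensional equidimensional parts of ${\mathcal O}_{i+1}$ (all of dimension $\leq i$ by the inductive hypothesis) together with the singular locus of $\Omega_{i+1}({\mathcal O}_{i+1})$, which is a proper closed subset of an $(i+1)$-dimensional variety, hence of dimension $\leq i$. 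For the critical locus $\crit(\pi_{i+1},\reg\,\Omega_{i+1}({\mathcal O}_{i+1}))$, the inductive hypothesis gives that $\Omega_{i+1}({\mathcal O}_{i+1})$ is in Noether position with respect to $\X_1,\ldots,\X_{i+1}$, so $\pi_{i+1}$ restricted to it is a finite surjective morphism, whose critical locus is a proper Zariski closed subset and so of dimension $\leq i$. For the terms $\crit(\pi_{i+1},\reg\,\Omega_r({\mathcal O}_{i+1}))$ with $r\geq i+2$ and $\crit(\pi_{i+1},\reg\,\Omega_r(\fermU^{-1}\mathcal Z))$ with $r\geq i+1$, the argument exploits that, after the generic linear change of coordinates encoded by $\fermU$, the tangent space at a regular point is in generic position; a Thom--Boardman / transversality argument shows that the locus where $d\pi_{i+1}$ has rank $\leq i$ inside a smooth $r$-dimensional component has codimension $r-i$, leaving dimension at most $i$.

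Once the dimension bound $r\leq i$ has been verified for every component, the integrality statement follows from the same genericity principle: every algebraic set arising in the iterated construction is defined by polynomials whose behaviour under $\fermU$ is controlled, and because $\fermU$ is a matrix of indeterminates, each irreducible component of dimension $r$ produced by the construction is simultaneously placed in Noether position with respect to $\X_1,\ldots,\X_r$ over $\QQ(\fermU)$.

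The main obstacle I anticipate is the dimension bound on $\scC(\pi_{i+1},\fermU^{-1}\mathcal Z)$ for equidimensional parts of dimension $r\geq i+2$: showing that the rank-deficient locus attains exactly the expected codimension $r-i$ requires genuinely using the genericity of $\fermU$ to rule out degenerate tangent configurations, rather than merely a cheap Sard-type argument. All the other dimension estimates reduce, via the inductive hypothesis, to the elementary fact that a finite morphism has proper critical locus.
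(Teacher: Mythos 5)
Your overall strategy matches the paper's: decreasing induction on $i$, the same decomposition
$$
{\mathcal O}_i(\fermU^{-1}\mathcal Z)=\scS({\mathcal O}_{i+1})\cup\scC(\pi_{i+1},{\mathcal O}_{i+1})\cup\scC(\pi_{i+1},\fermU^{-1}\mathcal Z),
$$
and the same handling of the base case and of the singular-locus piece. However, there are two genuine gaps, and they concern precisely the parts where the real work happens.

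First, the integrality (Noether-position) conclusion is the heart of the lemma, and your proposal does not prove it. You close with the assertion that ``each irreducible component of dimension $r$ produced by the construction is simultaneously placed in Noether position with respect to $\X_1,\ldots,\X_r$ over $\QQ(\fermU)$'' because ``the behaviour under $\fermU$ is controlled.'' That is the statement to be proved, not an argument for it. The paper's proof hinges on a secondary induction: the critical and singular loci are cut out from the (already Noether-positioned, by the outer induction) components by adjoining finitely many Jacobian minors $g_1,\ldots,g_N$. One adjoins the $g_t$ one at a time; for a prime $\fermQ'$ of dimension $r'$ in Noether position with $g_{t+1}\notin\fermQ'$, Krull's Principal Ideal Theorem gives that $\fermQ'+\langle g_{t+1}\rangle$ is $(r'-1)$-equidimensional, and the argument of \cite[Prop.~1]{SaSc03} shows it contains a monic polynomial in $\X_{r'}$, so Noether position with respect to $\X_1,\ldots,\X_{r'-1}$ propagates. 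Saturating by the singular locus ideal discards the components where the induction stalls. Without this chain of Krull-plus-monic-polynomial steps, there is no proof that the extension is integral.

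Second, the dimension bound for $\crit(\pi_{i+1},\reg\,\Omega_r(\fermU^{-1}\mathcal Z))$ with $r\geq i+2$ is precisely the step you flag as an ``anticipated obstacle'' and then leave to an unproved transversality/Thom--Boardman claim. A cheap Sard argument does not suffice here: Sard controls $\dim\pi_{i+1}(\crit)$, but $\pi_{i+1}$ has generic fibers of dimension $r-i-1$ on $\Omega_r$, so $\dim\crit$ could a priori be as large as $r-1$. In the paper this bound does not come from a separate transversality argument at all; it is a corollary of the Noether-position statement you skipped. Once a prime component $\fermP$ of the critical ideal, of dimension $r''$, is known to be in Noether position with respect to $\X_1,\ldots,\X_{r''}$, the restriction of $\pi_{r''}$ to $\zeroset{\fermP}$ is finite and surjective onto $\CC^{r''}$; if $r''\geq i+1$ this would force $\pi_{i+1}(\zeroset{\fermP})=\CC^{i+1}$, contradicting Sard's theorem applied to $\pi_{i+1}$ on $\reg\,\Omega_r$. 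Hence $r''\leq i$. So the dimension bound and the integrality are not two independent facts to establish separately; the algebraic argument delivers both at once, and your proposal supplies neither. (As a minor point, your mention of $\crit(\pi_{i+1},\reg\,\Omega_r({\mathcal O}_{i+1}))$ for $r\geq i+2$ is vacuous: by the inductive hypothesis $\dim{\mathcal O}_{i+1}\leq i+1$, so those equidimensional pieces are empty.)
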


This Lemma is a generalization of \cite[Prop.1]{SaSc03} to the
non-equidimensional case. Its proof shares similar techniques than
those used for proving \cite[Prop.1]{SaSc03}. It exploits the
properties of the geometric objects defined in Notations \ref{not:one}
to retrieve an equidimensional situation. We sketch below the main
differences and will refer to the proof of \cite[Prop. 1]{SaSc03} for
the steps that are identical.

\begin{proof}[of Lemma \ref{lemma:noeth1}]
Our reasoning is by decreasing induction on the index $i$.

Suppose first that $i=d$, so that $I_d\circ\fermU=\ideal{\fermU^{-1}{\mathcal Z}}$
(by definition ${\mathcal O}_d(\fermU^{-1}{\mathcal Z})=\fermU^{-1}{\mathcal Z})$.
 Let $\fermP$ be a prime ideal of the prime decomposition of $I_d\circ\fermU$, and let
$r=\dim\fermP$. Thus, the algebraic set defined by $\fermP$ is an
irreducible component of dimension $r \leq d$ and then $\zeroset{\fermP} \subset
\Omega_r(\fermU^{-1}{\mathcal Z})$. By the Noether normalization lemma
\cite{logar1989computational}, the statement follows.

Suppose now that the statement is true for $i+1$. To simplify
notations we write ${\mathcal O}_i$ instead of ${\mathcal O}_i(\fermU^{-1}{\mathcal Z})$. In particular, we
assume that ${\mathcal O}_{i+1}$ has dimension $\leq i+1$. Consider the ideal
$I_i\circ\fermU$; using the definitions of the geometric objects
introduced in Notations \ref{not:one} one obtains the following
equalities:
$$
I_i\circ\fermU = \ideal{\scS({\mathcal O}_{i+1})} \cap \ideal{\scC(\pi_{i+1},
{\mathcal O}_{i+1})} \cap \ideal{\scC(\pi_{i+1}, \fermU^{-1}{\mathcal Z})}.
$$
Now, let $\fermP$ be a prime ideal associated to $I_i\circ\fermU$. Then,
$\fermP$ is a prime ideal associated to one of the three ideals in the
above intersection. We investigate below the three possible cases: 
\begin{enumerate}
\item $ \ideal{\scS({\mathcal O}_{i+1})}\subset \fermP $. Let $r = \dim \fermP$.
  In this case, we obtain
  $$
  \fermP \supset \ideal{\Omega_0({\mathcal O}_{i+1})} \cap \cdots \cap
  \ideal{\Omega_i({\mathcal O}_{i+1})} \cap \ideal{\sing(\Omega_{i+1}({\mathcal O}_{i+1}))}.
  $$
 Combined with the fact that $\fermP$ is prime, this implies that 
  \begin{itemize}
  \item either $ \ideal{\Omega_j({\mathcal O}_{i+1})}\subset \fermP$, for some
    $0\leq j\leq i$ ; then one gets $r \leq i$ and by the induction
    assumption that the extension $\QQ(\fermU)[\X_1 \dots \X_{r}]
    \longrightarrow \bigslant{\QQ(\fermU) [\X_1 \dots
      \X_{n}]}{\fermP}$ is integral ;
  \item or $ \ideal{\sing(\Omega_{i+1}({\mathcal O}_{i+1}))}\subset \fermP $. 
  \end{itemize}
  Assume that $ \ideal{\sing(\Omega_{i+1}({\mathcal O}_{i+1}))}\subset \fermP $.
  We deduce that $$\dim (\fermP) \leq \dim
  (\sing(\Omega_{i+1}({\mathcal O}_{i+1}))).
  $$ Since $\dim(\Omega_{i+1}({\mathcal O}_{i+1}))=i+1$ by definition, it follows that $$\dim
  (\sing(\Omega_{i+1}({\mathcal O}_{i+1})))\leq i $$ and we deduce
  that $\dim(\fermP)\leq i$.  Let $\f\circ\fermU = (f_1\circ\fermU,
  \ldots, f_s\circ\fermU)$ be a set of generators of the ideal
  associated to $\Omega_{i+1}({\mathcal O}_{i+1})$. Then
  $$
  \ideal{\sing(\Omega_{i+1}({\mathcal O}_{i+1}))} = \sqrt{\langle \f\circ\fermU, g_1, \ldots, g_N \rangle}
  $$
  where $g_1, \ldots, g_N$ are the minors of size $(n-i-1) \times
  (n-i-1)$ of the Jacobian matrix $\jac \f \circ \fermU$. 
  We prove below by induction on $t$ that for any prime $\fermQ$
  associated to $\langle \f\circ\fermU, g_1, \ldots, g_t\rangle$,
  the extension
  $$\QQ(\fermU)[\X_1 \dots \X_{r}] \longrightarrow
  \bigslant{\QQ(\fermU) [\X_1 \dots \X_{n}]}{\fermQ}$$ is
  integral. Taking $t=N$ will conclude the proof.

  For $t=0$, the induction assumption implies that for any prime
  $\fermQ$ associated to $\langle f\circ\fermU\rangle$, the extension
  $\QQ(\fermU)[\X_1 \dots \X_{r}] \longrightarrow
  \bigslant{\QQ(\fermU) [\X_1 \dots \X_{n}]}{\fermQ}$ is integral.

  Assume now that for any prime $\fermQ'$ associated to 
  $\langle \f\circ\fermU, g_1, \ldots, g_t\rangle$,
  the extension
  $$\QQ(\fermU)[\X_1 \dots \X_{r}] \longrightarrow
  \bigslant{\QQ(\fermU) [\X_1 \dots \X_{n}]}{\fermQ'}$$ is integral.

  We prove below that for any prime $\fermQ$ associated to
  $\langle \f\circ\fermU, g_1, \ldots, g_{t+1}\rangle$, the
  extension
  $$\QQ(\fermU)[\X_1 \dots \X_{r}] \longrightarrow
  \bigslant{\QQ(\fermU) [\X_1 \dots \X_{n}]}{\fermQ}$$ is integral.

  Remark that any prime $\fermQ$ associated to $\langle
    \f\circ\fermU, g_1, \ldots, g_{t+1}\rangle$ is a prime associated to
  $\fermQ'+\langle g_{t+1}\rangle$. Suppose that $g_{t+1}\notin
  \fermQ'$ (otherwise, the conclusion follows immediately) and let
  $r'$ be the Krull dimension of $\fermQ'$.

  By Krull's Principal Ideal Theorem, $\fermQ' + \langle g_{t+1}
  \rangle$ is equidimensional of dimension $r'-1$. Following {\it
    mutatis mutandis} the same argumentation as in the proof of
  \cite[Prop. 1]{SaSc03}, the ideal $\fermQ' + \langle g_{t+1} \rangle$
  contains a monic polynomial in $\X_{r'}$, so that the extension
  $$
  \QQ(\fermU)[\X_1 \dots \X_{r'-1}] \longrightarrow
  \bigslant{\QQ(\fermU)[\X_1 \dots \X_{n}]}{\fermQ' + \langle g_{t+1}
    \rangle}
  $$
  is integral. Our claim follows. 

\item $ \ideal{\scC(\pi_{i+1}, {\mathcal O}_{i+1}(\fermU^{-1}{\mathcal Z}))}\subset \fermP $.

  Recall that $\scC(\pi_{i+1}, {\mathcal O}_{i+1}(\fermU^{-1}{\mathcal Z}))$ is the union of
  $\crit(\pi_{i+1}, \reg(\Omega_{i+1}({\mathcal O}_{i+1})))$ and of the sets
  $\Omega_{j}(\fermU^{-1}{\mathcal Z})$ for $0\leq j \leq i$. When
  $\ideal{\Omega_{j}(\fermU^{-1}{\mathcal Z})}\subset \fermP$, one can apply the
  induction assumption.

  Thus, we focus on the case where $\ideal{\crit(\pi_{i+1},
  \reg(\Omega_{i+1}({\mathcal O}_{i+1})))} \subset \fermP$.

  The ideal $\ideal{\crit(\pi_{i+1}, \reg(\Omega_{i+1}({\mathcal O}_{i+1})))}$ is
  built as follows. Suppose that $\f\circ\fermU = (f_1\circ\fermU, \ldots,
  f_s\circ\fermU)$ defines $\ideal{\Omega_{i+1}({\mathcal O}_{i+1})}$, that $g_1,
  \ldots, g_N$ are the square minors of size $n-i-1$ of the Jacobian matrix
  of $\f \circ \fermU$ where the first $i$ columns are eliminated,
  and that $J$ is the ideal
  $\ideal{\sing(\Omega_{i+1}({\mathcal O}_{i+1}))}$. The following equality is
  immediate:
  $$
  \ideal{\crit(\pi_{i+1}, \reg(\Omega_{i+1}({\mathcal O}_{i+1})))} =
  \sqrt{\f\circ\fermU + \langle g_1, \ldots, g_N \rangle} : J^\infty,
  $$
  where, if $K,L$ are two ideals in the same ring $R$, then
  $K:L^{\infty} = \{p \in R \mid L^Np \subset K, \ \exists N \in \mathbb{N}\}$.
  We deduce that the ideal $\fermP$ is a prime component of
  $\sqrt{\f\circ\fermU + \langle g_1, \ldots, g_N \rangle}$ whose zero
  locus is not included in $\sing(\Omega_{i+1}({\mathcal O}_{i+1}))$. The
  integral ring extension property is already proved (by induction)
  for every component of the ideal $\langle \f\circ\fermU \rangle$; so we
  proceed as in the first point.

\item $\ideal{\scC(\pi_{i+1}, \fermU^{-1}{\mathcal Z})}\subset \fermP $. 

  Again, recall that $\scC(\pi_{i+1}, \fermU^{-1}{\mathcal Z})$ is the union of
  $\Omega_{j}(\fermU^{-1}{\mathcal Z})$ for $0\leq j\leq i$ and the union for $r'
  \geq i$ of the sets $\crit(\pi_i, \reg(\Omega_{r'}(\mathcal Z)))$ of critical
  points of the restriction of $\pi_i$ to the regular locus of
  $\Omega_{r'}(\fermU^{1}{\mathcal Z})$.

  Let $r' \geq i+1$, and $\Omega_{r'}(\mathcal Z)$ be the equidimensional
  component of $\mathcal Z$ of dimension $r'$. So we can assume $
  \ideal{\crit(\pi_{i+1}, \reg(\Omega_{r'}(\fermU^{-1}{\mathcal Z})))}\subset \fermP
  $. The proof follows exactly the same argumentation as the one in
  the second point.
\end{enumerate}
\end{proof}

The following lemma plays the same role as the one in
\cite[Prop. 2]{SaSc03}. It shows that there exists the integral
extension property in Lemma \ref{lemma:noeth1} is maintained when
specializing $\fermU$ to a generic matrix $\M$ of $\GL(n,\CC)$.  The
proof of the lemma below is exactly the same as the one of
\cite[Prop. 2]{SaSc03}.

\begin{lemma} \label{lemma:noeth2} Let ${\mathcal Z} \subset \CC^n$ be an
  algebraic set of dimension $d$.  There exists a non-empty Zariski
  open set $\zarM_{2} \subset \GL(n,\CC)$ such that if $\M \in
  \zarM_{2} \cap \QQ^{n\times n}$, the following holds. Let $i \in
  \{0,1,\dots,d\}$ and $\fermP$ be a prime component of
  $I_i\circ{\M}$ and let $r = \dim (\fermP)$.  Then $r \leq i$ and
  the ring extension $\CC[\X_1 \dots \X_{r}] \longrightarrow
  \bigslant{\CC[\X_1 \dots \X_{n}]}{\fermP}$ is integral.
\end{lemma}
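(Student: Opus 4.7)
The plan is to obtain Lemma \ref{lemma:noeth2} as a generic specialization of Lemma \ref{lemma:noeth1}, substituting the matrix of indeterminates $\fermU$ by a matrix $\M \in \GL(n,\QQ)$. Since the integrality conclusion is already available over $\QQ(\fermU)$ for every prime component of $I_i\circ\fermU$, the task is to exhibit a non-empty Zariski open subset of $\GL(n,\CC)$ on which specialization preserves the conclusion, uniformly in $i$ and in the prime components.

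First, I would apply Lemma \ref{lemma:noeth1} for each $i\in\{0,\ldots,d\}$ and each prime component $\fermP$ of $I_i\circ\fermU$ in $\QQ(\fermU)[\X_1,\ldots,\X_n]$. If $r=\dim\fermP$, the lemma supplies, for every $j\in\{r+1,\ldots,n\}$, a polynomial $\Phi_{\fermP,j}\in\fermP$ that is monic in $\X_j$ with coefficients in $\QQ(\fermU)[\X_1,\ldots,\X_r]$. Writing each such coefficient as a quotient of elements of $\QQ[\fermU][\X_1,\ldots,\X_r]$ and letting $\varphi\in\QQ[\fermU]$ be the (finite) product of all resulting denominators together with $\det(\fermU)$, I would define $\zarM_2$ as the intersection of $\GL(n,\CC)$ with the complement of $\{\varphi=0\}$. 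This set is non-empty and Zariski open because the prime decompositions and the monic relations involve only finitely many denominators.

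Then, for $\M\in\zarM_2\cap\QQ^{n\times n}$, specialization at $\fermU=\M$ is well-defined on each coefficient and preserves the monic structure of each $\Phi_{\fermP,j}$; the resulting polynomials $\Phi_{\fermP,j}|_{\fermU=\M}\in\QQ[\X_1,\ldots,\X_r][\X_j]$ are monic in $\X_j$ and lie in the specialization of $\fermP$. The final step is to show that every prime component $\fermP'$ of $I_i\circ\M$ in $\CC[\X_1,\ldots,\X_n]$ is a minimal prime above the specialization of some prime $\fermP$ of $I_i\circ\fermU$. Once this is known, the specialized monic relations directly witness integrality of $\X_{r+1},\ldots,\X_n$ over $\CC[\X_1,\ldots,\X_r]$ modulo $\fermP'$, and the dimension bound $\dim\fermP'\leq r\leq i$ follows from the integrality.

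The main obstacle is exactly this descent of prime components through specialization: one must guarantee that, up to removing a further hypersurface in the space of matrices (which can be absorbed into $\varphi$), prime decomposition commutes with the substitution $\fermU\mapsto\M$. This is a standard but non-trivial generic flatness / base-change argument, and is handled in \cite{SaSc03} in the equidimensional setting. Since the geometric objects defined in Notations \ref{not:one} have already been set up so that the equidimensional reduction used in the proof of Lemma \ref{lemma:noeth1} carries over, the same argument applies here mutatis mutandis, yielding the desired $\zarM_2$.
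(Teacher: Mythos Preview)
Your proposal is correct and follows essentially the same route as the paper: the paper gives no independent argument and simply states that the proof is ``exactly the same as the one of \cite[Prop.~2]{SaSc03}'', i.e.\ the generic-specialization passage from $\QQ(\fermU)$ to a concrete $\M$. Your sketch makes explicit what that passage entails---collecting the finitely many monic integrality witnesses from Lemma~\ref{lemma:noeth1}, clearing denominators to carve out a Zariski open set, and then invoking the base-change argument of \cite{SaSc03} to match prime components before and after specialization---so it is in fact more informative than the paper's own treatment while remaining the same approach. One small point worth tightening: in the lemma statement $r$ denotes $\dim\fermP'$ for the \emph{specialized} prime, whereas in your sketch $r$ is the dimension of the generic prime $\fermP$; the generic-flatness argument you invoke is precisely what forces these two to coincide on $\zarM_2$, so you should say so explicitly rather than letting the notation silently identify them.
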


Now we can prove Proposition \ref{prop:propertyP}.

\begin{proof}[of Proposition \ref{prop:propertyP}] \label{propertyP}
  Let $\zarM_{2} \subset \GL(n,\CC)$ be the non-empty Zariski open set
  defined in Lemma \ref{lemma:noeth2}. By Lemma \ref{lemma:noeth2},
  for $\M \in \zarM_{2}$ and $0\leq i\leq d$, any irreducible
  component of the algebraic set $O_i(\M^{-1}Z)$ is in Noether position
  with respect to $\X_1, \ldots, \X_i$.  This proves Point (2) of
  $\sfP(Z)$. 
  Now, remark that \cite[Chap. 1.5.3]{Shafarevich77} implies that any
  irreducible component of $O_i(\M^{-1}Z)$ has dimension $\leq i$. This
  proves Point (1) of $\sfP(Z)$.
\end{proof}

\section{Practical experiments}\label{sec:experiments}

In this section, we report on practical experiments done with a computer
implementation of our algorithm. 

We have implemented the algorithm {\sf RealDet} under
\textsc{Maple}. The computation of rational parametrizations is done
using Gr\"obner bases, see \cite{F4, F5, FM11, FM13,newfglm}. We use
the Gr\"obner basis library \textsc{FGb} \cite{faugere2010fgb}
implemented in {\tt C} by J.-C. Faug\`ere and its interface with {\sc
  Maple}.

We mainly compare our implementation of {\sf RealDet} with the Real
Algebraic Geometry Library \textsc{RAGlib} \cite{raglib} implemented
by the last author. \textsc{RAGlib} is also a {\sc Maple} library
implementing algorithms based on the critical point method. It also
uses Gr\"obner bases and the library \textsc{FGb} for solving
polynomial systems of dimension $0$. We use its command {\tt
  PointsPerComponents} to compute sample points in each connected
component of the real counterpart of the hypersurface defined by the
vanishing of the determinant of the matrix under consideration. We
also made some tests using implementations of the Cylindrical
Algebraic Decomposition but none of them succeeded to solve the
examples we report on below.

The computations we report on have been performed on an 
Intel(R) Xeon(R) CPU $E7540@2.00{\rm GHz}$
256 Gb of RAM. The symbol $\infty$ means that the computation did
not end after $24$ hours.

\subsection{Simple example}

We first illustrate the behavior of our algorithm on the
simple planar determinantal quartic of Example \ref{ex:quartic}.
We would like to find at least one point $(x_1,x_2) \in \RR^2$ in each connected component
of the real variety defined by the equation
\[
\begin{array}{rcl}
\det\left(\begin{array}{cccc}
1+\X_1  & \X_2   & 0 & 0 \\
  \X_2  & 1-\X_1 & \X_2 & 0 \\
     0  & \X_2   & 2+\X_1 & \X_2 \\
     0  & 0      & \X_2 & 2-\X_1
\end{array}\right)  & = & \\\\
\X_1^4 + 3\X_1^2\X_2^2 +\X_2^4 -\X_1\X_2^2 - 5\X_1^2 - 7\X_2^2 + 4 & = & 0.
\end{array}
\]

With input the previous linear matrix, the algorithm checks that
the associated incidence variety $\setV$ verifies the regularity properties.
This is done by computing a Gr\"obner basis of the ideal generated
by the polynomials defining $\setV$ and by the maximal minors of the
Jacobian matrix, and verifying that this Gr\"obner basis is 1. \smallskip

Then, the algorithm recursively computes rational parametrizations of
the zero-dimen\-sional Lagrange systems encoding critical points of the
projection on the first variable, restricted to the incidence varieties
(or its sections). To obtain this parametrization, we use the functions
implemented in the Maple package {\tt fgbrs} given in input a
Gr\"obner basis of a zero-dimensional ideal, gives in output a
rational parametrization of its solution set.

\begin{figure}[!ht]
\centering
\includegraphics[width=0.7\textwidth]{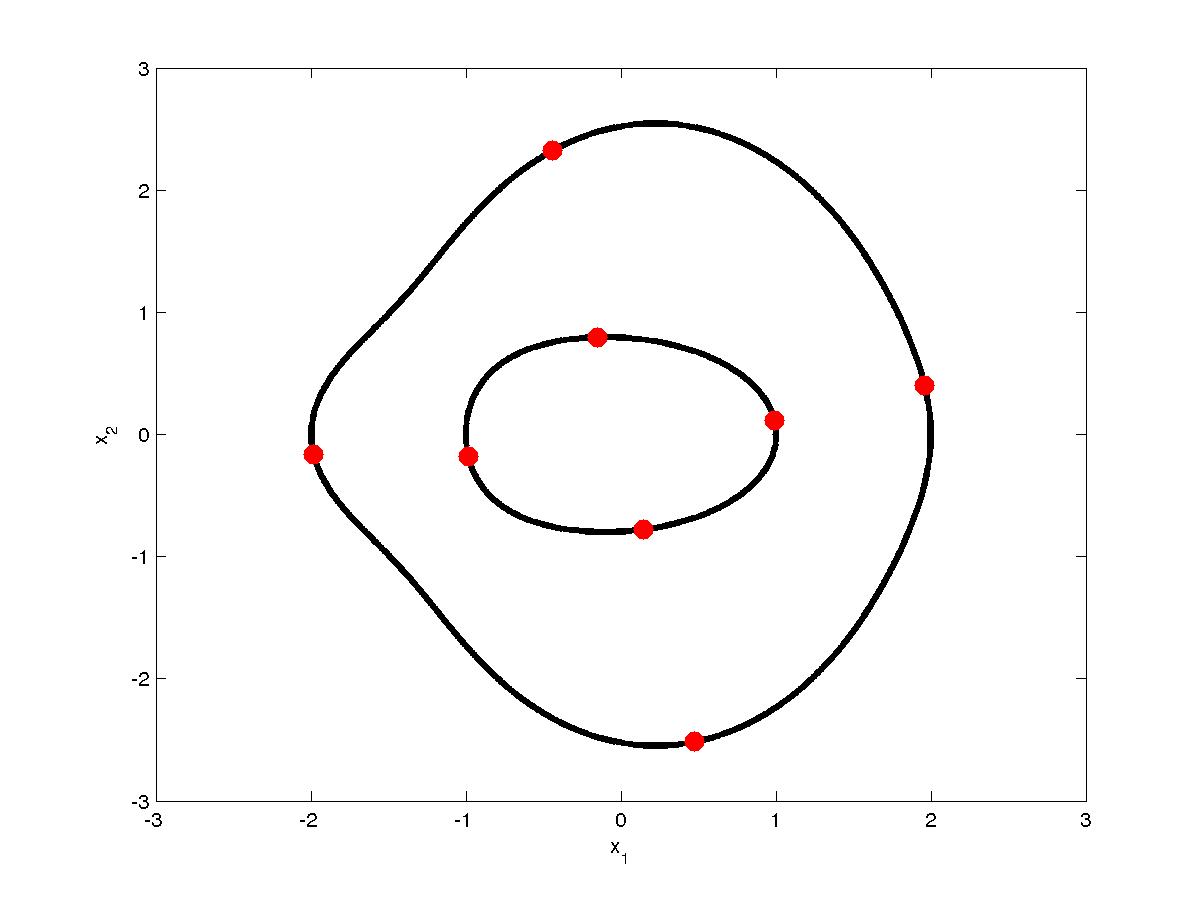}
\caption{The determinantal quartic curve of Example \ref{ex:quartic} (black)
and eight of its points (red) as returned by \textsf{RealDet}.\label{fig:quarticpoints}}
\end{figure}      

Once a rational parametrization of the desired output is given, we isolate 
the real roots which are given by isolating intervals, each of one
guaranteed to contain a point on the curve. To give an idea of the
output, we reproduce here one of these points, together with its
approximation to 10 certified digits:
\[
\begin{array}{l}
x_1 \in [\frac{122156404883928000480132795924333}{256536504662931063109335249846272},
\frac{355364086934036023530184499052519}{746288013564890365408975272280064}]
\approx  0.4761755254\\[.3em]
x_2 \in [-\frac{10810534239}{4294967296}, -\frac{345937095647}{137438953472}]
\approx -2.517023645
\end{array}
\]
The eight points are represented on the curve on Figure \ref{fig:quarticpoints}.

\subsection{Timings}

Table \ref{tab:timings} reports on timings obtained with $n$-variate
linear matrices of size $m$ with rational
coefficients chosen randomly. Thus, all matrices satisfy the genericity
Assumption $\sfG$.

\begin{table}[!ht]
  \centering
  \begin{tabular}{|c|c|c|c||||c|c|c|c|}
    \hline
    $m$ & $n$ & {\sf RealDet} & {\sc RAGlib} & $m$ & $n$ & {\sf RealDet} & {\sc RAGlib} \\
    \hline
    \hline
    $2$ & $4$    & 0.22 s &  2.25 s      &  $4$ & $3$   & 4.16 s & 2.15 s   \\
    $2$ & $10$   & 0.63 s &  25.6 s      &  $4$ & $4$   &  110 s &  835 s   \\ 
    $2$ & $20$   & 1.99 s &  $\simeq$ 1 h    &  $4$ & $8$   & 1824 s & $\infty$ \\
    $3$ & $3$    & 0.49 s &   2.8 s      &  $4$ & $16$  & 4736 s & $\infty$ \\
    $3$ & $9$    & 2.24 s &   195 s      &  $4$ & $20$  & 7420 s & $\infty$ \\
    $3$ & $20$   & 10.5 s & $\simeq$ 7 h &  $5$ & $2$   &  0.9 s & 0.23 s   \\
    $4$ & $2$    & 0.35 s &  0.35 s      &  $5$ & $3$   & 10.2 s &    59 s   \\  
    \hline
  \end{tabular}
  \vspace{0.4cm}
  \caption{Timings for \textsf{RealDet} applied to random linear matrices}
  \label{tab:timings}
\end{table}

We can observe that our implementation  {\sf RealDet} reflects the complexity gain
since, for example, we are able to solve the problem for dense
determinants of degree $m=4$ and with $n=16$ variables in less than one
hour and a half; the same problem cannot be solved within a day
by {\sc RAGlib}. 

\begin{figure}[!ht]
\centering
\begin{tikzpicture} 
  \begin{semilogyaxis} [xlabel={$n$: number of variables}, ylabel=time (s), legend style={anchor=west, at={(0.6,0.7)}}]
    \addplot coordinates
             {(1, 0.3) (2, 0.35) (3, 0.49)
              (4, 0.67) (5, 0.88) (6, 1.03)
               (7, 1.39) (8, 1.85) (9, 2.24)
                (10, 2.7) (11, 2.99) (12, 3.5)
               (13, 4) (14, 4.7) (15, 6)
               (16, 7.2) (17, 8.5) (18, 9.4)
               (19, 10) (20, 10.5)
             };
    \addplot coordinates
             {(1, 0.1) (2, 0.156) (3, 1.41)
               (4, 5.222) (5, 50.584) (6, 200.512)
               (7, 800.514) (8, 2605) (9, 11128.875)
             };
             \legend{$\mathsf{RealDet}$, {\sc RAGlib}}
  \end{semilogyaxis}
\end{tikzpicture}
\caption{Timings for $m=3$ and $n \leq 20$}
\label{fig:timings}
\vspace{1cm}
\begin{tikzpicture} 
  \begin{semilogyaxis} [xlabel={$n$: number of variables}, ylabel=time (s), legend style={anchor=west, at={(0.6,0.5)}}]
    \addplot coordinates
             {(1, 2) (2, 3) (3, 4.16)
               (4, 110) (5, 200) (6, 450)
               (7, 980) (8, 1824) (9, 2300)
               (10, 2600) (11, 3100) (12, 3600)
               (13, 3999) (14, 4250) (15, 4600)
               (16, 4736) (17, 5200) (18, 5790)
               (19, 6500) (20, 7420)
             };
             \addplot coordinates
                      {(1, 0.2) (2, 0.35) (3, 2.15)
                        (4, 435) (5, 1696) (6, 3000)
                        (7, 6024) (8, 15500)
                      };
                      \legend{$\mathsf{RealDet}$, {\sc RAGlib}}
  \end{semilogyaxis}
\end{tikzpicture}
\caption{Timings for $m=4$ and $n \leq 20$}
\label{fig:timings2}
\end{figure}

Also, when the size $m$ of the matrix is fixed, we observe that the
increase of time needed to perform the computation is well-controlled.
Figures \ref{fig:timings} and \ref{fig:timings2} illustrate this: the black (resp. red) curve
represents how the computation time of our implementation (resp. {\sc
 RAGlib}) increases with respect to the number of variables when $m$
is fixed to $3$ and $4$. Note that our implementation has the ability
to solve problems with $20$ variables which are unreachable by {\sc RAGlib}.

\subsection{Degree of the output}

In Table \ref{tab:degree}, we report some data on the degrees of the
rational parametrizations computed by ${\sf RealDet}$. Recall
that we have provided degree bounds in Section
\ref{ssec:algo:complexity}.

\begin{table}[!ht]
  \centering
  \begin{tabular}{|c|c|c|c||||c|c|c|c||||c|c|c|c|}
    \hline
    $m$ & $n$ & degree & bound & $m$ & $n$ & degree & bound & $m$ & $n$ & degree & bound \\
    \hline
    \hline
    $2$ & $2$  & 4  & 5  & $3$ & $4$   & 33 & 43 & $4$ & $3$   & 52  & 74  \\
    $2$ & $3$  & 6  & 7  & $3$ & $5$   & 39 & 49 & $4$ & $4$   & 120 & 169 \\
    $2$ & $4$  & 6  & 7  & $3$ & $6$   & 39 & 49 & $4$ & $6$   & 264 & 347 \\ 
    $2$ & $8$  & 6  & 7  & $3$ & $8$   & 39 & 49 & $4$ & $7$   & 284 & 367 \\
    $2$ & $20$ & 6  & 7  & $3$ & $15$  & 39 & 49 & $4$ & $15$  & 284 & 367 \\
    $3$ & $3$  & 21 & 28 & $3$ & $20$  & 39 & 49 & $4$ & $20$  & 284 & 367 \\
    \hline
  \end{tabular}
  \vspace{0.4cm}
  \caption{Degree of the output for the generic case}
  \label{tab:degree}
\end{table}

We conjectured that these bounds are not
sharp; these experiments support this statement.  In the column
``degree'' we report the sum of the degrees of the rational
parametrizations computed by our algorithm for generic $n$-variate
linear matrices of size $m$. We remark that if $m$ is fixed, this
value is constant when $n \geq 2m-1$. The same property holds for the
multi-linear bound for the degree of the output.

\begin{example}
Consider the matrix
$$
\A(\X) =
\left(\begin{array}{cccc}
  \X_{11} & \X_{12}  & \ldots & \X_{1m}  \\
  \X_{21} & \ddots  &        & \vdots  \\
  \vdots &         &        &         \\
  \X_{m1} &         &        & \X_{mm}
\end{array}\right).
$$
We remark that, in the context of this paper, $\A(\X)$ is a linear matrix
of size $m$, with $m^2$ variables, and it is expressed as a linear combination of $m^2$
matrices of rank 1. Allowing $\X \in \QQ^{m^2}$ to vary, the matrix
$\A(\X)$ describes all matrices of size $m$ with entries in $\QQ$.

Let $b = (b_{11} \ldots b_{mm}) \in \QQ^{m^2}$ be a vector of
rational numbers. We add the affine constraint $b'\X = 1$, i.e.
we solve the previous linear equation with respect to $\X_{11}$ and we
substitute this value to $\X_{11}$ into $\A(\X)$.

\begin{table}[!ht]
  \centering
  \begin{tabular}{|c||c|c|c|}
    \hline
    $b$ all ones & $m=2$ & $m=3$ & $m=4$  \\
    \hline
    \hline
    degree   & 5     & 35  & 244  \\
    \hline
  \end{tabular}
  \hspace{1cm}
  \begin{tabular}{|c||c|c|c|}
    \hline
    $b$ generic & $m=2$ & $m=3$ & $m=4$  \\
    \hline
    \hline
    degree   & 6   & 36  & 245  \\
    \hline
  \end{tabular}
  \vspace{0.4cm}
  \caption{Matrices with an affine constraint on the entries}
  \label{tab:classic}
\end{table}

In Table \ref{tab:classic} we report on some numerical experiments.
The two subtables contains the degree of the output
of ${\sf RealDet}$ and the computational times
respectively when $b$ is the vector of all ones, and when the coordinates of $b$
 are random values in $\QQ$. We remark that
the values of the degree are smaller than the
  corresponding values for the ``dense'' cases $(m,n) = (2,3), (3,8)$ and $(4,15)$
  that are respectively $6, 39$ and $284$, as shown in Table \ref{tab:degree}.
\end{example}

\begin{example}
Consider the symmetric matrix
$$
\A(\X) =
\left(\begin{array}{cccc}
  2\X_{11} & \X_{12}  & \ldots & \X_{1k} \\
  \X_{12}  & \ddots  &        & \vdots  \\
  \vdots  &         &        &         \\
  \X_{1k}  &         &        & 2\X_{kk}
\end{array}\right).
$$

Matrix $\A(\X)$ has size $m$  with $m(m+1)/2$ variables and
it parametrizes all symmetric matrices. It is expressed as a linear
combination of matrices of rank $1$ or $2$.

\begin{table}[!ht]
  \centering
  \begin{tabular}{|c||c|c|c|}
    \hline
    $b$ all ones & $m=2$ & $m=3$ & $m=4$  \\
    \hline
    \hline
    degree  & 2 & 16 & 122 \\
    \hline
  \end{tabular}
  \hspace{1cm}
  \begin{tabular}{|c||c|c|c|}
    \hline
    $b$ generic & $m=2$ & $m=3$ & $m=4$  \\
    \hline
    \hline
    degree   & 3 & 21 & 136  \\
    \hline
  \end{tabular}
  \vspace{0.4cm}
  \caption{Symmetric matrices with an affine constraint on the entries}
  \label{tab:symmetric}
\end{table}

We add as above a linear relation $b'\X=1$
where $b \in \QQ^{m(m+1)/2}$, and in Table \ref{tab:symmetric} we
report on experimental data. We observe the same behavior
as in the previous example.
\end{example}

\subsection{Complexity}

In Figures \ref{fig:compl} and \ref{fig:compl2}, we consider two fundamental
subclasses of the problem: when $n=m^2$ (non-symmetric case)
and when $n=m(m+1)/2$ (symmetric case).
We estimate in both cases the order of complexity 
$$
C(m,n) = n^2m^2(n+m)^5{m+n \choose n}^6
$$
of {\sf RealDet} as computed in Proposition \ref{prop:complexity:RealDet}.
We recall that standard complexity bounds for these classes of
problems are in $m^{\mathcal{O}(n)}$.

\begin{figure}[!ht]
\centering
\includegraphics{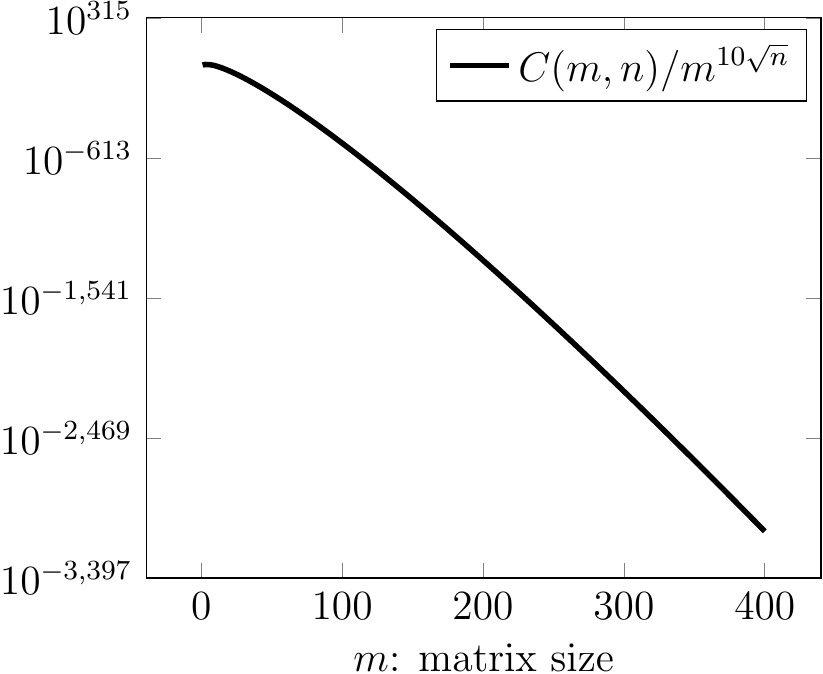}
\caption{Complexity bound for $n=m^2$.}
\label{fig:compl}
\end{figure}


On Figure \ref{fig:compl} we represent in logarithmic scale the ratio of 
$C(m,n)$ with $m^{10 \sqrt{n}}$ (where the relation $n=m^2$ is fixed)
as a function of the matrix size $m$. We remark that we obtain a bound which
is strictly contained in $m^{\mathcal{O}(\sqrt{n})}$ since this ratio tends to zero.
This numerical test shows that our complexity bound, significantly
improves the previous one. 

\begin{figure}[!ht]
\centering
\includegraphics{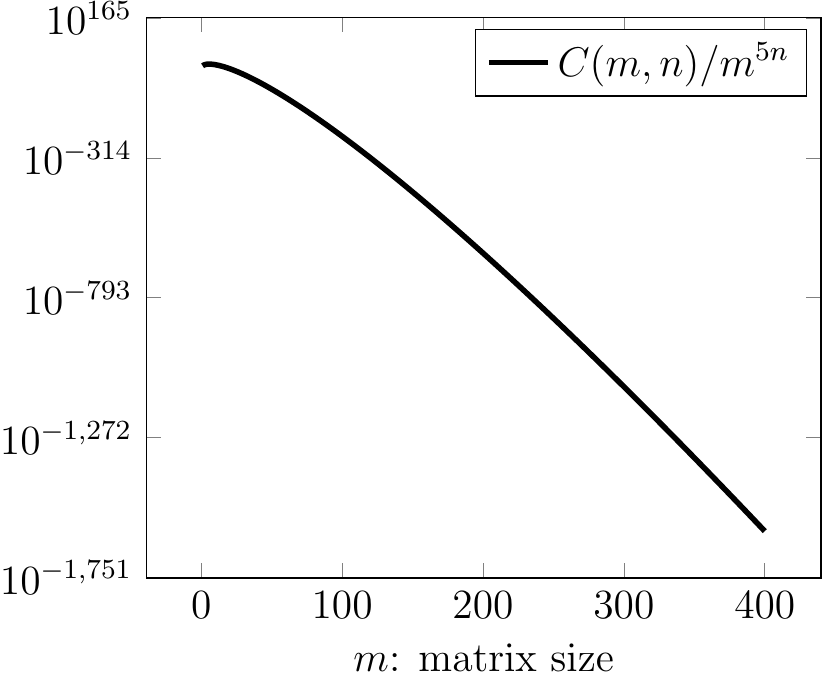}
\caption{Complexity bound for $n=\frac{m^2+m}{2}$.}
\label{fig:compl2}
\end{figure}


The same conclusion holds for the second case (Figure \ref{fig:compl2}) where $n=(m^2+m)/2$,
 which includes the fundamental family of symmetric linear matrices), where our
complexity is compared with $m^{5n}$. We also remark that similar results --
not reported here for conciseness -- have been obtained by imposing a linear
relation between $m$ and $n$, for example $n=2m$ or $n=3m$, and allowing $m$ to vary.

To summarize, the complexity of {\sf RealDet} given by Proposition \ref{prop:complexity:RealDet}
is such that:
\begin{itemize}
\item when $m$ is fixed, the complexity $n \mapsto C(m,n)$ is polynomial;
\item when $n=m^2$ or $n=(m^2+m)/2$ or $n= \alpha m$,  its asymptotic behavior when $m$ grows
is well-controlled and improves the state-of-the-art.
\end{itemize}


\newpage

\end{document}